\documentclass{article}
\usepackage[T1]{fontenc}
\usepackage{iclr2027_conference,times}
\usepackage{amsmath,amssymb,amsthm,booktabs,graphicx,hyperref,xurl,array,placeins,alphalph}
\hypersetup{hidelinks,
  pdftitle={The Sample Complexity of Quantum Entanglement Allocation},
  pdfauthor={Nathan Roll}}
\newtheorem{proposition}{Proposition}
\newtheorem{theorem}{Theorem}
\newtheorem{corollary}{Corollary}
\newcommand{\Tr}{\operatorname{Tr}}

\title{The Sample Complexity of\\Quantum Entanglement Allocation}
\author{Nathan Roll\\\normalfont Stanford University\\\texttt{nroll@stanford.edu}}
\iclrfinalcopy
\begin{document}
\maketitle
\fancyhead{}
\renewcommand{\headrulewidth}{0pt}
\begin{abstract}
How many past requests are needed to decide which qubits should share
entanglement? We show that the answer depends on the allocation
choices created by the queries: a larger memory can require no more
data. The memory stores a classical bit and answers requests through
a fixed detector that preserves coherence within each measured sector.
For independent commuting $X$- and $Z$-type Pauli queries, we characterize
the full attainable prediction-contrast region and construct encodings
that preserve the bit at every nonzero vertex. With sharp reports,
a $d$-qubit path and groups of at most $k$ qubits have minimax excess
error after $m$ requests proportional to
$k^{-1}\min\{1,\sqrt{d\log(k+1)/m}\}$, uniformly for $2\leq k<d$.
Connected biclique regions can grow without increasing sample demand
when depth, region count and connections per region stay bounded. Preparation noise
introduces a separate calibration requirement. We derive an exact
tradeoff with extra fresh detector calls and transfer the learning law
to structured transaction co-location. Population-risk experiments test
the statistical predictions. We also compare encodings on a native
15-qubit device and learned partitions on public purchase baskets.
The full chain wins on the device; frequency grouping outperforms
basket search in the largest-capacity retail setting.
\end{abstract}

\section{Introduction}
A quantum memory is prepared before the request it must answer is known.
When different requests favor different groups of entangled qubits,
the encoder has to choose where to spend its entanglement budget.
We ask how many past requests it needs to make that choice.
This design question complements mechanistic work on how trained
quantum language models retain context \citep{roll2026}.

The choice matters even when the memory stores just one classical bit.
Consider a three-qubit memory queried through $ZZI$, $XXX$ and $IZZ$, with probabilities
$1/4$, $1/2$ and $1/4$. A product code supports both $Z$ parities
perfectly, but its $X$ parity is uninformative. With groups capped at
two qubits, entangling the first pair supports $ZZI$ and $XXX$, reducing optimal prediction
error from $25\%$ to $12.5\%$. Both memories retain the bit without
loss. What changes is which requests they can answer. The request
frequencies determine which pair is better
(Figure~\ref{fig:access-mechanism86}). The encoder learns how to arrange
information it already has.

The detector determines which arrangements are useful. We consider a
supplied parity detector that preserves arbitrary coherence within
each parity sector, as required of an ideal native parity measurement
\citep{riste2013}. Measuring individual qubits and combining their
outcomes can reveal the parity, but destroys this coherence. We hold
the detector fixed and vary the entanglement in the stored state.

Our answer depends on the choices created by the query geometry.
For independent commuting $X$- and $Z$-type Pauli queries, we
characterize the attainable prediction contrasts under arbitrary
allowed partition mixtures (Theorem~\ref{thm:region86}). This reduces
learning over all allowed encoders, including mixed states, to learning
which query sets to support. Paths add choices as they grow; boundedly
connected biclique regions can grow without adding to the sample
requirement (Theorem~\ref{thm:learning86}).

Imperfect preparation also makes calibration necessary. We study the
request and calibration budgets, price the extra detector calls that
can replace entangled preparation, and transfer the learning law to
classical transaction co-location.

\begin{figure}[htbp]\centering
\includegraphics[width=\linewidth]{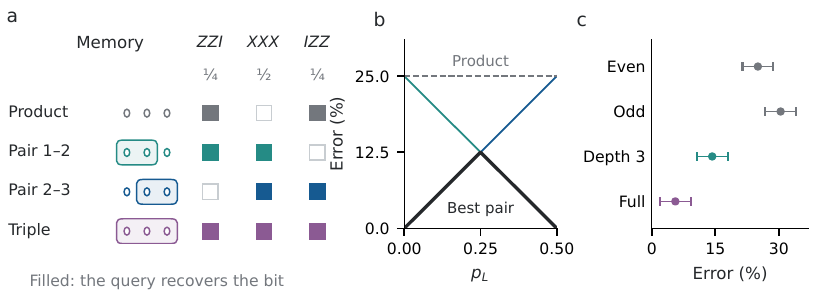}
\caption{(a) Three-qubit encodings at request probabilities $(1/4,1/2,1/4)$.
Groups share entanglement; filled squares recover the bit, empty squares
are uninformative. All four codes preserve the stored bit.
(b) Ideal prediction error when $\pi(XXX)=1/2$, $\pi(ZZI)=p_L$ and
$\pi(IZZ)=1/2-p_L$; black selects the better pair.
(c) Native 15-qubit commissioning on \texttt{ibm\_marrakesh} favors the
full chain. Even/odd are fixed product allocations. Each mean uses
1,920 shots over all 15 queries and both bits; bars are simultaneous
95\% Hoeffding bounds for four means, conditional on independent shots
in this one job. The allocations in (c) are fixed before acquisition.}
\label{fig:access-mechanism86}
\end{figure}

\section{The memory and its detector}
\label{sec:access-model86}
Let $C$ be uniform on $\{0,1\}$ and let $G=C\oplus Z$, with independent
$Z\sim\operatorname{Bernoulli}(h)$ and $0\leq h<1/2$. Write $D=1-2h$.
An encoder receives $G$, prepares a $d$-qubit state $\tau_G$, and
discards its classical record and environment. An independent request
$T=t$ arrives with probability $\pi_t$. The decoder receives $t$ and
the detector's classical report, and predicts $C$. It has no additional
copy, correlated input or later unrestricted measurement.

Given a collection $\mathcal F$ of allowed qubit partitions, let
$\mathcal S_{\mathcal F}$ be the convex hull of states product across
a partition in $\mathcal F$. Both codewords belong to this class, but
may use different partitions and mixtures. Allowing every partition
with blocks of size at most $k$ gives an entanglement-depth constraint.
The same collective detector is available to every encoder; $k$
constrains the stored state, not the detector's subsequent operations.

Orthogonal codeword supports preserve $G$: a global measurement
recovers it, giving unrestricted error $h$ and information
$1-H_2(h)$ bits about $C$. The optimization allows information-losing codes. We will show that
an optimum can always preserve the bit.

\subsection{Which reports preserve a parity sector?}
For a requested Hermitian Pauli observable $P$, put
$\Pi_\pm=(I\pm P)/2$. The detector's nonselective system channel must be
\begin{equation}
 \mathcal L_P(\rho)=\Pi_+\rho\Pi_+ + \Pi_-\rho\Pi_-.
 \label{eq:luders86}
\end{equation}
This L\"uders channel preserves all coherence within each sector.
An individual-outcome refinement can preserve the parity eigenvalue
while violating Eq.~\eqref{eq:luders86}.

Every instrument summing to $\mathcal L_P$ has outcome effects
$E_y=q(y|+)\Pi_++q(y|-)\Pi_-$ for a stochastic matrix $q$.
Thus the report contains only a postprocessed parity, however large
its alphabet. This follows from channel--measurement compatibility
\citep{heinosaari2013}; Appendix~\ref{app:detector86} gives a short
proof and an approximate version. The all-input requirement matters:
a shared syndrome service must preserve unknown coherent inputs even
when a particular client stores a classical bit.

We first use sharp parity reports. Symmetric, code-independent report noise
gives effects $(I\pm v_tP_t)/2$, with known $v_t\in[0,1]$.
For contrast $c_t=|\Tr[P_t(\tau_0-\tau_1)]|/2$, the optimal
classical decoder has risk
\begin{equation}
 R=\frac12-\frac D2\sum_t\pi_t v_t c_t.
 \label{eq:contrast-risk86}
\end{equation}
Small departures from exact channel preservation can also be bounded:
Appendix~\ref{app:detector86} gives a dimension-independent continuity
bound and a sufficient diamond-distance threshold for the separation.
Under the exact contract, the remaining task is to determine which
contrast vectors fit the entanglement budget.

\section{Which queries can an encoding support?}
\label{sec:region86}
Queries that commute globally can compete within separate qubit groups.
Consider independent, globally commuting Pauli generators
$P_1,\ldots,P_q$. Independence means no nonempty product is proportional
to the identity. A CSS family uses only $X$ and identity within each
$X$-type generator, and only $Z$ and identity within each $Z$-type
generator. For a partition $\Pi$, build a graph $H_\Pi$ on queries:
two queries are adjacent when their restrictions anticommute on a
block. This graph is bipartite, since generators of the same CSS type
cannot conflict. Let $\operatorname{STAB}(H)$ be the convex hull of
independent-set indicators, including the empty set.

To characterize the contrasts, we first reduce a pair of codewords to
one state. A tensor product of single-qubit Paulis flips all independent
generators at once and preserves every partition class. Binary
contrasts therefore have the same attainable region as single-state
absolute correlations (Appendix~\ref{app:region86}).

\begin{theorem}[Exact attainable contrast region]
\label{thm:region86}
For independent commuting CSS generators and any partition family
$\mathcal F$, the contrast vectors of all pairs
$\tau_0,\tau_1\in\mathcal S_{\mathcal F}$ are exactly
\begin{equation}
 \mathcal C_{\mathcal F}
 =\operatorname{conv}\!\bigcup_{\Pi\in\mathcal F}
                  \operatorname{STAB}(H_\Pi).
 \label{eq:region86}
\end{equation}
Every nonempty independent-set indicator has an explicit encoding
with orthogonal supports in the allowed class.
\end{theorem}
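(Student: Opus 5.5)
The proof shows two inclusions. For ``$\subseteq$'' I will bound absolute single-state correlations on one partition by the edge inequalities of $H_\Pi$. For ``$\supseteq$'' I will build explicit stabilizer-type encodings for each independent set and then mix them.

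\emph{Upper bound.} First, $c_t=|\Tr[P_t(\tau_0-\tau_1)]|/2\le \tfrac12(|\langle P_t\rangle_{\tau_0}|+|\langle P_t\rangle_{\tau_1}|)$. It therefore suffices to show that the vector $a_t=|\langle P_t\rangle_\rho|$ lies in $\mathcal C_{\mathcal F}$ for every $\rho\in\mathcal S_{\mathcal F}$, because $\mathcal C_{\mathcal F}$ is convex and down-closed in $[0,1]^q$. Write $\rho=\sum_j\lambda_j\sigma_j$ with each $\sigma_j$ product across some $\Pi_j\in\mathcal F$. Then $|\langle P_t\rangle_\rho|\le\sum_j\lambda_j|\langle P_t\rangle_{\sigma_j}|$, and down-closure reduces the claim to a single product state $\sigma=\bigotimes_B\sigma_B$ across one $\Pi$.

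Since $H_\Pi$ is bipartite, it is perfect (König). Hence $\operatorname{STAB}(H_\Pi)=\{x\ge 0:\ x_s+x_t\le 1\ \text{for every edge } st\}$, and only the edge inequalities need checking. Here global commutation is used. If $P_s,P_t$ anticommute on block $B_1$, then, because they commute globally, they anticommute on an even number of blocks, so on at least one other block $B_2$. Factor $\langle P_s\rangle_\sigma=\prod_B\langle P_s|_B\rangle_{\sigma_B}$ and bound all factors outside $B_1,B_2$ by $1$. The single-block uncertainty relation for anticommuting Hermitian Paulis gives $\langle A\rangle^2+\langle B\rangle^2\le 1$ on each of $B_1,B_2$. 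Cauchy--Schwarz then yields $|a_1a_2|+|b_1b_2|\le\sqrt{(a_1^2+b_1^2)(a_2^2+b_2^2)}\le1$. A single block would only give $\sqrt2$, so this two-block argument is the essential point.

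\emph{Achievability.} Fix a nonempty independent set $S$ of $H_\Pi$. On each block the restrictions $\{P_s|_B\}_{s\in S}$ pairwise commute. Because they are of CSS type, the group they generate does not contain $-I$: a product equal to a scalar must have trivial $X$-part and trivial $Z$-part, hence phase $+1$. Let $\sigma_B$ be the normalized projector onto the joint $+1$ space of this block group, and set $\tau_0'=\bigotimes_B\sigma_B$. Then $\langle P_s\rangle_{\tau_0'}=+1$ for every $s\in S$.

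By independence there are tensor products of single-qubit Paulis $F_t$ that anticommute with $P_t$ and commute with every other generator (a dual basis). Define $\tau_0$ by twirling $\tau_0'$ over the group generated by $\{F_t:t\notin S\}$. This kills $\langle P_t\rangle$ for $t\notin S$ and keeps $\langle P_s\rangle=1$ for $s\in S$. The result stays in the convex hull of $\Pi$-product states, since conjugation by local Paulis preserves product structure. Set $\tau_1=F_S\tau_0F_S$ with $F_S=\prod_{s\in S}F_s$. This flips exactly the signs in $S$, so the contrast vector is $\mathbf 1_S$. The supports are orthogonal because they lie in opposite eigenspaces of any $P_s$ with $s\in S$. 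The empty set is attained by $\tau_0=\tau_1$.

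For a convex combination of vertices from different partitions, take $\tau_c=\sum_j\lambda_j\tau_c^{(j)}$ with all pairs oriented so that every signed correlation $\langle P_t\rangle_{\tau_0^{(j)}}-\langle P_t\rangle_{\tau_1^{(j)}}$ is nonnegative. Then the contrasts add linearly, giving all of $\mathcal C_{\mathcal F}$.

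\emph{Main obstacle.} I expect the delicate step to be the dual-basis twirl. It must preserve membership in $\mathcal S_{\mathcal F}$ and must not disturb the $S$-expectations; the latter needs $[F_t,P_s]=0$ for every $s\in S$. I would verify the existence of $F_t$ through the symplectic (binary) representation of independent Paulis, where independence gives linearly independent rows and hence a dual basis. The exact equality of sets also relies on the paper's reduction to single-state absolute correlations (Appendix~\ref{app:region86}), which I would cite for the pairing direction.
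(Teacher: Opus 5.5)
Your proof is correct. The upper bound is the paper's own argument: global commutation forces a second block of local anticommutation, and the per-block uncertainty relation with Cauchy--Schwarz gives the edge inequalities. Bipartite stable-set integrality then identifies $\operatorname{STAB}(H_\Pi)$, and averaging plus down-closure handles mixtures and codeword pairs.

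You differ from the paper in attainment.
\begin{itemize}
\item The paper writes down the global normalized projector $\tau_g^I=2^{-d}\prod_{t\in I}(\mathbb I+(-1)^gP_t)$. Independence makes every unselected expectation vanish, since the relevant products are traceless, and pins every selected sign at $(-1)^g$. Separability follows because a product basis of simultaneous block eigenvectors diagonalizes the state with nonnegative entries.
\item You instead start from a genuinely product joint eigenstate of the block restrictions. You erase unselected correlations by twirling over a symplectic dual family of local Paulis, and obtain the second codeword by a Pauli flip. This makes membership in the allowed class manifest and replaces the diagonalization step by the dual-basis construction.
\end{itemize}

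The cost is sign bookkeeping that the paper's formula does automatically. If a generator carries a global minus sign, your $\tau_0'$ has $\langle P_s\rangle=-1$ for it. Swapping $\tau_0,\tau_1$ cannot then make a mixed-sign difference vector nonnegative, as your mixing step requires. The fix is to conjugate first by the $F_s$ of those generators, so that all selected expectations equal $+1$. Also write $F_S\tau_0F_S^\dagger$, since $F_S$ need not be Hermitian.

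With that normalization you may take any joint eigenvector on each block, so neither attainment argument actually uses the CSS property. CSS enters only through bipartiteness in the upper bound. The paper's version has the further advantage of yielding normalized projector codewords, which it reuses later for log loss and diagnosis.

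Finally, your appeal to the paper's single-state reduction is unnecessary: both of your inclusions already treat codeword pairs directly.
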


Local uncertainty bounds each conflicting pair by
$|\langle P_u\rangle|+|\langle P_v\rangle|\leq1$
\citep[Proposition~A.3]{zander2024}. Bipartite integrality and convexity turn these pairwise inequalities
into an upper bound on the whole region, including mixed codeword
pairs. To attain the bound, take an independent set $I$ and define
\begin{equation}
 \tau_g^I=2^{-d}\prod_{t\in I}\bigl(\mathbb I+(-1)^gP_t\bigr)
 \label{eq:projector-code86}
\end{equation}
where $\mathbb I$ is the identity. This normalized stabilizer projector
is separable across the chosen partition and has $c=\mathbf1_I$.
Every selected expectation has sign $(-1)^g$. Mixing the pairs thus
adds their contrasts without cancellation and fills the region in
Eq.~\eqref{eq:region86}. Each nonempty pair preserves $G$, although
a mixture of such pairs need not do so.

With $a_t=\pi_tv_t$, the optimal risk is
$1/2-(D/2)\max_{\Pi\in\mathcal F}\alpha_a(H_\Pi)$, where $\alpha_a$
is maximum independent-set weight. The workload objective is linear, so a best vertex is optimal and its
encoding preserves the bit. Mixtures can enlarge the feasible region
without improving this optimum. Appendix~\ref{app:region86} gives the
proof and counterexamples to dropping CSS or independence; the same appendix gives a worked lossless mixture.

\subsection{Which qubits should share entanglement?}
\label{sec:allocation86}
For graph-state queries, the contrast region becomes a concrete
allocation problem. A bipartite graph has generators
$K_i=X_i\prod_{j\sim i}Z_j$; Hadamards on one color convert them
to CSS form. The conflict graph contains exactly the edges crossing
the partition. Weighted bipartite vertex cover then gives the following rule.
Delete a minimum-weight set of query vertices until every remaining
connected component has at most $k$ vertices, and entangle within
those components. A deleted vertex means giving up a query; its qubit
stays in the memory. This is a variant of classical component-order connectivity
\citep{drange2014}: deletion costs carry workload weights, while
the surviving-component limit counts vertices.

With sharp reports, error $h$ is attainable exactly when every
connected component of the positive-weight query graph fits the
budget. For a uniform $d$-vertex path,
\begin{equation}
 R_k^*=h+\frac{D}{2d}\left\lfloor\frac d{k+1}\right\rfloor.
 \label{eq:path-depth86}
\end{equation}
Exact recovery of $G$ through every query requires depth $d$.
Constant-size groups suffice for a fixed excess error independent of $d$. Each path query touches at most
three qubits, so independent local outcome errors introduce at most
three visibility factors.

\section{Learning where entanglement belongs}
\label{sec:learning86}
The allocation rule assumes that the request frequencies are known.
We now give the learner only $m$ i.i.d.\ past query identities.
It chooses an encoding before future requests arrive and receives
no prediction labels.
For a known bipartite graph $\Gamma$ and its graph-state queries, let
$R_{\pi,k}^*$ be the best depth-$k$ risk for workload $\pi$, and define
$\mathfrak M_m(\Gamma,k)=\inf_{\widehat\tau}\sup_\pi
\mathbb E[R_\pi(\widehat\tau)-R_{\pi,k}^*]$.
The supremum ranges over all query distributions. The $m$ query
identities are the learner's only observations.

We learn the allocation by maximizing empirical workload weight over
\emph{inclusion-maximal} feasible query sets: no further query can be
added while preserving feasibility. Restricting to these sets preserves every nonnegative workload
optimum, and each set specifies an explicit lossless code.
With at most $2^q$ feasible indicators, finite-class concentration
gives excess error $O(D\sqrt{(q+\log(1/\delta))/m})$ with probability
$1-\delta$. This finite-class bound assumes exact empirical optimization, which
can remain combinatorial. To sharpen it, we need to count the
undominated allocation choices that the requests must distinguish.

In the opening example, the choice is which endpoint to pair with
the center. Local graph queries can create proportionally more such
choices as the memory grows. A complete bipartite graph $K_{a,b}$
has a different structure. At depth $k\leq\min(a,b)$, a retained
set either lies in one color or has at most $k$ vertices. The maximal
sets are the two full colors and mixed sets of size $k$, so the oracle
compares the color totals with the sum of the $k$ largest weights.

To separate the number of regions from their size, divide a bipartite
graph into $r$ complete-bipartite regions, linked through at most $p$
designated connection vertices per region. Each color must have at
least $k$ nonconnection vertices. Regions may have different sizes,
and physical blocks may cross their boundaries. The following result
compares this family with bounded-degree graphs.

\begin{theorem}[Geometry and the cost of allocation learning]
\label{thm:learning86}
Assume sharp reports and $m\geq1$ i.i.d.\ query samples.
On a path with $d\geq3$ vertices and $2\leq k\leq d-1$,
\begin{equation}
 \mathfrak M_m(P_d,k)
 =\Theta\!\left(\frac Dk\min\left\{1,
                 \sqrt{\frac{d\log(k+1)}m}\right\}\right).
 \label{eq:path-learning90}
\end{equation}
The constants are uniform in $d,k,m$.
For $r$ biclique regions with fixed connection bound $p$ and at
least $k$ nonconnection vertices per color,
\begin{equation}
 \mathfrak M_m(\Gamma,k)
 =\Theta_p\!\left(D\min\{1,\sqrt{rk/m}\}\right).
 \label{eq:region-learning86}
\end{equation}
The second law is uniform in $r$, $k$ and the region sizes.
A single region without connections recovers
$\mathfrak M_m(K_{a,b},k)=\Theta(D\min\{1,\sqrt{k/m}\})$.
All partitions within the depth budget are allowed. Both lower bounds
permit arbitrary randomized learners choosing mixed codeword pairs
and grant the optimal decoder. Information-preserving codes attain both rates. The path learner
uses a random periodic allocation at small budgets and empirical
maximization at larger budgets; regional empirical maximization
attains the second rate.
If $d\leq k$, the whole graph fits and allocation excess is zero.
\end{theorem}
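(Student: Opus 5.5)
By Theorem~\ref{thm:region86} and the allocation rule of Section~\ref{sec:allocation86}, the problem becomes a purely combinatorial one. For any workload $\pi$, the excess risk of a (possibly randomized, mixed) encoder is $(D/2)$ times the gap between $\max_{S\in\mathcal I_k}\pi(S)$ and the workload weight $\langle\pi,c\rangle$ of the contrast vector it realizes. Here $\mathcal I_k$ is the family of query sets whose induced components all have at most $k$ vertices. Mixed codeword pairs only produce points of $\mathcal C_{\mathcal F}$, so lower bounds may treat the learner as choosing a point of the convex hull of feasible indicators. Upper bounds use the lossless projector codes of Eq.~\eqref{eq:projector-code86}. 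Everything therefore reduces to the minimax regret of choosing, from $m$ samples of $\pi$, an element of $\operatorname{conv}\mathcal I_k$ against the linear objective $\pi$. I would prove matching upper and lower bounds separately for the two geometries.

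\emph{Upper bounds.} For the path, a maximal feasible set is determined by the deleted vertices, and consecutive deletions are at most $k+1$ apart. Every loss is thus at most $\Theta(1/k)$ of the mass, since one can always choose a shift of the periodic deletion pattern with spacing $k+1$. A uniformly random shift $s\in\{0,\dots,k\}$ loses exactly $1/(k+1)$ of $\pi$ in expectation. This gives excess $O(D/k)$ for any $\pi$, which covers the regime $m\lesssim d\log(k+1)$. For larger $m$ I would use empirical maximization over maximal sets, with a localized or chaining argument in place of the crude $2^q$ union bound. The number of maximal feasible sets on $P_d$ is $\exp(O((d/k)\log(k+1)))$, since there are about $d/k$ deletions, each at one of $O(k)$ offsets. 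The regret of any set relative to the optimum depends only on the mass of the symmetric difference of deletion sets, which has size $O(d/k)$ vertices. A Bernstein-type bound on differences, whose variance is at most the mass of about $d/k$ vertices, then yields $\sqrt{(d/k)\log(k+1)\cdot(1/k)/m}\cdot$ an appropriate normalization. This should give $(1/k)\sqrt{d\log(k+1)/m}$ after the worst case over $\pi$; I would verify this with a peeling argument over the excess level.

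For the biclique regions, each region's maximal sets are its two colors or a top-$k$ mixed set. Up to the $p$ connection vertices, which contribute a bounded $O_p(1)$ combinatorial factor, the choice is per region. The relevant complexity is comparing empirical top-$k$ sums with color totals, a class of $VC$-type dimension $O_p(rk)$. This gives $D\sqrt{rk/m}$.

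\emph{Lower bounds.} I would use Assouad's lemma with a hypercube of workloads. For the path, cut $P_d$ into about $d/(k+1)$ windows of length $\Theta(k)$. In each window, perturb $\pi$ so that the optimal deletion position within the window is one of about $k+1$ candidates. Fano (within each window) combined with Assouad (across windows) then gives $\log(k+1)$ bits per window. The per-window regret for a wrong choice is about $\epsilon/k$, where $\epsilon$ is the perturbation amplitude per vertex scaled by the window mass $k/d$. Tuning $\epsilon\asymp\sqrt{d\log(k+1)/m}$ with KL divergence per sample of order $\epsilon^2$ produces the rate, capped at $D/k$ when $\epsilon$ would exceed a constant. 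For the regions, in each of $r$ regions I would plant a near-tie between the color total and a mixed top-$k$ set. For example, I would perturb $k$ vertex weights so that the learner must identify which $k$ vertices form the mixed set, giving $\Theta(k)$ Assouad coordinates per region, $rk$ in total. The key requirement is that, in the convex-hull relaxation, misidentifying a coordinate costs linearly in the perturbation; this needs a local-uncertainty/integrality argument showing that fractional points cannot hedge for free. The single-region statement is the case $r=1$, $p=0$. The case $d\le k$ is immediate, since the whole path is feasible.

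The step I expect to be hardest is the path upper bound at large $m$ with the sharp $\log(k+1)$ factor and the $1/k$ prefactor, since a naive union bound loses a factor. I would first try a direct dynamic-programming decomposition: the empirical optimum is a shortest path over deletion positions. On top of that I would run a window-by-window concentration argument, controlling for each window the suprema over $k+1$ local choices, where each choice affects mass $O(1/k)$ of the window. This makes the total variance term scale with $d/k$ windows times $\log(k+1)$.
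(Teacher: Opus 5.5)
Your plan is correct and follows the paper's proof in all four parts:
\begin{itemize}
\item Path upper bound: a random periodic offset handles small $m$. For large $m$ you use Bernstein with variance at most $2r_*+\delta$ (using $r_*\le 1/(k+1)$) and a union bound over the $\exp(O((d/k)\log(k+1)))$ inclusion-minimal deletion sets, whose size is at most $2d/(k+2)$ by minimality. This already suffices, so your window-by-window fallback is unnecessary.
\item Path lower bound: separated $(k+1)$-vertex segments with a per-segment Fano bound, exactly as in the paper.
\item Regional upper bound: the VC bound $rk+\sum_j|P_j|$ on inclusion-maximal regional sets.
\item Regional lower bound: $rk$ paired sign coordinates on nonconnection vertices, whose projection is a product of uniform-matroid polytopes.
\end{itemize}
Three details need fixing when you write it out. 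First, put zero-weight separators between the path windows, so that the projection really factorizes into one polytope per window. Second, the fractional-hedging worry needs no new uncertainty argument: decompose the learner's contrast into feasible indicators and fill each factor to size $k$. Third, correct the lower-bound bookkeeping: perturbing one segment changes the unconditional per-sample divergence by order $\epsilon^2/d$, not $\epsilon^2$, and that is what makes your tuning $\epsilon\asymp\sqrt{d\log(k+1)/m}$ correct.
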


On a path, one deletion every $k+1$ sites sacrifices at most
$1/(k+1)$ of the workload after averaging over offsets. The resulting small oracle loss lets us sharpen the concentration
bound. For the lower bound, separated
$(k+1)$-site segments each force a choice of which query to sacrifice;
identifying the least requested of $k+1$ alternatives supplies the
logarithmic factor (Appendix~\ref{app:growing-depth90}). Increasing depth reduces absolute error. Learning a sufficiently small
fixed fraction of that error scale still requires order
$d\log(k+1)$ requests. A sliding-minimum recurrence
solves the ideal empirical problem in $O(d)$ operations after counting.

The maximal biclique allocation class has VC dimension $k$,
although all feasible sets together have dimension $\max(a,b)$.
With $r$ connected regions, interior choices contribute at most $rk$
and connection vertices at most $rp$. A matching lower bound projects
onto $r$ independent capacity constraints, even when physical groups
cross regions (Appendix~\ref{app:modular86}). A chain of regions
admits exact empirical optimization in $O(rk^2)$ operations after sorting.
Completing empirical ties to maximal sets preserves training weight
and weakly improves every nonnegative population objective
(Appendix~\ref{app:completion90}).

The laws describe worst-case workloads. A fixed workload with a clear
gap between competing allocations can be learned faster. Giving every
query positive probability leaves the minimax value unchanged unless
a common probability floor is imposed. The regional reduction in sample
demand also comes with a physical cost: biclique queries grow with the
region, whereas bounded-degree queries touch at most $\Delta+1$ qubits.
Paths and chains of regions admit efficient exact empirical optimization;
the general graph theorem assumes access to an optimization oracle.

\section{Learning allocation under preparation errors}
\label{sec:noise86}
So far, supporting another query can only help. Preparation errors
change that tradeoff: the gate that supports one query may reduce
its neighbors' reliability. On a path, select vertices $S$, prepare
a graph state on each selected run, and put deleted sites in
$|0\rangle$. Encode $G$ by applying
$Z^G$ to every selected site. Suppose each retained CZ gate on edge
$e$ is independently followed by a $ZZ$ error with probability
$\epsilon_e\leq1/2$.
With $\gamma_e=1-2\epsilon_e$ and query visibility $v_i$,
\begin{equation}
 b_i(S)=\mathbf1\{i\in S\}\,v_i
        \prod_{\substack{e\ni i\\e\text{ retained}}}\gamma_e.
 \label{eq:noisy-contrast86}
\end{equation}
Here $b_i(S)$ includes reporting visibility, and the risk is
$1/2-(D/2)\sum_i\pi_i b_i(S)$.
Retaining an edge changes the value of neighboring queries, so its
cost cannot be absorbed into fixed workload weights.

The encoded bit can survive these preparation errors. It is enough
to retain one odd-sized component. The product of its stabilizer generators has eigenvalue
$(-1)^G$ and commutes with every retained-edge $ZZ$ error. The noisy
codewords remain in orthogonal sectors even if other components have
even size. An odd full chain also preserves $G$. A two-site run alone
fails because its $ZZ$ error flips the encoded bit.

We optimize within this family by tracking the selected-run length
and whether an odd run has closed. The dynamic program uses $O(dk)$
transitions; an independent $O(dk^2)$ interval implementation checks
its answers. The noisy
optimization is over this lossless circuit family; the ideal theorem
optimizes over the full physical class.

Learning this allocation requires information about the device as
well as the workload. Suppose we observe $m$ requests and $B$
independent calibration trials for each edge reliability and native
query visibility. This statistical model supplies each parameter
separately, at a cost of $B(2d-1)$ calibration observations in addition
to the request log. Appendix~\ref{app:service91} instead calibrates
directly observed binary responses. The comparator is the best allocation in the same noisy circuit
family; Appendix~\ref{app:two-resources90} also bounds errors caused
by channel mismatch and request smoothing.

The two-resource rate is sharp on depth-three paths. For the same
circuit family with $d\geq4$ and correctly specified noise, the minimax
excess from $m$ requests and $B$ trials per parameter is
$\Theta\!\left(D\min\{1,\sqrt{d/m}+B^{-1/2}\}\right)$
(Theorem~\ref{thm:joint-law90}). Four-site allocation choices force
the request term; two almost indistinguishable three-site channels
with different optimal allocations force the calibration term.

\section{Experiments}
\label{sec:experiments86}
The experiments first test the learning laws under their stated
channels, where we can compute population risk exactly. We then test
fixed encodings on a quantum processor. The transaction study in
Section~\ref{sec:classical97} asks whether allocation learned from a
public purchase trace improves locality.

\subsection{What changes when the memory grows?}
\begin{figure}[!htb]\centering
\includegraphics[width=\linewidth]{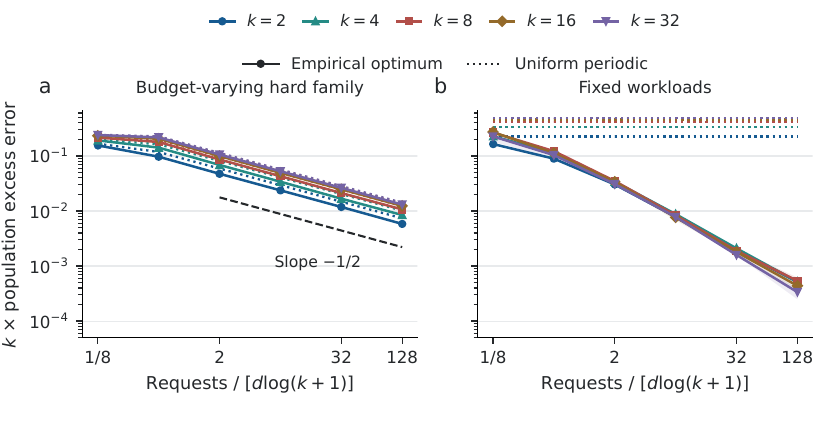}
\caption{Depth changes the error scale as well as the request requirement.
(a) On the budget-indexed hard family, both empirical and data-free
periodic allocations follow a square-root rate. The task probabilities
change with the budget. (b) On fixed Dirichlet workloads, only the
empirical learner improves. Solid lines show empirical optima; dotted
lines show uniform periodic allocations. Each line averages four
dimension means, each based on 64 independent seeds; shading spans
those dimension means, not confidence intervals. The dashed line is
$0.025x^{-1/2}$, an unfitted slope guide, not a theorem constant.
Every risk uses the exact same-depth physical optimum as comparator.}
\label{fig:growing-depth90}
\end{figure}
We varied $d$ from 128 to 1,024 and $k$ from 2 to 32, for 15,360
fits, to test the depth dependence in the path law. Figure~\ref{fig:growing-depth90} scales excess by $k$
and requests by $d\log(k+1)$. Both empirical and data-free allocations
follow the hard-family slope because the task probabilities change
with the budget. On fixed workloads, the empirical learner improves
while the periodic comparator stays flat. The fixed-workload control is essential: a hard-family slope alone
would also credit a method that never uses the requests
(Appendix~\ref{app:depth-study90}).

Connected-region experiments keep the request scale $m/(rk)$ fixed
while changing region size. The curves remain similar, as predicted
by Theorem~\ref{thm:learning86} (Appendix Figure~\ref{fig:geometry-learning86}).
Empirical ties are completed using training counts alone; retaining
only observed queries would introduce an avoidable occupancy penalty.
At the original primary condition, selective connection repair plus
local improvement is within $0.145$ error points of exact optimization
(95\% interval $[0.075,0.220]$). The full comparison uses every workload
at the same depth (Appendix~\ref{app:controls90}).

\subsection{Requests and calibration limit different decisions}
To test the two-resource prediction, we varied request and calibration
budgets separately. The prespecified protocol draws 192 independent
workload/channel instances, crossing 63 or 127 qubits with low,
heterogeneous or stress preparation noise. Each of the six conditions contains 32 instances,
four request streams and four calibration panels. Five nested request
budgets and five calibration budgets give 76,800 fits per method.
Joint uses requests, edge reliability and report visibility.
Task only uses requests; Reporting only also learns visibility.
We hold the optimizer and tie rule fixed and compare both raw counts
and smoothed estimates to isolate the value of each observation type. Same-family greedy and privileged-oracle
controls accompany the full grid (Appendix~\ref{app:experiments90}).

\begin{figure}[!htb]\centering
\includegraphics[width=\linewidth]{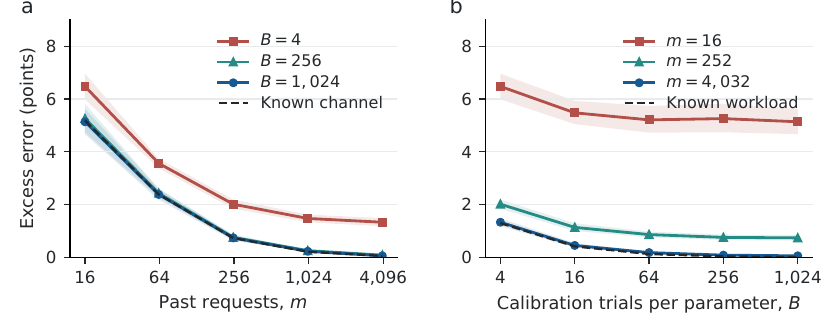}
\caption{More requests cannot remove a calibration floor.
(a) Request learning at three fixed calibration budgets; the dashed
curve supplies the true channel. (b) Calibration learning at three
fixed request budgets; the dashed curve supplies the true workload.
Both panels use 63-qubit heterogeneous-noise instances and the smoothed
Joint estimator. Excess is relative to the true-workload, true-channel
oracle in the same circuit family. Bands are pointwise 95\% bootstrap
intervals over 32 independent instances, after averaging four request
streams and four calibration panels. Calibration costs $125B$
observations, separately from $m$.}
\label{fig:two-resources90}
\end{figure}
With $B=4$, increasing requests from 16 to 4,032 reduces excess
from $6.48$ to $1.33$ points. Even exact workload knowledge leaves $1.30$ points of excess, so more
requests cannot remove this calibration floor. At $B=256$,
the same request increase reduces excess from $5.26$ to $0.085$
points. Conversely, at $m=16$, even 1,024 calibration trials leave
$5.14$ points: the request estimate is now the bottleneck
(Figure~\ref{fig:two-resources90}). The same instances thus exhibit either bottleneck, depending on which
data are scarce.
A separate near-indifference diagnostic gives approximately constant
$\sqrt B$-scaled excess over $B=4$ to 4,096, with workload known.

In the prespecified 63-qubit heterogeneous-noise comparison,
1,008 requests and 256 calibration trials per parameter reduce error
from 9.79\% for Task only to 8.58\% for Joint: a 1.21-point gain,
paired 95\% interval $[0.95,1.50]$. Calibration costs 32,000
observations; the learned code uses 22.3 preparation CZs on average,
versus 62 for Full chain (Appendix Table~\ref{tab:primary90}).
These are population risks under the specified channel.

Full chain has lower error in 76 of 150 cells, including every
low-noise cell. The benefit of supporting more queries can outweigh
their preparation cost. Smoothing affects Joint and Task only
differently, so we report all four combinations, along with depth-five
and depth-seven sensitivity and same-family greedy comparisons
(Appendix~\ref{app:experiments90}).

\subsection{Testing the encodings on a quantum processor}
We implement all 15 graph queries on a 15-data-qubit path on
\texttt{ibm\_marrakesh}. Seven query centers have degree two in the
device graph. Their parity-extraction circuits borrow data qubits and restore them
in the ideal unitary. Dropping those queries would remove the local
conflicts that define the learning problem. Preparation and extraction compile
separately. Both full and shallow chains use two parallel CZ layers.

In 7,680 commissioning shots, uniform-request error is 5.57\% for
Full chain, 14.27\% for a fixed depth-three allocation, and 25.05\%
or 30.42\% for the two product colors (Figure~\ref{fig:access-mechanism86}c).
The full chain wins this fixed-encoding comparison. Two preliminary
learning jobs timed out without usable observations, leaving the
four-workload learning/calibration design unmeasured. A separate
coherence check also exposed a failure: one witness reverses its
expected sign in two jobs, including both bits and phases in the
follow-up. The implemented extraction therefore still needs validation
across the full path (Appendix~\ref{app:hardware97}).

\section{Consequences of the allocation law}
\paragraph{Changing the detector interface.}
\label{sec:interface97}
Extra detector calls offer another way to obtain information. We compare
their cost with entangled preparation in a common statistical experiment.

For detector diagnosis, a healthy or faulty device independently flips
each reported sign with probability $e_0<e_1$. A supported query gives
an observed flip; an unsupported query erases it. Write $\mathsf E_p$
for this experiment when the supported request mass is $p$.
Appendix~\ref{app:service93} shows that the optimal $N$-call
experiment over all admissible fresh CSS probes is
$\mathsf E_{p^*}^{\otimes N}$, where $p^*$ is the maximum supported
mass. At one call and false-alarm level $e_0$, power regret is
$2(e_1-e_0)/D$ times prediction regret. Diagnosis inherits the same
request-learning laws.

Suppose a competing policy can make additional calls. If each fresh call is
dominated by $\mathsf E_p$ and a fixed probe attains it, reproducing
$N$ copies of $\mathsf E_q$, $q\geq p$, requires at least $Nq/p$
calls in expectation under either hypothesis. A stopped policy attains
equality (Theorem~\ref{thm:interface97}). The equality compares expected
call budgets for equivalent experiments. Fixed horizons at a specified
power can have a different ratio.

For the menu $(ZZI,XXX,IZZ)$ with probabilities $(.2,.6,.2)$, product
and Bell coverages are $.6$ and $.8$: the fresh-call ratio is $4/3$.
If the policy can choose its second query, a product-initialized
policy matches Bell at 1.2 expected calls. Retaining the probe
and repeating an unsupported query gives yet another signal: the
reports disagree with probability $2e_H(1-e_H)$ in health state $H$.
The disagreement diagnoses noise without recovering the stored sign.

\begin{figure}[htb]\centering
\includegraphics[width=\linewidth]{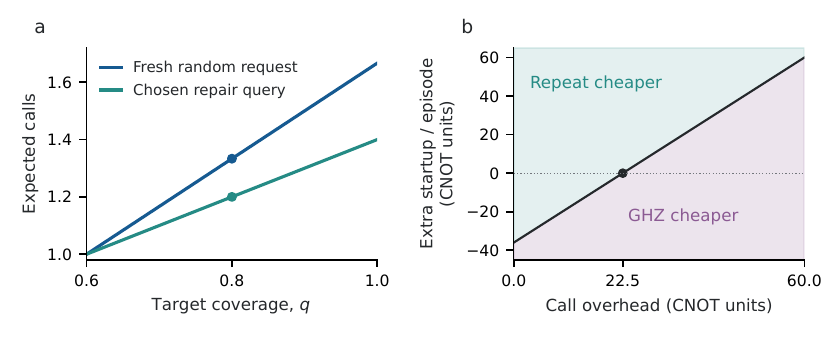}
\caption{(a) Fresh random requests and chosen repair queries have different
call costs. Points reproduce Bell coverage $q=.8$ from a product probe.
(b) Cost comparison at 80\% power and 5\% false alarms for known
$(e_0,e_1)=(.02,.164)$. GHZ uses 18 calls and 111.6 logical CNOTs;
a product-initialized repeated-query policy uses 19.6 expected calls
and 75.6 CNOTs. The boundary includes extra GHZ startup cost per
episode. These are two attaining policies, not the full adaptive frontier.}
\label{fig:interface97}
\end{figure}
At equal operating performance, entanglement reduces lifetime cost
exactly when its extra startup cost is smaller than the accumulated
saving in preparation and detection (Eq.~\eqref{eq:cost97}). Previous
IBM diagnosis runs measured GHZ power gains of 9.11 and 16.34 points
over a selected product probe, with frozen aggregate alarm rules and
dynamic coherence controls. Their different calibrations prevent a
matched-cost hardware comparison with the adaptive policy
(Appendix~\ref{app:service94}).

\paragraph{Classical transaction co-location.}
\label{sec:classical97}
A database also chooses which records to place together before requests
arrive. A transaction footprint $F_t$ succeeds when all its records
share a shard, subject to a capacity cap and no replication.
Schism formulates this hypergraph objective and uses tuple coaccess
graphs as a surrogate \citep{curino2010schism}. On structured schemas,
we recover the same allocation problem.

Theorem~\ref{thm:learning86} transfers exactly to row/column transactions
on $r$ disjoint $s\times s$ tables when capacity is
$C=(s-1)k+1$ and
$s\geq\max\{2,k,\lfloor(k-1)^2/4\rfloor+2\}$.
Allowing $rs$ shards, the minimax locality regret is
$\Theta(\min\{1,\sqrt{rk/m}\})$ (Corollary~\ref{cor:classical97}).
Larger tables need not require more requests, although transaction
width and storage resources grow. All 15,360 checks of actual record
placements reproduce the predicted masks.

\begin{figure}[htb]\centering
\includegraphics[width=\linewidth]{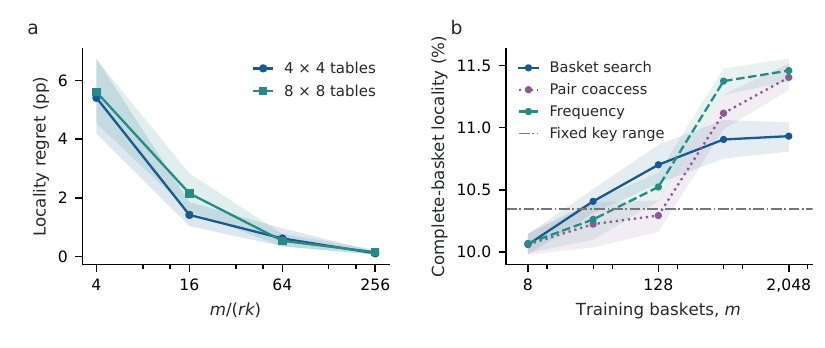}
\caption{(a) Exact empirical co-location on structured tables, with
requests scaled by $rk$. Curves average $r\in\{1,4\}$ and
$k\in\{1,2,3\}$, with 32 instances per cell; bands bootstrap
training instances. (b) Complete-basket locality on the public Online
Retail trace, with 3,922 records and eight shards of capacity 512.
Learned methods share a training prefix; fixed key range uses none.
Bands bootstrap 16 request
streams conditional on the fixed catalogue and test trace. All three
capacities and the exact small-schema study appear in Appendix~\ref{app:classical97}.}
\label{fig:classical97}
\end{figure}
On public purchase baskets \citep{chen2015retail}, we test both exact
optimization on 12/24-item projections and a heuristic on the full
3,922-record catalogue. The full study scores all 3,992 held-out
invoices without projection. At capacity 512 and 512 training baskets,
locality is 10.90\% for basket-based search, 11.37\% for same-budget
frequency grouping, and 10.35\% for fixed key-range placement.
The simpler frequency control wins this comparison. Uniform SmallBank
footprints leave no allocation to learn: all balanced placements have
the same locality.

\FloatBarrier
\section{Related work and implications}
Exact separable stabilizer polytopes and attaining product states are
known for selected GHZ and cluster families \citep{jafarizadeh2008}.
\citet[Propositions~A.3--A.4]{zander2024} give our
linear cut inequalities. \citet{knips2016} optimize weighted independent sets of
cut-anticommutativity graphs for squared correlations;
\citet{li2026restricted} derive linear graph-generator bounds using
crossing-edge matchings. Our constructive region for independent CSS families
under arbitrary partition mixtures yields sharp learning laws.
Stable-set integrality \citep{chvatal1975}
and component-order connectivity \citep{drange2014} are classical
ingredients. 

Predictive $\mathcal V$-information and the Decodable Information
Bottleneck describe predictor-dependent information
\citep{xu2020usable,dubois2020dib}; our logarithmic-loss extension
applies this prediction viewpoint with physically restricted observations
and unrestricted classical decoders. Data hiding and locally indistinguishable
product ensembles separate storage from access
\citep{terhal2001,bennett1999}. 
Channel discrimination benefits from entanglement \citep{piani2009}.
Here we identify the optimal experiment for a fixed query menu and
transfer its allocation-learning cost. 

Memory size alone does not determine the cost of learning an allocation.
Paths create more choices as they grow; biclique regions need not.
The same distinction arises in classical co-location. Practical gains
also depend on preparation costs and the workload, which can favor
simpler choices. We have yet to characterize general measurements or
test the learning laws across independent hardware jobs.

\FloatBarrier
\label{page:main-end}
\clearpage
\section*{Reproducibility statement}
The ancillary code and data provide raw observations, seeds, encoders,
resource counts and replay code. This manuscript includes the full proofs
and extended experimental record. A small installable package reproduces
the path optimizer and online sampler. The ancillary README gives the
entry points for the hardware, classical and interface evidence and
distinguishes recorded observations from unacquired experiments.

\section*{AI use statement}
Generative AI assisted with theoretical models and mathematical claims,
proof development and writing, hypothesis refinement, experimental design,
method implementation, synthetic-data generation, data processing, result
interpretation, and figure preparation. It also assisted with literature
review and manuscript drafting and editing. The author takes responsibility
for the final manuscript and artifacts.

\section*{Ethics statement}
The experiments use synthetic tasks, a quantum processor and a public
purchase dataset. The transaction study reads stock codes, invoice
identifiers, times, quantities and prices; it does not read customer
identifiers or link purchases to people. It reports aggregate locality.
The hardware results compare fixed implementations. Their interpretation
is limited to the stated measurement and control assumptions and does
not establish a general quantum learning speedup.
\clearpage
\bibliography{references}
\bibliographystyle{iclr2027_conference}
\newpage
\appendix
\noindent\textit{Appendix guide.}
The proofs establish the physical region and request-learning laws,
then extend them to logarithmic loss, finite calibration and online
allocation. The experimental appendix follows the current comparisons;
the later experimental sections preserve earlier studies and their limitations.
\begin{center}
\begin{tabular}{p{.65\linewidth}r}\toprule
Material & Page\\\midrule
Detector preservation and allowed reports & \pageref{app:detector86}\\
Exact contrast region and prior-work comparison & \pageref{app:region86}\\
Learning bounds over physical encoders & \pageref{app:learning86}\\
Preparation noise and lossless circuit family & \pageref{app:noise86}\\
Optimal logarithmic loss & \pageref{app:logloss90}\\
Growing-depth path law & \pageref{app:growing-depth90}\\
Empirical completion and two-resource guarantees & \pageref{app:completion90}\\
Online allocation & \pageref{app:online90}\\
Factorial protocol and complete comparisons & \pageref{app:experiments90}\\
Stronger baselines and model mismatch & \pageref{app:controls90}\\
Reproduction and historical evidence & \pageref{app:reproduction90}\\
Parity-service tests and physical response calibration & \pageref{app:service91}\\
Full-class diagnosis and model comparisons & \pageref{app:service93}\\
Prospective quantum-processor evaluation & \pageref{app:service94}\\
Extra detector calls & \pageref{app:interface97}\\
Transaction co-location & \pageref{app:classical97}\\
Native 15-qubit experiments & \pageref{app:hardware97}\\\bottomrule
\end{tabular}
\end{center}
\section{Detector preservation and permitted reports}
\label{app:detector86}
The instrument acts on the complete retained memory, with no additional
input correlated with $G$. The decoder observes only its classical
report and the query. The contract concerns all input density operators,
including states absent from the particular bit code. Preserving just
the actual codewords would not exclude local decoding of a product
repetition code.

\subsection{Exact compatibility}
Let $\{\mathcal J_y\}$ be CP instrument outcomes with
$\sum_y\mathcal J_y=\mathcal L_P$. For Choi matrices,
$0\preceq C(\mathcal J_y)\preceq C(\mathcal L_P)$. The latter has support
$\operatorname{span}\{|\Pi_+\rangle\!\rangle,|\Pi_-\rangle\!\rangle\}$.
Every Kraus operator of every outcome therefore has the form
$A_{yj}=a_{yj}\Pi_++b_{yj}\Pi_-$. Orthogonality of the projectors gives
\[
 E_y=\sum_j A_{yj}^\dagger A_{yj}
 =q(y|+)\Pi_++q(y|-)\Pi_-,
 \qquad
 q(y|+)=\sum_j|a_{yj}|^2,\quad q(y|-)=\sum_j|b_{yj}|^2.
\]
The effects sum to $I$, so $q$ is stochastic. Conversely, measuring
the sharp parity and applying $q$ realizes each such report.
This proves the stated equality of report classes. It is a special
case of channel--observable compatibility \citep{heinosaari2013}.
The contract is stronger than preserving the value of $P$: a refinement
inside an eigenspace can commute with $P$ while destroying its coherence.
Native parity meters motivate the preservation requirement
\citep{riste2013}; we do not infer this requirement from classical storage.

\subsection{A conservative approximate bound}
Suppose the actual nonselective channel $\mathcal E_t$ obeys
$\|\mathcal E_t-\mathcal L_{P_t}\|_\diamond\leq\delta_t$, with the
full diamond norm, not half that norm. Pad the Stinespring environments
to a common dimension at least the sum of their Kraus ranks. The
continuity theorem \citep{kretschmann2006}, in the form of
\citet[Proposition~1]{vomende2023}, provides an alignment with isometry
distance at most $\sqrt{\delta_t}$. Padding is material:
the same optimal bound need not hold in an arbitrarily prescribed
minimal environment \citep[Example~1]{vomende2023}.

An instrument's classical report is a POVM on the channel environment.
Extend that POVM to the padded space. The actual report channel
$\mathcal C_t$ is then within $2\sqrt{\delta_t}$ in diamond norm of
some parity-postprocessing channel $\mathcal C_t^0$.
For $\Delta=D(\tau_0-\tau_1)$, $\|\Delta\|_1\leq2D$, while binary risk is
\[
 R_{\mathcal C}=\frac12-\frac14
       \sum_t\pi_t\|\mathcal C_t(\Delta)\|_1.
\]
The triangle inequality gives, uniformly over all allowed encoders,
\begin{equation}
 R_{\mathcal F}^{\rm relaxed}\geq
 \max\left\{h,\ R_{\mathcal F}^{\rm sharp}
                 -D\sum_t\pi_t\sqrt{\delta_t}\right\}.
 \label{eq:robust-detector86}
\end{equation}
The reference is the ideal \emph{sharp-parity} optimum. The nonselective
channel contract does not enforce a chosen reporting visibility.
It also does not guarantee a useful actual report: the one-outcome
instrument $\mathcal J_*=\mathcal L_P$ has $\delta=0$ and risk $1/2$
for every encoder. A device comparison requires an informative predictor
below the comparator bound, in addition to channel calibration.

For the three-vertex gadget family of Appendix~\ref{app:gadget-family86},
sharp separable risk is $h+D/4$,
and the depth-two oracle attains $h+D(1-\theta)/8$.
If all $\delta_t\leq\delta$, the latter beats the relaxed separable class by
at least $D[(1+\theta)/8-\sqrt\delta]$. The sufficient radius
$\delta<1/64$ is independent of the number of gadgets.

\subsection{The selector bypass and its disturbance}
In the global-parity family retained in the historical supplement, an
$X$-basis repetition code and a query-selected $X$ outcome recover $G$
for $d\geq3$. The same local bases and a one-bit report suffice.
Software parity is therefore a substantive summary restriction.
The permanent audit enumerates this bypass for $d=3,\ldots,8$.

For an explicit physical relaxation, use the square-root instrument
$A_\pm=\sqrt{(I\pm\eta Z_1)/2}$, $0\leq\eta\leq1$, composed with
$\mathcal L_{ZZ}$. Its local report has contrast $\eta$ on the code
$|00\rangle,|11\rangle$, hence error $(1-\eta)/2$ at $h=0$.
Put $\kappa=\sqrt{1-\eta^2}$. The nonselective channel is
\[
 \mathcal E_\eta=\frac{1+\kappa}{2}\mathcal L_{ZZ}
       +\frac{1-\kappa}{2}\operatorname{Ad}_{Z_1}\circ\mathcal L_{ZZ},
 \qquad
 \|\mathcal E_\eta-\mathcal L_{ZZ}\|_\diamond=1-\kappa.
\]
The convex mixture gives the upper bound; the Bell-sector state
$(|00\rangle+|11\rangle)/\sqrt2$ attains it. Its half trace-distance
disturbance is $(1-\kappa)/2$. These formulas describe an attained curve
for this specified refinement, not an optimal frontier over all
instruments. The audit retains 21 values including $\eta=0,1$.

\subsection{Operational resource account}
\label{app:resources86}
Consider a hypothetical processor with two separated roles.
A preparation client may choose the memory encoding within its assigned
entanglement budget. A later prediction client may call a fixed catalogue
of native syndrome commands through a shared endpoint, but cannot change
the detector or access its backend. The same endpoint serves coherent
jobs on unknown input states. Its nonselective channel therefore has a
specified obligation to preserve every within-sector coherence,
independently of which job supplies the input. This is a service contract;
classical storage itself does not require it.

The preparation client uses past request identities to choose an
encoding, receives $G$, and prepares the memory. It discards the source
record and environment at handoff. Only afterward does the prediction
client receive a query. That client observes the query and the endpoint's
classical report, with no additional copy of $G$, correlated input or
backend access. Preparation can change while the shared detector service
stays fixed. This scenario motivates the formal access model; our
two-qubit acquisition establishes neither such an endpoint nor an actual
deployment.

\begin{table}[htbp]
\caption{Resources in the foundational model. A common detector is
held fixed within each allocation comparison. Comparisons between
different query families do not equate their total implementation cost.}
\centering\small
\begin{tabular}{p{.25\linewidth}p{.66\linewidth}}\toprule
Resource & Treatment\\\midrule
Stored information & Every selected nonempty code preserves the same bit;
the converse permits all encoders in the allowed state class.\\
Memory entanglement & Block size and allowed placements constrain the
stored state. They do not constrain operations inside the detector.\\
Preparation & Noiseless theory permits all allowed states. Noisy design
uses a specified circuit family and reports its CZ count separately.\\
Collective detection & The queried observable and coherent-channel
contract are fixed capabilities. Gate count, duration and detector
construction are not included in preparation cost.\\
Query locality & Graph-generator support equals degree plus one.
Bounded degree enters the learning lower bound; dense-family detectors
can require larger support.\\
Calibration & Original simulations use known parameters; the fresh
prediction study estimates preparation and reporting parameters from
finite data. True parameters are reserved for population evaluation and
privileged oracles. The detector theorem separately assumes a
diamond-distance bound, which is not measured on the device.\\
Learning & Classical query identities are the training observations.
Statistical upper bounds assume exact empirical optimization.\\\bottomrule
\end{tabular}
\end{table}
A retained classical copy, an additional correlated input, or a
different detector can change the optimum. The model therefore
quantifies storage entanglement under a specified access service,
without asserting a total-cost quantum advantage.

\FloatBarrier
\section{Proof of the exact contrast region}
\label{app:region86}
Assume $q\geq1$, independent commuting Hermitian CSS Pauli generators and a
nonempty allowed partition family. We use the conventional uniform
prior and binary prediction loss throughout this region theorem.

\subsection{Single-state bounds and downward closure}
For a state product across $\Pi$, let $x_t=|\Tr(P_t\rho)|$.
If $u,v$ are adjacent in $H_\Pi$, global commutation implies an even,
nonzero number of blocks with local anticommutation. Choose two,
$A,B$, and denote their expectation pairs by $(a_u,a_v)$ and
$(b_u,b_v)$. Anticommutation gives
$a_u^2+a_v^2\leq1$ and $b_u^2+b_v^2\leq1$. The remaining factors
have absolute value at most one, so
\[
 x_u+x_v\leq|a_ub_u|+|a_vb_v|\leq1.
\]
Together with $0\leq x_t\leq1$, these are the stable-set inequalities
for a bipartite graph \citep{chvatal1975}. Thus
$x\in\operatorname{STAB}(H_\Pi)$.

Let $\mathcal P=\operatorname{conv}\bigcup_\Pi\operatorname{STAB}(H_\Pi)$.
It is downward closed. Constructively, represent $z\in\mathcal P$
by a distribution of feasible independent sets. For $0\leq c\leq z$,
retain coordinate $t$ independently with probability $c_t/z_t$,
and omit it when $z_t=0$. The resulting subset stays feasible
and has expected indicator $c$.
For a mixture of block-product states, its absolute expectation
vector is below the average of their absolute vectors, so also lies
in $\mathcal P$. The contrast vector of any two allowed codewords
is below the mean of their absolute vectors. This proves
$c\in\mathcal P$, including codewords whose partitions and mixtures differ.

\subsection{Attainment}
For an independent set $I$ of $H_\Pi$, define $\tau_g^I$ by
Eq.~\eqref{eq:projector-code86}. Global commutation makes it positive.
Independence means every nonidentity product is traceless, so its
trace is one. It has expectation $(-1)^g$ for selected generators
and zero for every unselected generator.

Restrictions of selected generators commute inside every block.
A simultaneous eigenbasis on each block supplies a product basis
in which $\tau_g^I$ is diagonal and nonnegative. The state is
therefore separable across $\Pi$. For nonempty $I$, the two
codewords occupy opposite eigenspaces of any selected generator
and have orthogonal supports. For $I=\varnothing$, set both codewords
to $I/2^d$.

For weights $\lambda_I$, the pair
$\tau_g=\sum_I\lambda_I\tau_g^I$ has
$\Tr(P_t\tau_g)=(-1)^g\sum_{I\ni t}\lambda_I$.
It realizes every point of $\mathcal P$ exactly, completing
Theorem~\ref{thm:region86}. Maximizing a nonnegative linear workload
functional selects a nonempty indicator (also when all weights vanish),
so an information-preserving optimum always exists.

The equality concerns binary prediction-risk vectors, not equivalence
of full statistical experiments. For example, equal mixtures of the
$XX$ and $ZZ$ projector codes give
$\Tr(\tau_0\tau_1)=1/8$, despite each selected pair being orthogonal.
Randomly choosing a model after training and recording that choice
is distinct from an unobserved mixture on every trial.

\paragraph{A CSS example with incompatible allowed partitions.}
\label{app:css-mixture86}
On six qubits, consider only the three queries
\[
 P_1=XXXXII,\qquad P_2=IIXXXX,\qquad P_3=ZIZIZI.
\]
They commute globally: the respective $X$--$Z$ overlaps are
$\{1,3\}$ and $\{3,5\}$, both of even size. They are independent:
the two $X$ supports are linearly independent, and the nonzero
$Z$ support adds a third independent binary Pauli vector.
Only three generators are queried, so $q=3<d=6$.

Allow the two partitions
\[
 \Pi_A=\{\{1,2\},\{3,5\},\{4,6\}\},\qquad
 \Pi_B=\{\{1,3\},\{2,6\},\{4,5\}\}.
\]
The only edge of $H_{\Pi_A}$ is $\{1,3\}$, while the only edge
of $H_{\Pi_B}$ is $\{2,3\}$. For example, $\Pi_A$ separates the
two sites in $P_1$'s overlap with $P_3$, but keeps the two sites
in $P_2$'s overlap with $P_3$ together. The roles reverse for
$\Pi_B$. The fixed-partition contrast regions are consequently
\[
 \mathcal C_A=\{c\in[0,1]^3:c_1+c_3\leq1\},\qquad
 \mathcal C_B=\{c\in[0,1]^3:c_2+c_3\leq1\}.
\]
Their indicator vertices together comprise every vertex of the
unit cube except $(1,1,1)$. The exact region for the allowed
partition-mixture class is therefore
\[
 \operatorname{conv}(\mathcal C_A\cup\mathcal C_B)
 =\{c\in[0,1]^3:c_1+c_2+c_3\leq2\}.
\]

Put $s_g=(-1)^g$ and define the normalized projector codes
\[
 \tau_g^A=2^{-6}(I+s_gP_2)(I+s_gP_3),\qquad
 \tau_g^B=2^{-6}(I+s_gP_1)(I+s_gP_3).
\]
Each is positive, has trace one and rank sixteen.
Within every block of the stated partition, its two selected
Pauli restrictions commute. Diagonalizing those restrictions on
each block supplies a product basis in which $\tau_g^A$ or
$\tau_g^B$ is a nonnegative diagonal matrix. This proves
separability across $\Pi_A$ or $\Pi_B$, respectively.

The code pairs have contrasts $(0,1,1)$ and $(1,0,1)$.
Their unobserved equal mixture
$\tau_g=(\tau_g^A+\tau_g^B)/2$ has
\[
 c=(1/2,1/2,1).
\]
This point violates each fixed-partition edge inequality by
$1/2$, so no pair separable across either fixed partition can
realize it. Nevertheless, the mixture remains lossless:
\[
 P_3\tau_g=s_g\tau_g,\qquad \tau_0\tau_1=0.
\]
The shared selected query $P_3$ preserves the encoded bit even
when the mixture component is not recorded. With sharp reports
and $h=0$, the three prediction errors are $(1/4,1/4,0)$.
This illustrates both why arbitrary partition mixtures affect
the attainable region and why some mixtures, unlike general
mixtures of lossless codes, still preserve the information.
Every linear workload objective still has an optimal vertex.

The accompanying deterministic $64\times64$ audit checks every
nonempty generator product, local and global commutation,
normalization, positivity, projector identities, Born
probabilities, codeword orthogonality, and the common $P_3$
sectors. It reconstructs each component codeword as a positive
mixture of vectors product across its asserted partition,
and independently enumerates the vertices of the displayed
halfspace region. The reproducibility archive contains the source and complete audit.

\subsection{Relation to single-state witnesses}
Independence makes the symplectic constraints for a Pauli $Q$ with
$QP_tQ^\dagger=-P_t$ solvable. This tensor product of single-qubit
Paulis preserves every partition class.
For any allowed pair, $\rho=(\tau_0+Q\tau_1Q^\dagger)/2$ satisfies
$|\Tr(P_t\rho)|=|\Tr[P_t(\tau_0-\tau_1)]|/2$.
Conversely, $(\rho,Q\rho Q^\dagger)$ realizes the absolute-correlation
vector of any allowed $\rho$. Thus the entire binary-contrast region
equals the single-state absolute-correlation region, not only its
weighted optimum. Introducing two codewords is not a separate
mathematical extension of witness geometry.

Exact feasible polytopes with attaining product-state vertices already
appear in \citet{jafarizadeh2008}, Sections~2.2--3 of the arXiv version,
for selected GHZ and cluster stabilizer families under full separability. Projecting their
GHZ-generator polytope recovers the fully separable geometry of the
opening three-qubit example. We do not claim that example or the
general idea of an exact stabilizer polytope as new.
The linear local-anticommutativity inequality is established in
\citet[Proposition~A.3, Eq.~(A13)]{zander2024}, with its
graph-generator specialization in Proposition~A.4, Eq.~(A21).
We use this bound, rather than claim it as new.
Weighted cut-anticommutativity graphs appear in
\citet{knips2016}; linear graph-generator bounds appear in
\citet[Lemma~1, Eq.~(8)]{li2026restricted}. With $\gamma=0$, their
unweighted graph-generator bound coincides with the support-function
bound used here when the crossing graph is bipartite.
Theorem~\ref{thm:region86} gives a uniform constructive formula for
arbitrary independent commuting CSS families and arbitrary allowed
partition mixtures. Its consequences include optimal depth-constrained
allocation on $K_{3,5}$ and the geometry-dependent learning rates in
Appendix~\ref{app:geometry86}; these are outside the earlier paper's
fully separable state class. Standard concentration and testing supply
the statistics once the physical class is reduced exactly.
Binary discrimination through any quantum channel admits optimal
orthogonal pure-state encodings \citep[Lemma~4]{dallarno2020}. Graph-state clustering via ancillary resources
and rank integrity \citep{bourneuf2026} changes the state-transformation
problem; it does not optimize prediction under these fixed queries.
Biclique graph states themselves are familiar \citep{tzitrin2018}.
Local equivalence of states must track the requested observables
before it can identify two allocation problems.

\citet[Theorem~3, Proposition~4 and Theorem~5]{liu2026magic} give
a dependency-sensitive reduced stabilizer polytope and an independent-set
oracle for its sign relaxation. For perfect graphs without active
dependencies, they compute the reduced robustness of magic exactly.
Their free states are stabilizer mixtures and their graph records
global anticommutation. Here blocks may contain arbitrary states,
and conflicts arise from local restrictions of globally commuting
queries. The global graph would be empty for our CSS family;
its local partition conflicts can still constrain prediction.
The graph-polytope proof architecture is related, while the
physical free classes and resulting optimization problems differ.

\subsection{A sufficient condition beyond CSS}
The same region theorem holds for independent commuting Pauli
generators if every allowed partition satisfies:
(i) $H_\Pi$ is perfect; and (ii) for every maximal clique with
at least two vertices, there are two blocks on each of which the
clique's restrictions pairwise anticommute.
For such a clique $J$, anticommutation gives
$\sum_{t\in J}a_t^2,\sum_{t\in J}b_t^2\leq1$, hence
$\sum_{t\in J}x_t\leq\sum|a_tb_t|\leq1$.
Include singleton inequalities $x_t\leq1$.
Perfection makes these clique constraints describe
$\operatorname{STAB}(H_\Pi)$ \citep{chvatal1975}.
The preceding attainment and mixture arguments are unchanged.
For two-block partitions, global commutation makes condition (ii)
automatic. This is a sufficient condition, not a characterization
of every family admitting an exact combinatorial rule.

The independent non-CSS family $\{XXI,YYZ,ZZI\}$ satisfies both
conditions for every partition. Its conflict graph is $K_3$ when
the first two qubits are in different blocks, with anticommutation
shared by those blocks, and is empty otherwise.

Three failures delimit the result, all with $h=0$ and sharp reports.
For dependent queries $\{ZI,IZ,ZZ\}$, the conflict graph is empty,
but the equally weighted optimal error is $1/6$, not zero:
two joint eigenvalue strings can differ in at most two positions.
For $\{XZZ,ZXZ,ZZX\}$ on singleton blocks, the conflict graph is
perfect ($K_3$), but no common two blocks support its clique.
The orthogonal product pair with antipodal local Bloch vectors
$\pm(1/\sqrt3,0,\sqrt{2/3})$ has risk
$1/2-1/(3\sqrt3)=0.30754991$, below the independent-set formula's
$1/3$. Finally, mixtures of lossless code pairs need not be lossless,
as the $XX/ZZ$ example above shows.

\subsection{Graph components and depth}
For graph-state generators on a bipartite graph, local Clifford
conjugation preserves all commutation relations.
Two generator restrictions conflict across a partition exactly when
their graph edge crosses the partition. A minimum-weight vertex cover
of the crossing edges leaves every surviving connected component
inside one block. Conversely, delete any vertex set whose remaining
components have size at most $k$, group those components, and make
deleted vertices singletons. Every crossing edge is covered.
Therefore, over all depth-$k$ partitions,
\[
 R_k^*=\frac12-\frac D2
 \left[\sum_i a_i-
 \min_{\substack{U\subseteq V\\
       \max_{B\in{\rm cc}(G-U)}|B|\leq k}}\sum_{i\in U}a_i\right].
\]
Extra hardware constraints on partitions can invalidate the converse
construction and must be imposed separately.
Deleted vertices remove query requirements, not qubits.

Prepare each retained graph component in its graph state, with deleted
sites in $|0\rangle$, then apply $Z^g$ to each retained vertex.
Retained generator expectations are $(-1)^g$; a deleted generator has
zero expectation because it contains $X$ on a deleted site.
This pure block code attains the displayed optimum.
For a path, each disjoint group of $k+1$ consecutive vertices requires
a deletion, giving $\lfloor d/(k+1)\rfloor$ deletions under uniform
weights. Deleting every $(k+1)$st vertex attains the bound.
For $D>0$ and sharp reports, attaining risk $h$ requires every active connected
component to fit within a block.

\section{Learning bounds for physical encoders}
\label{app:learning86}
For a fixed measurement family and depth cap $k$, write
\[
 \mathfrak M_m(\Gamma,k)=
 \inf_{\widehat\tau}\sup_{\pi\in\Delta_{q-1}}
 \mathbb E_{T_{1:m}\sim\pi^m,\,\widehat\tau}
       [R_\pi(\widehat\tau)-R_{\pi,k}^*].
\]
The infimum includes randomized maps from $m$ query identities to
any allowed mixed codeword pair, with its optimal classical decoder.
The learner knows the source reliability, measurement family,
reporting parameters and partition constraints. The query distribution
is unknown. Block placement is unrestricted except for the stated
partition family; lower bounds use sharp reports.
\subsection{Finite-class upper bound}
Let $\mathcal A$ be the feasible independent-set indicators and
$N=|\mathcal A|\leq2^q$. The score of an indicator on query $t$ is
$v_t\mathbf1\{t\in I\}\in[0,1]$.
With probability at least $1-\delta$, every score average differs
from its expectation by at most
$\sqrt{\log(2N/\delta)/(2m)}$.
Two deviations and the factor $D/2$ in the risk give
\begin{equation}
 R(\widehat I)-\min_{\tau_0,\tau_1\in\mathcal S_{\mathcal F}}R
 \leq D\sqrt{\frac{\log(2N/\delta)}{2m}},
 \label{eq:erm86}
\end{equation}
capped by $D/2$. The expected version replaces $\delta$ by the
standard finite-class maximal-deviation bound and has
$D\sqrt{\log(2N)/(2m)}$.
Theorem~\ref{thm:region86} makes this a comparison with every allowed
mixed physical encoder. The learner receives only the stipulated
query samples and known interface parameters.
These are standard concentration arguments; exact risk completeness
supplies their physical comparator.

\subsection{Depth-two lower bound}
\label{app:gadget-family86}
For $r$ disjoint three-vertex paths, define the workload by
\[
 \pi_{j,L}=\frac{1+\sigma_j\theta}{4r},\qquad
 \pi_{j,C}=\frac1{2r},\qquad
 \pi_{j,R}=\frac{1-\sigma_j\theta}{4r},
 \quad \sigma_j\in\{-1,1\},\quad0\leq\theta\leq\tfrac12.
\]
The unknown signs determine which endpoint should share the center's
entangled pair. The parameter $\theta$ controls the endpoint imbalance;
$\theta=0$ repeats the opening three-qubit workload.

For $r$ disjoint three-vertex paths, every feasible indicator omits
a query from each gadget: retaining all three would force a component
of size three. Conversely, every subset with at most two queries
per gadget is feasible simultaneously, by using its components as
blocks. Cross-gadget entanglement cannot add another indicator.
The full contrast body is the product of
$\{c\in[0,1]^3:\sum c\leq2\}$.

For this sign family, the optimal
separable score is $1/2$, so its risk is $h+D/4$. The depth-two
oracle omits the less probable endpoint in each gadget and has risk
$h+D(1-\theta)/8$.

For any physical learner, apply the contrast-region replacement
without knowing the workload. Fill its random subset in each gadget
to size two; this cannot lower any contrast. Replace a selection
omitting the center by an equal random choice of the two endpoint
omissions. The expected deleted probability falls from $1/(2r)$
to $1/(4r)$ for every sign vector. Thus it suffices for a lower bound
to consider the $r$ binary pair choices. The transformation depends
on neither the unknown signs nor their probabilities.
A wrong choice in one gadget incurs excess $D\theta/(4r)$.

Neighboring sign vectors differ only in one endpoint pair. Their
one-sample KL divergence is
\[
 \frac{\theta}{2r}\log\frac{1+\theta}{1-\theta}
 \leq\frac{2\theta^2}{r},\qquad 0\leq\theta\leq1/2.
\]
For $m$ observations, Pinsker bounds their total variation by
$\theta\sqrt{m/r}$. Set
$\theta=\min\{1/2,\tfrac12\sqrt{r/m}\}$, making this at most $1/2$.
Averaging uniform signs, each coordinate's error is at least $1/4$.
The usual Assouad testing argument gives
\begin{equation}
 \inf_{\widehat{\tau}}\sup_{\sigma,\theta}
 \mathbb E[R(\widehat{\tau})-R^*]
 \geq\frac D{32}\min\{1,\sqrt{r/m}\}.
 \label{eq:lower86}
\end{equation}
It permits correlated estimator choices and arbitrary randomized
physical learners. Even revealing the common $\theta$ does not
invalidate this testing bound.
The upper bound uses at most $3^r$ full allocations and
Eq.~\eqref{eq:erm86}, proving the local-gadget rate used below.
With no observations, the separate binary-testing argument gives
a constant lower bound.

\subsection{Allocation geometry changes the learning rate}
\label{app:geometry86}
Let the measurement graph be the complete bipartite graph $K_{a,b}$,
with color classes $A,B$, $q=a+b$ qubits and
$1\leq k\leq\min(a,b)$. Allow every partition of depth at most $k$.
A retained query set meeting both colors induces one connected
component, so it fits a block exactly when its size is at most $k$.
Every one-color set is feasible. The exact contrast region is
therefore the convex hull of these indicators.

For nonnegative weights $w_i=\pi_i v_i$, the optimum score is
\begin{equation}
 V_k(w)=\max\left\{\sum_{i\in A}w_i,\ \sum_{i\in B}w_i,\
                         \operatorname{Top}_k(w)\right\},
 \label{eq:biclique-oracle86}
\end{equation}
where $\operatorname{Top}_k$ sums the $k$ largest coordinates.
A monochromatic top-$k$ set is dominated by its full color class.
Evaluating the three candidates costs $O(q\log(k+1))$ with a heap,
or $O(q)$ with linear-time selection. This is an elementary optimizer
for the stated complete physical encoder class, not a quantum speedup.

\paragraph{Only undominated allocations need to be learned.}
The maximal feasible sets are
\[
 \mathcal M_k=\{A,B\}\cup
 \{S:|S|=k,\ S\cap A\ne\varnothing,\ S\cap B\ne\varnothing\}.
\]
Their indicator class has VC dimension exactly $k$.
A set of $k$ vertices inside $A$ is shattered: $A$ and $B$ give the
full and empty traces; complete each nonempty proper subset to a mixed
$k$-set using vertices in $B$.
No $(k+1)$-set is shattered. If it meets both colors, its all-positive
trace is unavailable. If it is monochromatic, a trace with exactly
$k$ positives is unavailable: a mixed $k$-set has at most $k-1$
vertices of that color, whereas the full color includes all $k+1$.
This also covers $k=1$, when the frontier is simply $\{A,B\}$.

The full downward-closed feasible class has VC dimension
$\max(a,b)$, since every subset of a color is feasible, while a larger
set necessarily crosses the colors and cannot be included.
Thus counting all feasible indicators, including dominated ones,
can overstate the statistical difficulty by an arbitrarily large factor.

\paragraph{The deviation bound and risk conversion.}
\label{app:vc86}
For a set class $\mathcal M$ of VC dimension $V\geq1$ and fixed
$v_t\in[0,1]$, put $f_I(t)=v_t\mathbf1_I(t)$.
The standard expected VC bound is
\[
 \mathbb E\sup_{I\in\mathcal M}|(\widehat\pi-\pi)f_I|
 \leq C_0\min\{1,\sqrt{V/m}\}
\]
for a universal constant $C_0$.
For $v=1$, this follows by symmetrization from the Gaussian VC bound
of \citet[Theorem~6 and Lemma~4]{bartlett2002}, using the entropy
bound of \citet{haussler1995}.
Known visibility cannot increase empirical $L_2$ distances:
$\|v(\mathbf1_I-\mathbf1_J)\|_{L_2(\widehat\pi)}
\leq\|\mathbf1_I-\mathbf1_J\|_{L_2(\widehat\pi)}$.
The same covering-number argument therefore applies.
If $I^*$ maximizes the population score and $\widehat I$ maximizes
the empirical score, then
\[
 R_\pi(\widehat I)-R_\pi(I^*)
 =\frac D2\pi(f_{I^*}-f_{\widehat I})
 \leq D\sup_{I\in\mathcal M}|(\widehat\pi-\pi)f_I|.
\]
Taking expectations and setting $V=k$ gives the claimed rate.
Nonnegative weights make maximal sets sufficient; exact region
completeness supplies the full physical comparator.

Capacity-controlled learning from request identities is also familiar
in classical caching. For fractional caching with unit hit utilities
and capacity $C<N/2$ in an $N$-file catalogue,
\citet[Theorem~2]{paschos2019} give cumulative regret at most
$\sqrt{2CT}$ against the best fixed cache, independent of $N$. Here the objective is batch
excess risk, and the physical region determines a top-$k$/color-class
frontier and a lower bound over arbitrary mixed encoders.

\paragraph{A matching lower bound over physical learners.}
Set $v_i=1$. Choose $k$ active vertices in each color and pair them.
For unknown signs $\sigma_j$ and $0\leq\theta\leq1/2$, assign
\[
 \pi_{j,A}=\frac{1+\sigma_j\theta}{2k},\qquad
 \pi_{j,B}=\frac{1-\sigma_j\theta}{2k},
\]
with zero probability on all other queries.
Projection of the exact region onto these $2k$ coordinates is
$\{c\in[0,1]^{2k}:\sum_i c_i\leq k\}$.
Every physical learner can therefore be replaced, without knowing
the signs, by a randomized subset of at most $k$ active coordinates.
Fill it to size $k$. The number of doubly selected pairs then equals
the number of omitted pairs. Replace each such double and omission
by one uniformly selected endpoint in each pair.
Every pair has total mass $1/k$, so this transformation preserves
expected captured mass for every sign vector, conditionally on the
original subset. It need not preserve risks outside this hard family.

A wrong endpoint choice incurs excess $D\theta/(2k)$.
Neighboring signs have one-sample KL divergence
\[
 \frac{\theta}{k}\log\frac{1+\theta}{1-\theta}
 \leq\frac{4\theta^2}{k}.
\]
For $\theta=\min\{1/2,\tfrac14\sqrt{k/m}\}$, their $m$-sample
total variation is at most $\sqrt2/4<1/2$.
The coordinate testing error is at least $1/4$, giving
\begin{equation}
 \inf_{\widehat\tau}\sup_\pi
 \mathbb E[R(\widehat\tau)-R_{\pi,k}^*]
 \geq \frac D{32}\min\{1,\sqrt{k/m}\}.
 \label{eq:biclique-lower86}
\end{equation}
The upper and lower bounds establish the sharp rate.
They allow arbitrary workload distributions, including zeros,
and unrestricted block placement subject to the depth cap.

\paragraph{Equal memory size does not imply equal learning cost.}
A connected $q$-vertex path, $q\geq3$, contains
$r=\lfloor(q+1)/4\rfloor$ length-three active gadgets separated by
zero-probability connector queries. Extra trailing queries also
have probability zero. Active generators can act with $Z$ on connector
qubits; nevertheless, their projected contrast region is the product
of the gadget polytopes. Keeping a connector cannot make three
consecutive active vertices fit a depth-two block, and deleting all
connectors realizes every permitted combination.
Since $r\geq q/6$, Eq.~\eqref{eq:lower86} gives the conservative lower
bound $D\min\{1,\sqrt{q/m}\}/(32\sqrt6)$.
The general upper bound matches its dependence on $q,m$.

At the same depth $k=2$, complete bipartite graphs consequently have
minimax rate $\Theta(D\min\{1,m^{-1/2}\})$, whereas connected paths
have $\Theta(D\min\{1,\sqrt{q/m}\})$.
This compares connected measurement graphs, not connected positive-weight
workloads. Path queries have support at most three; complete-bipartite
queries have support up to $\max(a,b)+1$. The comparison does not
hold detector cost or locality fixed. Its conclusion concerns the
statistical geometry of useful allocations, not universal learning
difficulty as a function of physical memory size.

\paragraph{An eight-query allocation example.}
For $K_{3,5}$, take
$\pi_A=(.30,.15,.05)$ and
$\pi_B=(.30,.05,.05,.05,.05)$.
Both color totals equal $.50$. At depth three the optimum retains
$\{A_1,A_2,B_1\}$, with captured mass $.75$.
Prepare the graph state on $A_1-B_1-A_2$, put the other five sites in
$|0\rangle$, and apply $Z^g$ to the selected sites.
The original selected generators reduce to this star's stabilizers;
deleted generators have zero expectation.
At $h=0$, optimal errors are $25\%,20\%,12.5\%$ for caps one, two
and three. These are optima over all corresponding mixed physical
encoders. Original queries on $A$ have support six and those on $B$
have support four.

\subsection{Bounded query locality forces dimensional learning cost}
\label{app:locality86}
Let $\Gamma$ be a bipartite graph on $d$ vertices, with sharp
graph-generator queries and depth cap two. Let $p_3(\Gamma)$ be the
largest number of vertex-disjoint induced three-vertex paths with
no edges between different paths. Then
\begin{equation}
 \mathfrak M_m(\Gamma,2)\geq
 \frac D{32}\min\{1,\sqrt{p_3(\Gamma)/m}\}.
 \label{eq:packing-lower86}
\end{equation}
Choose any such collection of $r$ paths and put the workload on
their $3r$ vertices. A feasible depth-two query set retains at most
two vertices per path. Conversely, every such choice is feasible:
pair adjacent retained vertices, leave nonadjacent endpoints in
singletons, and put inactive qubits in singleton blocks.
There are no edges between the active paths. Edges to inactive
vertices impose no additional selected-query constraint.
The projection of the full physical contrast region is therefore
the product of $r$ gadget polytopes. The workload-independent
physical reduction and testing argument of
Eq.~\eqref{eq:lower86} apply unchanged. Maximizing over packings proves
Eq.~\eqref{eq:packing-lower86}.

Suppose now that $\Gamma$ is connected, $d\geq3$, and its maximum
degree is $\Delta$. Greedily select an induced three-vertex path
and remove its closed neighborhood. Repeat until none remains.
Bipartiteness ensures that any vertex with two remaining neighbors
supplies such a path. The chosen paths are mutually nonadjacent.
At termination every remaining component has at most two vertices.
If $R$ is the union of removed neighborhoods and $r$ paths were
chosen, then $|R|\leq3r(\Delta+1)$.
Every remaining component has an edge to $R$, since the original
graph is connected. There are at most $\Delta|R|$ such components,
each with at most two vertices. Hence
\[
 d\leq(1+2\Delta)|R|
   \leq3r(\Delta+1)(2\Delta+1).
\]
Writing $C_\Delta=3(\Delta+1)(2\Delta+1)$ gives
\[
 \mathfrak M_m(\Gamma,2)\geq
 \frac{D}{32\sqrt{C_\Delta}}\min\{1,\sqrt{d/m}\}.
\]
The finite-class upper bound matches its dependence on $d,m$.
Thus every connected bipartite family with fixed maximum degree
has rate $\Theta_\Delta(D\min\{1,\sqrt{d/m}\})$ at depth two.
The constants are not claimed sharp in $\Delta$.
Graph-generator support is one plus vertex degree, so this is also
a bounded-query-support result. A dimension-free rate along a
growing connected family requires leaving the bounded-degree regime;
growing degree alone is not sufficient.

\paragraph{General upper and lower structural bounds.}
For any bipartite graph, let $V_2(\Gamma)$ be the VC dimension of its
inclusion-maximal depth-two feasible query sets.
The deviation argument in Appendix~\ref{app:vc86} gives
\[
 \frac D{32}\min\{1,\sqrt{p_3(\Gamma)/m}\}
 \leq\mathfrak M_m(\Gamma,2)
 \leq C_0D\min\{1,\sqrt{V_2(\Gamma)/m}\},
\]
for a universal $C_0$ when $V_2(\Gamma)\geq1$; a singleton frontier
has zero excess. These bounds need not match on every graph.
The upper bound assumes exact empirical optimization and makes no
polynomial-time claim for arbitrary graphs.

\paragraph{Positive workloads.}
The constructions above use zero query probabilities.
The same minimax supremum holds over strictly positive workloads
if no common probability floor is imposed. For a fixed learner,
expected excess is continuous in $\pi$: the sample space is finite,
each encoding's risk is linear in $\pi$, and the oracle risk is
continuous as the optimum over a finite contrast polytope.
Strictly positive distributions are dense in the simplex, so their
supremum equals the unrestricted supremum for every learner.
Taking the infimum preserves that equality.
This does not establish the same rate under a fixed constraint
$\pi_i\geq c/d$ with $c>0$.

\paragraph{Independent structural checks.}
The development audit checks 285 graphs: all 70 connected bipartite
atlas graphs on three to seven vertices, plus paths, ladders, grids,
bounded-degree random trees and regular bipartite graphs up to 192
vertices. Exhaustive depth-two partitions verify 784 projected masks;
all large-graph packing and degree inequalities pass.
A separate implementation using Pauli symplectic restrictions checks
all 11,664 depth-two partitions and 775 induced-path packings on
the 70 atlas graphs, covering 6,536 projected masks.
These finite checks corroborate the proof; they do not replace it.

\subsection{The locality law at every fixed depth}
\label{app:fixed-depth86}
The depth-two argument extends to every positive depth cap $k$.
Let $\Gamma$ be connected and bipartite, with $d\geq k+1$ vertices
and maximum degree at most $\Delta$. Put
$C_{k,\Delta}=(k+1)(\Delta+1)(1+k\Delta)$.
For sharp graph-generator reports and $m\geq1$,
\begin{equation}
 \frac{D}{32k}\min\!\left\{1,
       \sqrt{\frac{kd}{C_{k,\Delta}m}}\right\}
 \leq\mathfrak M_m(\Gamma,k)
 \leq D\min\!\left\{\frac12,
       \sqrt{\frac{(d+1)\log2}{2m}}\right\}.
 \label{eq:fixed-depth86}
\end{equation}
The lower bound permits arbitrary randomized learners selecting
allowed mixed codeword pairs and grants their optimal decoder.
The upper bound is the finite-class argument in
Appendix~\ref{app:learning86}; empirical maximization over maximal
feasible sets attains it with lossless codes.
If $d\leq k$, retaining the whole graph gives zero allocation excess.

\paragraph{Connected gadgets.}
Greedily choose a connected induced subgraph of $k+1$ vertices in
the remaining graph, then remove its closed neighborhood.
Such a set exists whenever a remaining component has more than
$k$ vertices: grow a connected set until it has the required size.
The chosen sets are mutually nonadjacent. At termination, every
remaining component has at most $k$ vertices.
For $r$ chosen sets and removed union $R$,
$|R|\leq(k+1)(\Delta+1)r$.
Original connectivity ensures that every remaining component has
an edge into $R$, so there are at most $\Delta|R|$ of them. Therefore
\[
 d\leq(1+k\Delta)|R|\leq C_{k,\Delta}r.
\]

Put the workload on the $r$ chosen sets. A feasible query indicator
must omit at least one vertex of each set: retaining a connected
$(k+1)$-set exceeds the depth cap. Conversely, every proper subset
of each gadget is feasible. Its connected components have size at
most $k$, the gadgets have no edges between them, and inactive
queries can be omitted while their physical qubits remain in the
memory. The exact region thus projects to
\[
 \prod_{j=1}^r
 \left\{c^{(j)}\in[0,1]^{k+1}:\sum_i c_i^{(j)}\leq k\right\}.
\]
These are uniform-matroid independence polytopes. Partitions crossing
gadgets and arbitrary mixed codewords cannot enlarge this projection.

\paragraph{A workload-independent physical reduction.}
In gadget $j$, give two distinguished vertices probabilities
$(1+\sigma_j\theta)/(2kr)$ and $(1-\sigma_j\theta)/(2kr)$,
where $\sigma_j\in\{-1,1\}$ and $0\leq\theta\leq1/2$.
Each of the other $k-1$ vertices has probability $1/(kr)$.
Every gadget has total mass $1/r$.
Its optimum omits the less frequent distinguished vertex; the
oracle risk is $h+D(1-\theta)/(4k)$.

Decompose any physical learner's projected contrast into subset
indicators and fill each subset to size $k$. This weakly improves
every nonnegative workload objective and requires no knowledge of
$\sigma$. If such an indicator omits an ordinary vertex, replace
that omission by a fair random distinguished omission. Expected
omitted mass falls from $1/(kr)$ to $1/(2kr)$ for every $\sigma$.
Thus a randomized learner making one binary omission choice per
gadget dominates the original learner throughout this hard family.
Its choices may be correlated across gadgets. At $k=1$, there are
no ordinary vertices and this second replacement is unnecessary.

\paragraph{Testing and risk.}
A wrong distinguished omission costs $D\theta/(2kr)$ risk.
Flipping one sign gives one-sample divergence
\[
 \operatorname{KL}(\pi_\sigma,\pi_{\sigma^{(j)}})
 =\frac{\theta}{kr}\log\frac{1+\theta}{1-\theta}
 \leq\frac{4\theta^2}{kr}.
\]
Pinsker's inequality for $m$ samples gives total variation at most
$\theta\sqrt{2m/(kr)}$. Choose
$\theta=\min\{1/2,\sqrt{kr/(8m)}\}$, so this is at most $1/2$.
The average binary testing error is at least $1/4$ in each coordinate.
Summing the $r$ coordinate losses yields
\[
 \mathfrak M_m(\Gamma,k)\geq\frac{D\theta}{8k}
 \geq\frac{D}{32k}\min\{1,\sqrt{kr/m}\}.
\]
Substituting the packing bound proves Eq.~\eqref{eq:fixed-depth86}.
For fixed $k,\Delta$, its lower coefficient can be taken as
$1/(32\sqrt{kC_{k,\Delta}})$ times
$D\min\{1,\sqrt{d/m}\}$. This matches the upper rate in $d,m$;
the simultaneous dependence on growing $k$ or $\Delta$ is not sharp.
The positive-workload continuity argument in
Appendix~\ref{app:locality86} still applies without a common
probability floor. Bipartiteness remains necessary for the physical
reduction used here, although the packing argument alone does not
require it.

\paragraph{Structural verification.}
A frozen audit checks 737 graph/depth conditions on 151 graphs,
8,340 projected masks and 280 testing-constant settings, with depths
one to five and graph sizes up to 192. A separate implementation
using direct Pauli symplectic restrictions checks all 71 connected
bipartite atlas graphs on two to seven vertices: 42,351 partitions,
3,096 gadget packings and 64,532 projected masks, including $k=1$
and the $d\leq k$ boundary. These checks corroborate the proof.
The reproducibility archive retains both implementations and their complete results.

\subsection{Connected regions and the effective allocation dimension}
\label{app:modular86}
The local and complete-bipartite laws extend to a connected family
between those extremes. Partition the vertices of a bipartite graph
$\Gamma$ into $r$ regions $M_j=A_j\cup B_j$, each inducing a complete
bipartite graph. A designated set $P_j\subset M_j$ contains its
connection vertices: every edge between regions has both endpoints
in these sets. Assume $|P_j|\leq p$ and that each of
$A_j\setminus P_j$ and $B_j\setminus P_j$ contains at least $k$
vertices. The connections between regions can be arbitrary, provided
$\Gamma$ remains bipartite. The word region denotes this chosen vertex
partition; the physical encoder may use blocks crossing its boundaries.

For sharp graph-generator reports, $m\geq1$ request observations,
and all physical encoders of depth at most $k$,
\begin{equation}
 \frac D{32}\min\{1,\sqrt{rk/m}\}
 \leq \mathfrak M_m(\Gamma,k)
 \leq C D\min\left\{1,
       \sqrt{\frac{rk+\sum_j|P_j|}{m}}\right\},
 \label{eq:modular86}
\end{equation}
where $C$ is universal. For fixed $p$, this is
$\Theta_p(D\min\{1,\sqrt{rk/m}\})$, with constants independent
of the region sizes, $r$ and $k$. Both $r$ and $k$ may grow.
The upper bound is attained by exact empirical maximization over
maximal feasible query sets; every output has a lossless code.
The lower bound permits arbitrary randomized mixed physical encoders
and grants their optimal decoder.

\paragraph{The maximal class has dimension at most $rk+\sum_j|P_j|$.}
The exact physical region is the convex hull of indicators $S$ whose
selected connected components have size at most $k$.
Consider an inclusion-maximal feasible $S$. If it meets both colors
of region $j$, its selected region vertices lie in one component,
so there are at most $k$ of them. If it meets only one color,
every nonconnection vertex of that color must be selected: a missing
one would have no selected neighbor and could be added as a singleton.
Edges between connection vertices do not change this argument.

No $k+1$ nonconnection vertices in one region can be shattered by
the maximal class. A test set meeting both colors lacks the
all-positive trace, which would require a component larger than $k$.
A monochromatic test set lacks a trace with exactly $k$ positives:
a monochromatic selection includes the entire test set, while a mixed
selection containing those $k$ positives also needs an opposite-color
vertex. Its component would again exceed $k$.
The projected maximal class on each region interior therefore has
VC dimension at most $k$. A globally shattered set has at most $k$
interior vertices per region and at most $\sum_j|P_j|$ connection
vertices. The VC bound and the risk conversion in
Appendix~\ref{app:vc86} prove the upper inequality.

\paragraph{Connections do not remove the capacity lower bound.}
Choose $k$ nonconnection vertices from each color of each region.
Projection onto these $2rk$ active coordinates is exactly
\[
 \prod_{j=1}^r
 \left\{c^{(j)}\in[0,1]^{2k}:\sum_i c_i^{(j)}\leq k\right\}.
\]
Every feasible indicator satisfies each capacity inequality. A
monochromatic trace has only $k$ active vertices available, while
a mixed trace lies in one component of size at most $k$.
Conversely, select any at-most-$k$ active vertices per region and
no others. No selected interregion edge remains, and each component
fits the cap. Convexification gives the product of the uniform-matroid
independence polytopes. This proof uses the complete physical region;
partitions crossing regions cannot enlarge the projection.

Pair the active colors within each region and assign
\[
 \pi_{j,\ell,A}=\frac{1+\sigma_{j\ell}\theta}{2rk},\qquad
 \pi_{j,\ell,B}=\frac{1-\sigma_{j\ell}\theta}{2rk},
 \quad \sigma_{j\ell}\in\{-1,1\},\quad0\leq\theta\leq\tfrac12.
\]
All other queries have zero probability. Decompose any physical
learner's projected contrast into product subset indicators and
fill each region to size $k$. Within each region, pair each doubly
selected pair with an omitted pair and replace them by uniform
single-endpoint choices. Every pair has total mass $1/(rk)$, so
this replacement preserves expected captured mass for every sign
vector. Filling can only improve it. The resulting learner chooses
one endpoint in each of $rk$ pairs without knowing the signs.

A wrong choice costs $D\theta/(2rk)$ in excess risk. Neighboring
signs have one-sample divergence at most $4\theta^2/(rk)$.
Choosing $\theta=\min\{1/2,\tfrac14\sqrt{rk/m}\}$ bounds
sample total variation by $\sqrt2/4<1/2$ in both branches.
The average testing error is at least $1/4$ per pair, giving
$D\theta/8\geq(D/32)\min\{1,\sqrt{rk/m}\}$.
The oracle chooses the more frequent endpoint in every pair and
has risk $h+D(1-\theta)/4$. This argument also covers $k=1$.

\paragraph{A positive-weight allocation across regions.}
Take two $K_{3,3}$ regions with colors $A_1,B_1,A_2,B_2$ and a
single edge from $b\in B_1$ to $a\in A_2$. The weights on $B_1$
and $A_2$ are $(6,5,5)$, with six at $b,a$; all six other weights
are one. Normalize by the total 38. At depth two, retaining both
preferred colors supports weight 32. Only $a,b$ share a component;
the four other retained vertices are singletons. Any mixed selection
within a region contains at most two vertices and has weight at most
seven, while the less-favored full color has weight three. Thus 32
is the unique optimal retained-query weight pattern. At depth one,
one must also discard $a$ or $b$, reducing the best weight to 26.
Switching a whole region to its other color is worse. The exact region
therefore gives optimal errors $3/38$ and $3/19$, respectively, over
the full physical classes. Its projector codes retain the same bit.
Independent enumeration of all $2^{12}$ query subsets at both caps
reproduces these values.

\paragraph{Why the connection bound matters.}
Join two $K_{s+1,s+1}$ regions by a matching between $s$ vertices of
$B_1$ and $s$ vertices of $A_2$. Each facing color still has one
nonconnection vertex. At depth one, projection onto the $2s$ matching
endpoints is exactly a product of $s$ two-element rank-one capacities:
one cannot select both endpoints of an edge, and every other selection
is feasible by omitting all nonmatching vertices. The paired-testing
argument above now gives $\Omega(D\min\{1,\sqrt{s/m}\})$;
the finite-class upper bound on $4(s+1)$ vertices matches it.
Thus fixed $r=2$ and $k=1$ do not give a size-independent rate when
the connection count grows. The same independent source exhaustively
checks the projection for $s=1,2,3$; the argument establishes it for
every $s$. This boundary is an application of the capacity lower bound.

\paragraph{An exact optimizer for a chain of regions.}
Consider a chain of $K_{s,s}$ regions, each with a left connection
vertex in $A$ and a right one in $B$. Consecutive right and left
vertices are joined. For $s\geq k+1$, Eq.~\eqref{eq:modular86}
gives the sharp rate $\Theta(D\min\{1,\sqrt{rk/m}\})$ on
$d=2rs$ qubits. Increasing $s$ need not increase this statistical
cost; increasing $r$ does. At fixed $s,k$, the rate is dimensional,
consistent with the bounded-degree law. Larger $s$ also increases
query support, so this comparison does not fix detector cost.

The empirical optimum can be computed without enumerating all
allocations. Process regions from left to right, with state
$t\in\{0,\ldots,k\}$ equal to the size of the selected component
touching the preceding right connection vertex; $t=0$ means that
vertex is absent. A mixed-color option with $n$ selected vertices
and left/right selection bits $L,R$ forms one local component.
Set $z=n+Lt$, reject $z>k$, and send state $Rz$ to the next region.
If $L=0$, the incoming component closes.

For an $A$-only option, include all nonconnection $A$ vertices and
optionally its left connection vertex. Those interior vertices are
separate singletons. Selecting the left vertex joins only one vertex
to the incoming component, so require $t+1\leq k$; the outgoing
state is zero. A $B$-only option similarly includes its interior
singletons and has outgoing state zero or one, according to whether
the right vertex is selected. Empty options can be included for
boundary cases with no interior vertices.

Sort the interior weights in each color. For every mixed size and
connection pattern, prefix sums give the best split between colors.
There are $O(k^2)$ candidate splits but only $O(k)$ resulting options;
$O(k)$ incoming states give $O(rk^2)$ scalar work across the chain,
after $O(d\log d)$ sorting. This is a sufficient boundary state,
so induction proves the recurrence exact. Equal objectives prefer
greater selected cardinality, ensuring a maximal optimum; a fixed
mask order resolves remaining ties. Integer-mask costs are additional.
The implementation uses floating-point comparisons and does not
certify exact ordering of nearly equal real objectives.

\paragraph{Independent checks.}
The released chain optimizer passes
1,300 exhaustive weighted objectives on 22 graphs, including 1,000
exact integer-weight cardinality and maximality checks.
A separate implementation checks 336 connected port graphs,
9,496 capped physical partitions of a ten-qubit example, and
20,109 weighted optimizations. It compares the released optimizer
against 44,938 exhaustive graph-oracle workloads, including zero
weights, asymmetric regions, empty interiors and $k=1$.
These finite checks support the derivations above; they do not
establish the asymptotic claims. Both implementations and their complete results are in the reproducibility archive.

\section{Preparation noise, algorithms and preserved information}
\label{app:noise86}
\subsection{Exact circuit-family objective}
For a fixed selected set $S$, prepare graph-state runs, put deleted
sites in $|0\rangle$, and apply $Z^g$ to every selected site.
A $ZZ$ error after a retained edge flips precisely its two incident
generator expectations. Independent errors therefore give
Eq.~\eqref{eq:noisy-contrast86}. An unselected expectation remains zero.
The model assumes $0\leq\gamma_e,v_i\leq1$.

Reporting noise is a classical postprocessing of the parity report.
Writing $v_i$ as a product of participating-site factors is an error
budget model; it does not require destructive local measurements on
the retained memory. For path queries and site rates $p_j$,
$v_i=\prod_{j\in{\rm supp}(K_i)}(1-2p_j)$ has at most three factors.
The nonselective detector contract and the noisy report remain
distinct specifications.

\subsection{Dynamic programming}
After deciding the first $i$ sites, retain the best partial score for
each current run length $\ell\in\{0,\ldots,k\}$.
When choosing the next bit $b\in\{0,1\}$, finalize the previous site's
contribution
\[
 \mathbf1\{\ell>0\}\,w_i v_i\,
 \gamma_{i-1}^{\mathbf1\{\ell\geq2\}}\,
 \gamma_i^{b}.
\]
The new length is $\ell+1$ for $b=1$ and zero for $b=0$.
Reject lengths exceeding $k$, and append a final zero to close the
last run. Optimal substructure follows because future contributions
depend only on the current run length. This gives $O(dk)$ transitions;
backpointers give an implementation without growing-mask arithmetic.
Our small-instance implementation retains integer masks for deterministic
ties, whose bit-operation cost is additional to the transition count.
The all-odd control rejects a zero that closes a positive even run.
The primary family requires only one odd run: retain an additional
Boolean flag, set it whenever a positive odd run closes, and accept
only final states with that flag set. Even runs remain allowed.
This doubles the state count and preserves the $O(dk)$ transition bound.

An independent verifier uses interval scheduling: at site $i$, either
skip it, or retain a run of allowed length $\ell$ and continue after
its separator. It explicitly sums that run's weighted noisy contrasts.
Agreement of these two recurrences checks optimization separately
from direct Born-rule evaluation of the physical model.

\subsection{Information equality after preparation errors}
For a retained odd component $B$, let $W_B=\prod_{i\in B}K_i$.
The codeword before noise satisfies
$W_B|\psi_g\rangle=(-1)^{g|B|}|\psi_g\rangle=(-1)^g|\psi_g\rangle$.
Every retained-edge $ZZ$ commutes with $W_B$, since it flips two
constituent stabilizer signs. The noisy codewords remain in opposite
$W_B$ eigenspaces. They therefore preserve the exact classical
experiment $G$, including for mixed noisy states.
One odd component suffices; requiring all components odd is a clean
stronger control. The support argument permits correlated mixtures
or coherent operations generated by the $ZZ$ errors, although the
contrast product assumes independent errors.

A fixed nonempty architecture is required. A hidden mixture over
architectures can have different distinguishing observables and
need not preserve the bit. In the experiment, training selects one
architecture and records it before future trials.
For an even two-site component,
$\tau_g'=(1-p)\tau_g+p\tau_{1-g}$, so the codewords overlap at
$0<p<1/2$. At $p=.2$, their trace overlap is $.32$.
This counterexample is retained beside the odd-code control.

\subsection{Learning with calibrated parameters}
Within any finite selected-mask class $\mathcal A$, known parameters
give the same bound as Eq.~\eqref{eq:erm86}, with $|\mathcal A|\leq2^d$.
It compares with the best circuit in that family, not every noisy
physical encoder. If independent calibration yields parameters in
$[0,1]$ satisfying
\[
 |v_i-\widehat v_i|
 +\sum_{e\ni i}|\gamma_e-\widehat\gamma_e|\leq e_i,
\]
then $|b_i(S)-\widehat b_i(S)|\leq e_i$ uniformly over masks.
Adding and subtracting risks under the estimated model adds
$D\sum_i\pi_i e_i$ to the excess bound. Clipping estimated parameters
to $[0,1]$ preserves the absolute-error guarantees.
The original experiments use known parameters; they do not estimate
these calibration terms or provide a device-valid certificate.

\subsection{A homogeneous preparation-error threshold}
For uniform queries on a long path, a retained run of length $\ell$
followed by one deleted vertex has contrast density
$v f_\ell/(\ell+1)$, where $f_1=1$ and
$f_\ell=2\gamma+(\ell-2)\gamma^2$ for $\ell\geq2$.
For those longer runs,
\[
 \frac{f_\ell}{\ell+1}
 =\gamma^2+\frac{\gamma(2-3\gamma)}{\ell+1}.
\]
For $\gamma\leq2/3$, every longer run is inferior to singleton
density $1/2$. For $\gamma>2/3$, its density increases with length.
At cap $k$, the optimal asymptotic density is therefore the larger
of $1/2$ and $[2\gamma+(k-2)\gamma^2]/(k+1)$.
Mixtures of run types average these densities with weights
proportional to $\ell+1$, including separators; extra deleted sites
contribute zero. Boundary terms vanish with path length.

For odd runs, replace $k$ by the largest allowed odd length $K$.
At $K=3$, the switch occurs at $\gamma=\sqrt3-1$, or
$\epsilon=(2-\sqrt3)/2$, when $Dv>0$.
The full odd chain has asymptotic contrast $v\gamma^2$.
Under homogeneous noise and no depth cap, small intermediate groups
do not improve on both the product and full-chain alternatives.
Heterogeneous costs permit the more selective choices tested in the
workload study. This is a statement about the specified preparation
channel, not a general runtime or gate-infidelity threshold.

\FloatBarrier
\section{Prediction under logarithmic loss}
\label{app:logloss90}
Logarithmic loss evaluates a probability assigned to the realized label,
after observing the realized detector output. For a code pair $\tau$, define
\[
 L_\pi(\tau)=\inf_q\mathbb E[-\log_2 q(C\mid T,Y)]
             =H(C\mid T,Y),\qquad
 J_\pi(\tau)=1-L_\pi(\tau)=I(C;Y\mid T).
\]
Here $q$ ranges over all classical probabilistic decoders, $T$ is
independent of $C$, and the expectation includes the stochastic detector.
This is an operational restricted-prediction information quantity. It
uses the same log-loss principle as predictive $\mathcal V$-information
\citep{xu2020usable}, with the physical experiment stated explicitly.
It does not score a forecast obtained by averaging unobserved detector
outputs before observing a label.

\begin{corollary}[Optimal logarithmic loss]
\label{cor:logloss90}
Under the assumptions of Theorem~\ref{thm:region86}, with symmetric
report visibility $v_t$, put
$\beta_t=1-H_2((1-Dv_t)/2)$. Then
\begin{equation}
 \inf_{\tau_0,\tau_1\in\mathcal S_{\mathcal F}}L_\pi(\tau)
 =1-\max_{\Pi\in\mathcal F,\ I\in\mathcal I(H_\Pi)}
                                      \sum_{t\in I}\pi_t\beta_t.
 \label{eq:logloss90}
\end{equation}
An existing projector code attains the optimum and preserves $G$.
\end{corollary}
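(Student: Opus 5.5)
Because $T$ is independent of $C$, the log-loss objective decomposes query by query: $J_\pi(\tau)=\sum_t\pi_t I(C;Y\mid T=t)$. The plan is therefore to write each query's information as a function of the contrast $c_t$ alone, and to show that this function is convex and increasing. Once that is done, the region characterization of Theorem~\ref{thm:region86} reduces the optimization to its vertices, just as in the binary-risk case.

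First I compute the per-query channel. By Appendix~\ref{app:detector86} and the symmetric-noise model, the report effects are $(I\pm v_tP_t)/2$. Under codeword $\tau_g$, the report $+$ therefore has probability $(1+v_t s_g)/2$, where $s_g=\Tr(P_t\tau_g)$. Mixing over $G=C\oplus Z$ gives
\[
 \Pr(Y=+\mid C=c)=\tfrac12+\tfrac{v_t}{4}\bigl[(s_0+s_1)+(-1)^cD(s_0-s_1)\bigr].
\]
A symmetric-mean term $\mu=(s_0+s_1)/2$ remains. To remove it, I use the Pauli $Q$ from the single-state reduction in Appendix~\ref{app:region86}: replace $(\tau_0,\tau_1)$ by the pair $\bigl((\tau_0+Q\tau_1Q^\dagger)/2,\ (\tau_1+Q\tau_0Q^\dagger)/2\bigr)$. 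This replacement keeps both states in $\mathcal S_{\mathcal F}$ and keeps every contrast $c_t$, while setting $\mu=0$.

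I need this replacement not to decrease the information. It acts as a mixing of the two channels indexed by $\mu$ and $-\mu$, and mutual information is convex in the channel for a fixed input, so the information does not increase under mixing. That is the wrong direction, and it is the main obstacle. The fallback is to bound the information directly. With uniform $C$, the two output distributions are Bernoulli with parameters $a\pm b$, where $a=\tfrac12(1+v_t\mu)$ and $b=v_tDc_t/2$ (absolute values). The information is $H_2(a)-\tfrac12[H_2(a+b)+H_2(a-b)]$. I would show this is maximized at $a=1/2$ over the feasible range $|a-\tfrac12|\le \tfrac12-b$ implied by $|s_g|\le1$; equivalently, binary symmetric input–output is optimal for a fixed difference. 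The maximum value is then $1-H_2((1-v_tDc_t)/2)=\phi_t(c_t)$. I expect the concavity of $H_2$ to give this: the second difference of $H_2$ is smallest in magnitude at the center, since $H_2''=-1/(x(1-x)\ln2)$ is largest in magnitude at the edges. This gives $I_t\le\phi_t(c_t)$, and equality holds when $\mu=0$, which the projector codes satisfy because their selected expectations are exactly $\pm1$.

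Next, $\phi_t(c)=1-H_2((1-v_tDc)/2)$ is convex and increasing in $c\in[0,1]$, with $\phi_t(0)=0$ and $\phi_t(1)=\beta_t$. Convexity holds because $1-H_2((1-x)/2)$ is convex in $x$. Hence $\sum_t\pi_t\phi_t(c_t)$ is a convex function on the polytope $\mathcal C_{\mathcal F}$, and its maximum is attained at a vertex. By Theorem~\ref{thm:region86}, the vertices are the independent-set indicators $\mathbf 1_I$ of the graphs $H_\Pi$, where the value is $\sum_{t\in I}\pi_t\beta_t$. Combining the upper bound with attainment by $\tau^I_g$ of Eq.~\eqref{eq:projector-code86} yields Eq.~\eqref{eq:logloss90}. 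A maximizing indicator can always be taken nonempty, since every $\beta_t\ge0$ and adding any single query $t$ never lowers the objective. Its projector code therefore has orthogonal supports and preserves $G$.
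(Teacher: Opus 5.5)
\textbf{The per-query bound in your proof is false.} You claim that, at fixed contrast, $H_2(a)-\tfrac12[H_2(a+b)+H_2(a-b)]$ is maximized at $a=1/2$. It is in fact \emph{minimized} there, so output bias increases information at fixed contrast. Your own heuristic points this way: for small $b$ the quantity is approximately $\tfrac{b^2}{2}|H_2''(a)|=b^2/[2a(1-a)\ln2]$, which is smallest at the center.

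A concrete counterexample lies inside every partition class. Take the single query $Z_1$, $h=0$, $v_t=1$, $\tau_0=|0\cdots0\rangle\langle0\cdots0|$ and $\tau_1$ maximally mixed. Then $s_0=1$, $s_1=0$ and $c_t=1/2$. $Y$ is deterministic under $C=0$ and uniform under $C=1$, so $I(C;Y\mid T=t)=H_2(3/4)-1/2\approx0.311$ bits. Your bound gives only $\phi_t(1/2)=1-H_2(1/4)\approx0.189$ bits. This is the paper's own remark that contrast does not determine log loss: the pairs $(1/2,-1/2)$ and $(1,0)$ share $c=1/2$ but have log losses $0.811$ and $0.689$ bits. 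So the premise of your plan cannot be established, namely that each query's information depends on $c_t$ alone and is bounded by $\phi_t(c_t)$. The convex-maximization step is then applied to a function that does not majorize the objective.

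\textbf{The missing idea is a linear majorant, which is the paper's route.} Orient the sharp parity so that $a_0\ge a_1$. Write the channel from $G$ to $S$ as a mixture of three branches, all with nonnegative weights because $|a_g|\le1$: the faithful branch $S=(-1)^G$ with weight $c_t$, the constant $+1$ branch with weight $(1+\mu-c_t)/2$, and the constant $-1$ branch with weight $(1-\mu-c_t)/2$. Revealing the branch label cannot decrease information. The faithful branch, composed with the source crossover and report noise, is a binary symmetric channel from $C$ with correlation $Dv_t$ and contributes $\beta_t$. The constant branches contribute nothing. Hence $I(C;Y\mid T=t)\le c_t\beta_t$.

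This bound is the chord of your convex $\phi_t$, so it dominates $\phi_t$ and agrees with it at $c_t\in\{0,1\}$. That is why your final formula is still correct. Because the bound is linear in $c$, Theorem~\ref{thm:region86} finishes the argument directly and no convexity is needed. Your decomposition of $J_\pi$ over queries, the attainment by the projector code, and the nonempty-indicator remark then go through unchanged.
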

\begin{proof}
Fix a query and orient its sharp parity $S\in\{+1,-1\}$ so that
$a_0=\mathbb E[S\mid G=0]\geq a_1=\mathbb E[S\mid G=1]$.
Set $c=(a_0-a_1)/2$ and $\mu=(a_0+a_1)/2$. The binary channel
from $G$ to $S$ is a mixture of three channels: $S=(-1)^G$
with weight $c$, constant $+1$ with weight $(1+\mu-c)/2$,
and constant $-1$ with weight $(1-\mu-c)/2$. The weights are
nonnegative because $|a_g|\leq1$. Introduce the mixture selector
independently of $(C,G)$. Reveal that choice as an extra output.
The informative branch, composed with the input crossover and
report noise, is a binary symmetric channel of correlation $Dv_t$;
the constant branches carry no information. Discarding the extra
output cannot increase information, so
\begin{equation}
 I(C;Y\mid T=t)\leq c_t\beta_t.
 \label{eq:information-majorant90}
\end{equation}
Theorem~\ref{thm:region86} bounds the weighted sum by its best
independent-set indicator. For the projector code of that set,
selected parities equal $(-1)^G$, and independence of the generators
makes every unselected parity uniform and independent of $G$.
Thus every selected query attains $\beta_t$ and every unselected
query contributes zero. If the optimal value is zero, any permitted
nonempty singleton code is also optimal and preserves $G$.
\end{proof}

\subsection{What contrast does and does not determine}
The corollary optimizes over codewords; it does not give the log loss
of every code from its contrast alone. With $h=0$, $v=1$ and $c=1/2$,
the expectation pairs $(a_0,a_1)=(1/2,-1/2)$ and $(1,0)$ both
have binary prediction error $1/4$, but their optimal log losses
are $0.811278$ and $0.688722$ bits. Both can be embedded in
fully separable, information-preserving encodings by adding an
orthogonal unmeasured flag. Output bias matters for codewise
information; Eq.~\eqref{eq:information-majorant90} handles it.

The loss can also change the chosen representation. On two qubits,
let the queries be $XX$ and $ZZ$, restrict to product codes, and
take $h=0$, $\pi=(0.3,0.7)$ and $v=(1,0.5)$. The feasible
indicator choices support either query. Accuracy chooses $ZZ$:
its weighted contrast is $0.35$, compared with $0.30$ for $XX$.
Log loss chooses $XX$: its information is $0.30$ bits, compared
with $0.7[1-H_2(3/4)]=0.132105$ bits for $ZZ$.
Both chosen codes store the original bit perfectly.

\subsection{Learning under logarithmic loss}
With sharp reports, $\beta_t=J_0=1-H_2(h)$ for every query.
The learning laws in Theorem~\ref{thm:learning86} then hold for
log-loss excess with $J_0$ replacing $D$, up to universal factors.
For the upper bound, empirical maximization returns the same indicator
code and its log-loss excess is $2J_0/D$ times its binary-risk excess.
For the lower bound, Eq.~\eqref{eq:information-majorant90} gives
every physical learner information at most $J_0\pi\cdot c$.
The workload-independent complete-class reductions in
Appendices~\ref{app:geometry86}--\ref{app:modular86} and
\ref{app:growing-depth90} replace that
contrast by a randomized indicator with no smaller expected score
on the hard family. The same testing lower bounds apply, with the
same rescaling. Thus the result covers asymmetric mixed codewords
as well as the attaining projector codes. Near $h=1/2$, $J_0$
is quadratic in $D$; the two loss scales are not interchangeable.

\section{A sharp path law when entanglement depth grows}
\label{app:growing-depth90}
All reports in this section are sharp. The comparator is the full
depth-$k$ physical encoder class of Section~\ref{sec:learning86},
not the one-odd circuit family used for noisy preparation.
All logarithms in rates and divergences are natural.
\begin{theorem}[Growing entanglement depth on a path]
\label{thm:growing-depth90}
For $d\geq3$, $2\leq k\leq d-1$, and $m\geq1$,
\begin{equation}
 \frac{D}{128k}\min\left\{1,\sqrt{\frac{d\log(k+1)}m}\right\}
 \leq \mathfrak M_m(P_d,k)
 \leq \frac{16D}{k}\min\left\{1,\sqrt{\frac{d\log(k+1)}m}\right\}.
 \label{eq:growing-depth90}
\end{equation}
The constants are uniform in $d,k,m$. The upper bound is attained by
a learner returning information-preserving codes. Its empirical
optimization costs $O(d)$ arithmetic operations after counting the
requests. The lower bound permits arbitrary mixed physical encodings.
For $k\geq d$, allocation excess is zero.
\end{theorem}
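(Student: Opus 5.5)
The plan is to prove the two inequalities in Eq.~\eqref{eq:growing-depth90} separately, working throughout with \emph{deleted} mass. By Appendix~\ref{app:region86} (graph components and depth), the depth-$k$ physical optimum on $P_d$ is $h+(D/2)\min_U\pi(U)$. Here $U$ ranges over deletion sets leaving no run of more than $k$ consecutive retained vertices. By Theorem~\ref{thm:region86} every such $U$ has a lossless projector code, so the excess of any learner outputting $\widehat U$ is $(D/2)[\pi(\widehat U)-\pi(U^*)]$.

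\emph{Oracle scale.} First I establish the key structural fact: $\pi(U^*)\le 1/(k+1)$ for every workload. The $k+1$ periodic deletion sets $U_s=\{i: i\equiv s \bmod (k+1)\}$ are all feasible, and they partition the vertices, so their average deleted mass is $1/(k+1)$. This already gives the constant branch of the upper bound. A learner that ignores the data and picks $s$ uniformly has expected excess at most $D/(2(k+1))\le D/k$. I use this learner when $x:=d\log(k+1)/m\ge1$.

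\emph{Localized upper bound for $x<1$.} I use empirical minimization of $\widehat\pi(U)$ over \emph{minimal} deletion sets, as in Appendix~\ref{app:completion90}. For a nonnegative workload these contain an optimum. Minimality forces every gap between consecutive deletions to be at most $k$, and any two consecutive gaps to sum to at least $k$. Hence a minimal set has at most $2d/(k+1)+2$ deletions, and it is encoded by a sequence of gaps in $\{0,\dots,k\}$. This gives $\log N\lesssim (d/k)\log(k+1)$. I then apply a Bernstein (relative-deviation) bound uniformly over the class:
\[
|\widehat\pi(U)-\pi(U)|\le\sqrt{2\pi(U)L/m}+L/m,\qquad L=\log(N/\delta).
\]
Comparing $\widehat U$ with $U^*$ and solving the resulting self-bounding quadratic in $\sqrt{\pi(\widehat U)}$ gives
\[
\pi(\widehat U)-\pi(U^*)\lesssim \sqrt{\pi(U^*)L/m}+L/m .
\]
Inserting $\pi(U^*)\le 1/(k+1)$ and $L/m\approx x/k$ yields $(1/k)(\sqrt{x}+x)\lesssim \sqrt{x}/k$ when $x<1$. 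Integrating the tail over $\delta$ gives the expectation bound, and I then track constants to reach $16$.

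For the $O(d)$ claim, I compute the empirical optimum with the recurrence $F(i)=\widehat\pi_i+\min_{i-k-1\le j<i}F(j)$, where $F(i)$ is the minimal deleted mass of a feasible prefix whose last deletion is at $i$. A monotone deque makes the window minimum cost amortized $O(1)$. Ties are then completed to a maximal retained set.

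\emph{Lower bound.} Let $r=\lfloor (d+1)/(k+2)\rfloor\ge1$, which uses $d\ge k+1$. I place $r$ active segments of $k+1$ consecutive vertices, separated by single zero-probability connectors; since $r\gtrsim d/k$ the segments exhaust a constant fraction of the path. As in Appendix~\ref{app:fixed-depth86}, the projected region is a product of uniform-matroid polytopes $\{\sum c\le k\}$. The workload-independent reduction (fill each segment to size $k$) turns any mixed physical learner into one choosing a single omitted vertex per segment. In segment $j$, a hidden index $\sigma_j\in\{1,\dots,k+1\}$ receives mass $(1-\theta)/((k+1)r)$, and the other vertices share the remaining mass equally, at about $(1+\theta/k)/((k+1)r)$ each. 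A wrong choice costs about $D\theta/(2(k+1)r)$. Pairwise KL divergences per sample are $O(\theta^2/((k+1)r))$. A coordinatewise Fano argument (multi-ary Assouad, with a uniform prior on each $\sigma_j$) gives per-segment error probability at least $1/2$ once $m\theta^2/((k+1)r)\le c\log(k+1)$. Choosing $\theta=\min\{1/2,\,c'\sqrt{kr\log(k+1)/m}\}$ and summing over the $r$ segments gives excess of order $(D/k)\min\{1,\sqrt{d\log(k+1)/m}\}$.

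The step I expect to be hardest is this Fano step with uniform constants down to $k=2$, where $\log(k+1)$ is small and Fano is weak. For small $k$ I would fall back on the binary Assouad bound of Appendix~\ref{app:fixed-depth86}, which already gives the rate there. The localized upper bound's counting of minimal sets is the second delicate point, and it is what makes the constant $16$ attainable.
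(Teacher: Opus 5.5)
Your proposal follows the paper's proof essentially step for step. The upper bound uses periodic averaging for $r_*\le 1/(k+1)$, Bernstein-localized empirical minimization over inclusion-minimal deletion sets with $\log N\lesssim (d/k)\log(k+1)$ (plus the data-free random periodic learner once $d\log(k+1)/m>1$), and the sliding-window deque recurrence. The lower bound uses $r=\lfloor (d+1)/(k+2)\rfloor$ separated $(k+1)$-segments with a hidden least-requested vertex, the fill-to-$k$ reduction and coordinatewise Fano.

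The Fano step you flagged as hardest needs no small-$k$ fallback. Aim for per-segment error $1/8$ rather than $1/2$ by capping the conditional information at $\log(k+1)/8$. In your parametrization the chi-squared divergence to the segment-uniform auxiliary is $m\theta^2/(kr)$, so take $\theta=\min\{1/2,\sqrt{kr\log(k+1)/(8m)}\}$. Fano then gives error at least $1-\tfrac18-\log 2/\log(k+1)\geq\tfrac18$ already at $k=2$, and the stated $D/(128k)$ constant follows uniformly in $k$. By contrast, the fixed-depth appendix bound, with its generic packing constant $C_{k,\Delta}$, would give the rate but not that constant at small $k$.
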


\subsection{Upper bound: the oracle sacrifices little mass}
A feasible indicator leaves no selected run longer than $k$.
Write $J$ for its deleted vertices and $\ell_J(i)=\mathbf1\{i\in J\}$.
For an inclusion-minimal deletion set with $t$ vertices, let
$l_0,\ldots,l_t$ be its intervening selected-run lengths, including
possibly empty endpoint runs. Minimality implies
$l_{j-1}+l_j\geq k$ at each deleted vertex. Hence
$tk\leq2\sum_jl_j=2(d-t)$, or $t\leq2d/(k+2)$.
If $N$ is the number of these deletion sets and
$q=\lfloor2d/(k+2)\rfloor\leq d/2$, then
\[
 \log N\leq \log\sum_{j=0}^q\binom dj
 \leq q\log(ed/q)
 \leq \frac{4d}{k}\log(k+1).
\]
Deleting one residue class modulo $k+1$ is feasible. Averaging its
deleted workload mass over the $k+1$ offsets shows that the optimal
deletion loss $r_* = \min_J\pi(J)$ is at most $1/(k+1)$.

Let $J_*$ minimize population deletion loss and let $\widehat J$
minimize empirical deletion loss over the minimal deletion sets.
For a fixed candidate with excess $\delta=\pi(J)-r_*>0$, put
$Z=\ell_J(T)-\ell_{J_*}(T)$. Then $\mathbb EZ=\delta$,
$|Z|\leq1$, and $\operatorname{Var}Z\leq\delta+2r_*$.
Bernstein's inequality therefore gives the conservative bound
\[
 \Pr(\widehat\pi(J)\leq\widehat\pi(J_*))
 \leq\exp\left(-\frac{m\delta^2}{4r_*+4\delta}\right).
\]
For $x=\log N+t$, a union bound implies
\[
 \Pr\left(\pi(\widehat J)-r_*>
          \frac{4x}{m}+2\sqrt{\frac{r_*x}{m}}\right)\leq e^{-t}.
\]
Integration of this tail, followed by Jensen's inequality, yields
\[
 \mathbb E[\pi(\widehat J)-r_*]
 \leq\frac{4(\log N+1)}m+
          2\sqrt{\frac{r_*(\log N+1)}m}.
\]
Since $\log N+1\leq(5d/k)\log(k+1)$, writing
$x=d\log(k+1)/m\leq1$ bounds this expectation by
$(20x+2\sqrt5\sqrt x)/k\leq25\sqrt x/k$.
Binary prediction excess is $D/2$ times deletion excess.
When $x>1$, instead sample a uniform periodic deletion offset;
its expected excess is at most $D/[2(k+1)]$. These two learners
give the stated upper bound, with a choice determined by known
$d,k,m$.

For empirical optimization, add virtual deleted positions $0,d+1$
of weight zero. A valid next deletion lies at distance at most $k+1$.
The shortest-path recurrence is
\[
 f(j)=w_j+\min_{\max(0,j-k-1)\leq i<j} f(i).
\]
A monotone deque maintains the sliding minimum in amortized constant
time per vertex. Optimize lexicographically by deletion weight and
then number of real deleted vertices. Any removable zero-weight
deletion would improve this second objective, so the resulting set
is inclusion-minimal. Backtracking takes $O(d)$ time. Each selected
indicator has the lossless projector code of
Theorem~\ref{thm:region86}; no odd-component restriction is needed
in this noiseless construction.

\subsection{Lower bound: identify the least requested query}
Set $r=\lfloor(d+1)/(k+2)\rfloor\geq d/(4k)$.
Place $r$ disjoint $(k+1)$-vertex path segments, separated by single
zero-weight vertices. All unused vertices also have weight zero.
In segment $j$, one unknown vertex $U_j\in\{1,\ldots,k+1\}$ has
weight $(1-k\theta)/[r(k+1)]$; each other vertex has weight
$(1+\theta)/[r(k+1)]$. Choose
\[
 \theta=\min\left\{\frac1{2k},
              \sqrt{\frac{r\log(k+1)}{8mk}}\right\}.
\]
These workloads are nonnegative and sum to one. Each physical
contrast projects into the product of the segment polytopes
$\{c\in[0,1]^{k+1}:\sum_i c_i\leq k\}$: a full connected
segment cannot fit in a depth-$k$ block, and every proper subset
is feasible. Decompose each projected contrast into a randomized
indicator and complete it to exactly $k$ selected vertices per
segment, setting every connector to zero. This transformation
depends on the learner's output but not on the unknown workload.
It gives a randomized feasible physical indicator whose conditional
expected score is no smaller for every workload in the construction. The unique omitted vertex
in each segment is therefore an estimate $\widehat U_j$.
Every error $\widehat U_j\ne U_j$ costs excess $D\theta/(2r)$.

Give the $U_j$ independent uniform priors. To bound the information
about one coordinate, reveal all the other coordinates. Compare
each resulting request distribution with an auxiliary distribution
that makes segment $j$ uniform and leaves the other segments
unchanged. Its one-sample chi-squared divergence is
$k\theta^2/r$. KL is no greater, so the information in $m$
requests about $U_j$, even with the revealed coordinates, is at
most $mk\theta^2/r\leq\log(k+1)/8$.
Fano's inequality \citep{yu1997} gives
\[
 \Pr(\widehat U_j\ne U_j)
 \geq 1-\frac18-\frac{\log2}{\log(k+1)}\geq\frac18.
\]
Thus Bayes excess, and hence minimax excess, is at least $D\theta/16$.
The bound $r\geq d/(4k)$ gives
\[
 \theta\geq\frac1{\sqrt{32}\,k}
          \min\{1,\sqrt{d\log(k+1)/m}\},
\]
which implies the stated constant $1/128$.

The theorem distinguishes absolute error from resolving the remaining
allocation choices. Depth $k$ reduces worst-case excess to order
$1/k$, even without requests. Learning a sufficiently small fixed fraction of that scale
requires order $d\log(k+1)$ requests. The fixed-depth path law is
recovered when $k$ is held constant.

\paragraph{Relation to sparse-loss prediction.}
When $k=d-1$, every maximal allocation omits one vertex. A query then
assigns loss one to its own omission action and zero to every other
action: the loss vector is one-hot. This is the sparse-loss setting
studied by \citet{kwon2016sparse}. Their full-information regret bound
(Theorem~4) gives $O(\sqrt{m\log(d)/d}+\log d)$ for this case;
online-to-batch conversion recovers the large-sample upper rate
$O(\sqrt{\log(d)/(dm)})$ when $m\geq d\log d$.
The small-loss improvement is therefore a classical ingredient.
Our theorem handles coupled deletion choices at every depth, gives
uniform finite-sample bounds, constructs physical attaining encodings,
and includes arbitrary mixed encoders in its lower bound.

\section{Completing an empirical allocation}
\label{app:completion90}
Let $\mathcal A$ be a finite downward-closed family of query sets,
with nonnegative empirical weights $\widehat w$ and population
weights $w$. If $\widehat S$ maximizes empirical weight over
$\mathcal A$, every feasible inclusion-maximal extension
$M\supseteq\widehat S$ satisfies
\[
 \widehat w(M)=\widehat w(\widehat S),\qquad
 w(M)\geq w(\widehat S).
\]
The first claim follows because a strict increase would contradict
empirical optimality; the second follows from nonnegativity.
Every added coordinate has zero empirical weight. Consequently,
full-feasible and maximal-set ERM are not intrinsically different
statistical problems. The smaller maximal class gives a tighter
analysis and a useful tie rule. A poor tie rule can nonetheless
create an arbitrarily large sample-requirement penalty.

For an exact example, take $K_{a,a}$ at depth one, with total
probabilities $3/4$ and $1/4$ on the two colors, uniform within
each color. Let $K\sim\mathrm{Binomial}(m,3/4)$ and break color
ties toward the first color. An ERM retaining only observed vertices
of the winning color has expected captured mass
\[
 \mathbb E\left[
 \mathbf1\{2K\geq m\}\frac34\{1-(1-1/a)^K\}
 +\mathbf1\{2K<m\}\frac14\{1-(1-1/a)^{m-K}\}\right].
\]
Its excess is $(D/2)(3/4-\text{expected captured mass})$. Completing
the winning color instead has expected excess
$(D/4)\Pr(2K<m)\leq(D/4)e^{-m/8}$, independent of $a$.
The sparse rule's excess is at least
$(3D/8)(1-m/a)$. The experiment compares these exact expectations
with independently sampled counts; it does not claim a new minimax
separation between the two feasible classes.

\section{Two statistical resources under preparation noise}
\label{app:two-resources90}
Let $\mathcal A$ be the finite one-odd path circuit family, of size
$N$, and let $b_i(S)=\mathbf1\{i\in S\}v_i\prod_{e\ni i,\ e\text{ retained}}\gamma_e$.
Calibration supplies $B$ independent Bernoulli observations for each
edge error probability and each native report-success probability.
Clip the resulting estimates of $\gamma_e$ and $v_i$ to $[0,1]$.
The learner maximizes $\widehat\pi\cdot\widehat b(S)$ using
$m$ independent request identities, independently of calibration.
Suppose the true effective contrasts $b_i^{\rm true}(S)\in[0,1]$
also satisfy
\[
 \sup_{S\in\mathcal A}\sum_i\pi_i
       |b_i^{\rm true}(S)-b_i(S)|\leq\beta.
\]
The mismatch condition concerns prediction contrasts. It does not
assert information preservation under an unspecified true channel.
We define its Bayes risk as
$R^{\rm true}_\pi(S)=1/2-(D/2)\pi\cdot b^{\rm true}(S)$;
the proposition does not bound an arbitrary estimated decoder.

\begin{proposition}[Requests, calibration, and model error]
\label{prop:two-resources90}
For the raw empirical maximizer, expected excess over the best
member of the same circuit family under the true contrasts obeys
\begin{equation}
 \mathbb E[R^{\rm true}_\pi(\widehat S)-\min_{S\in\mathcal A}R^{\rm true}_\pi(S)]
 \leq D\min\left\{\frac12,
       \sqrt{\frac{\log(2N)}{2m}}+\frac3{\sqrt B}+\beta\right\}.
 \label{eq:two-resources90}
\end{equation}
The calibration cost is $B(2d-1)$ scalar observations, in addition
to the $m$ requests.
\end{proposition}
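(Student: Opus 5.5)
Write $F(S)=\pi\cdot b^{\rm true}(S)$ for the true captured contrast, $M(S)=\pi\cdot b(S)$ for the modelled one, and $\widehat F(S)=\widehat\pi\cdot\widehat b(S)$ for the empirical objective. Let $S^\dagger$ maximize $F$ over $\mathcal A$. Since $R^{\rm true}_\pi(S)=1/2-(D/2)F(S)$, the excess equals $(D/2)[F(S^\dagger)-F(\widehat S)]$. We add and subtract the modelled and empirical objectives at both $S^\dagger$ and $\widehat S$, and use $\widehat F(\widehat S)\geq\widehat F(S^\dagger)$. This gives
\[
 F(S^\dagger)-F(\widehat S)\leq 2\beta+2\sup_{S}|M(S)-\pi\cdot\widehat b(S)|
 +2\sup_S|(\pi-\widehat\pi)\cdot\widehat b(S)|.
\]
The first pair of terms comes from the mismatch assumption, applied once at each of the two masks. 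The cap $D/2$ is immediate because risks lie in $[h,1/2]$, with $R\geq h$ as shown after Theorem~\ref{thm:region86}. Hence the excess is at most $D/2$. It then suffices to bound the expected values of the three terms by $\sqrt{\log(2N)/(2m)}$, $3/\sqrt B$ and $\beta$ after multiplying by $D/2$.

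\textbf{Request term.} Conditional on calibration, $\widehat b(S)$ is a fixed vector in $[0,1]^d$, because the estimates are clipped and products of numbers in $[0,1]$ stay in $[0,1]$. The calibration data are independent of the requests. So the finite-class maximal-deviation bound of Appendix~\ref{app:learning86} applies over the $N$ masks, using Hoeffding with scores $\widehat b_T(S)\in[0,1]$. It gives $\mathbb E\sup_S|(\widehat\pi-\pi)\cdot\widehat b(S)|\leq\sqrt{\log(2N)/(2m)}$. Multiplying by $2\cdot D/2$ yields the first term exactly.

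\textbf{Calibration term.} This is the main obstacle. We must bound $\mathbb E\sup_S\sum_i\pi_i|b_i(S)-\widehat b_i(S)|$ by $3/\sqrt B$ without any union over masks or any $\log N$ factor. The plan is to reuse the product-perturbation inequality of Appendix~\ref{app:noise86}: $|b_i(S)-\widehat b_i(S)|\leq|v_i-\widehat v_i|+\sum_{e\ni i}|\gamma_e-\widehat\gamma_e|$, uniformly over masks. This holds because every factor lies in $[0,1]$, and the right side does not depend on $S$, so the supremum disappears. Each path vertex has at most two incident edges. Taking expectations, $\mathbb E|\widehat v_i-v_i|\leq\sqrt{\operatorname{Var}}\leq1/(2\sqrt B)$ for a clipped Bernoulli mean. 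For the edges we use $\gamma_e=1-2\epsilon_e$, so $\mathbb E|\widehat\gamma_e-\gamma_e|\leq 2\cdot1/(2\sqrt B)=1/\sqrt B$. Clipping only reduces these errors. Summing gives per-vertex error at most $(1/2+2)/\sqrt B\leq 3/\sqrt B$, and averaging against $\pi$ preserves this bound. If needed, the tighter $5/(2\sqrt B)$ absorbs slack from the factor $2\cdot D/2=D$. The delicate points are that the three terms appear with coefficient $D$ rather than $D/2$ after the two-sided comparison, and that we must check the constant survives.

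\textbf{Counting and cleanup.} The calibration cost is immediate. There are $d-1$ edges and $d$ visibilities, hence $B(2d-1)$ observations. Combining the three expectations, multiplying by $D$, and taking the minimum with the trivial cap gives Eq.~\eqref{eq:two-resources90}.
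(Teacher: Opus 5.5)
Your proposal is correct and follows the paper's proof essentially step for step. Like the paper, you form a uniform score error from three pieces: the Hoeffding maximal-deviation request term, the mask-independent telescoping bound $|b_i(S)-\widehat b_i(S)|\leq|v_i-\widehat v_i|+\sum_{e\ni i}|\gamma_e-\widehat\gamma_e|$, and $\beta$. You then double it for empirical maximization, multiply by $D/2$, and cap at $D/2$ because the true contrasts lie in $[0,1]$.

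One small correction. In the paper's setup, calibration observes the report-success probability $(1+v_i)/2$, so the estimate $\widehat v_i=2\widehat p_i-1$ has expected error up to $1/\sqrt B$, not $1/(2\sqrt B)$. The total is therefore exactly $3/\sqrt B$, as in the paper, and your aside about a tighter $5/(2\sqrt B)$ does not hold. The stated bound is unaffected.
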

\begin{proof}
Conditional on calibration, each function $i\mapsto\widehat b_i(S)$
takes values in $[0,1]$. The exponential-moment form of Hoeffding's
inequality gives expected uniform request deviation at most
$\sqrt{\log(2N)/(2m)}$. Telescoping products in $[0,1]$ gives
\[
 |b_i(S)-\widehat b_i(S)|\leq
 |v_i-\widehat v_i|+\sum_{e\ni i}|\gamma_e-\widehat\gamma_e|.
\]
Clipping cannot increase estimation error. Each estimated factor
has expected absolute error at most $1/\sqrt B$, by its binomial
variance and Cauchy--Schwarz. Each query has at most two incident
edges, so the expected workload-weighted deviation is at most
$3/\sqrt B$. Adding model mismatch gives the displayed sum as
an expected uniform score error. Empirical maximization costs at
most twice that error; conversion to risk multiplies by $D/2$.
All true contrasts lie in $[0,1]$, so excess is also at most $D/2$.
\end{proof}
The guarantee has no sharp claim about constants or graph-dependent
calibration complexity. It is for raw counts; the smoothed methods
are separately evaluated estimators. For half-count smoothing of
request counts, with calibration estimates unchanged,
the same proof adds a deterministic score perturbation at most
$d/(2m+d)$ inside the parentheses. This bound is deliberately
conservative and need not certify a small observed improvement.

\subsection{A sharp joint law on depth-three paths}
\label{app:joint-law90}
The two terms in Proposition~\ref{prop:two-resources90} cannot in
general replace one another. Let $\mathcal A_3$ be the same one-odd
path circuit family with depth cap three. Write
\[
 \mathfrak J_{m,B}(d)=\inf_{\widehat S}\sup_{\pi,\gamma,v}
 \mathbb E\!\left[R_{\pi,\gamma,v}(\widehat S)
       -\min_{S\in\mathcal A_3}R_{\pi,\gamma,v}(S)\right],
\]
where the learner observes $m$ request identities and $B$ independent
Bernoulli observations for each edge and native visibility, as above.
The supremum is over all workloads and factors in $[0,1]$; $D$ is
fixed and $\beta=0$. The learner may randomize its allocation.
The comparator and learner both belong to $\mathcal A_3$.

\begin{theorem}[Joint request and calibration complexity]
\label{thm:joint-law90}
For $d\geq4$ and $m,B\geq1$,
\begin{equation}
 \frac D{256}\min\!\left\{1,\sqrt{d/m}+B^{-1/2}\right\}
 \leq \mathfrak J_{m,B}(d)
 \leq 3D\min\!\left\{1,\sqrt{d/m}+B^{-1/2}\right\}.
 \label{eq:joint-law90}
\end{equation}
\end{theorem}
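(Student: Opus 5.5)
The upper bound should follow from Proposition~\ref{prop:two-resources90} with $\beta=0$. The one-odd family $\mathcal A_3$ is determined by a selected mask, so $N\leq2^d$. For $d\geq4$ this gives $\log(2N)/2\leq(d+1)(\log2)/2\leq d$, hence $\sqrt{\log(2N)/(2m)}\leq\sqrt{d/m}$. Eq.~\eqref{eq:two-resources90} then gives at most $D\min\{1/2,\sqrt{d/m}+3B^{-1/2}\}\leq3D\min\{1,\sqrt{d/m}+B^{-1/2}\}$.

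For the lower bound I would use $\min\{1,a+b\}\leq2\max\{\min\{1,a\},\min\{1,b\}\}$. It then suffices to prove two separate bounds, each with constant $D/128$: one of order $D\min\{1,\sqrt{d/m}\}$ and one of order $D\min\{1,B^{-1/2}\}$.

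\emph{Request term.} Set every $\gamma_e=v_i=1$. The calibration observations are then deterministic and carry no information, and the problem reduces to the noiseless setting.
\begin{itemize}
\item Place $r\geq\lfloor(d+1)/5\rfloor\geq d/10$ disjoint four-site segments, separated by zero-weight connectors.
\item Within a segment, give the two interior sites weight $2/(6r)$ each and the endpoints $(1\pm\sigma_j\theta)/(6r)$.
\item Depth three forces one deletion per segment. The one-odd constraint is met by keeping a three-run $\{1,2,3\}$ or $\{2,3,4\}$, so the oracle lies in $\mathcal A_3$.
\item As in Appendix~\ref{app:gadget-family86}, replace any learner mask segmentwise, without knowledge of $\sigma$, by a random endpoint-deletion choice. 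Deleting an interior site costs at least as much as a fair random endpoint deletion, and deleting more costs more, so this replacement never lowers the expected score on the hard family.
\item A wrong choice costs order $D\theta/r$, and neighboring signs have KL of order $\theta^2/r$. The Assouad calculation of Appendix~\ref{app:gadget-family86} then yields $\Omega(D\min\{1,\sqrt{r/m}\})$.
\end{itemize}

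\emph{Calibration term.} Fix a known workload on three consecutive sites, with all other sites at zero weight and with a single separating deleted site, and take $v_i=1$. Use the homogeneous comparison between the full odd three-run and the admissible competitor (for instance, the centre singleton plus the end singletons, or another odd pattern). The optimal choice flips at an interior threshold $\gamma_0$, as in the computation $f_3=2\gamma+\gamma^2$ versus singletons, with switch near $\sqrt3-1$. Take two channels with $\gamma=\gamma_0\pm\varepsilon$ on both edges. The two allocations then differ in risk by order $D\varepsilon$, with opposite signs in the two channels. The $2B$ Bernoulli edge observations have KL of order $B\varepsilon^2$, because $\epsilon_e$ is bounded away from $0$ and $1/2$ near the threshold. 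Choosing $\varepsilon=cB^{-1/2}$ (capped by a constant) and applying Le Cam's two-point bound gives $\Omega(D\min\{1,B^{-1/2}\})$. Any randomized learner's mask can be compared with the two candidates: every other member of $\mathcal A_3$ is strictly worse by a constant in both channels, which I would verify by enumerating the few masks on three sites.

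The main obstacle is this calibration step. I need a threshold at which exactly two lossless masks of $\mathcal A_3$ are optimal, with an excess gap that is linear rather than quadratic in $\varepsilon$, and with every other mask uniformly separated. I would first try the three-site weights $(1/4,1/2,1/4)$, solve for $\gamma_0$ explicitly, and check that the risk derivative in $\gamma$ is nonzero there. The remaining constants would be collected into the stated $1/256$.
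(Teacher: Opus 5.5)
Your proposal is correct and follows essentially the paper's proof. The upper bound comes from Proposition~\ref{prop:two-resources90} with $N\le2^d$, and the lower bound is the larger of two bounds: a noiseless four-site-gadget Assouad bound, and a Le Cam two-point calibration bound comparing the active patterns $111$ and $101$ at uniform weights $1/3$ and threshold $\gamma=\sqrt3-1$, with all other edge factors fixed at one so that only two calibration panels carry information. Your ``centre singleton plus end singletons'' competitor is not a valid pattern on three consecutive sites, so the competitor should be $101$. The uniform weights already give the nonzero slope $f'(\sqrt3-1)=2/\sqrt3$ you wanted, so the $(1/4,1/2,1/4)$ detour is unnecessary.
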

\begin{proof}
The upper bound follows from Proposition~\ref{prop:two-resources90}
with $N\leq2^d$ and $\beta=0$.

For the request lower bound, set $\gamma=v=1$ and place
$r=\lfloor(d+1)/5\rfloor\geq d/8$ four-vertex subpaths along the
path, separated by unused vertices. Each gadget has one of two
possible low-probability vertices. Its low vertex has probability
$(1-3\theta)/(4r)$; its other three vertices have probability
$(1+\theta)/(4r)$, where
$\theta=\tfrac18\min\{1,\sqrt{r/m}\}$. All remaining queries have
probability zero. A feasible allocation retains at most three
vertices of each gadget. Every choice of three is attainable by
omitting the separators: three selected vertices on a four-vertex
path always include an odd run. Thus the oracle that omits every
low vertex belongs to $\mathcal A_3$.

Replace a learner's allocation by completing its projected set to
three vertices per gadget and setting every other coordinate to zero.
The replacement is feasible and cannot decrease its score for any
workload in the construction.
Map its omitted vertex to the corresponding low-vertex guess;
if it omits neither candidate, choose either guess. A wrong guess
costs at least $D\theta/(2r)$ in excess risk. Flipping one gadget's
low vertex changes the one-request distribution by KL divergence
\[
 \frac\theta r\log\frac{1+\theta}{1-3\theta}
 \leq\frac{8\theta^2}{r}.
\]
The $m$-sample KL is at most $1/8$, so Pinsker's inequality bounds
neighboring total variation by $1/4$. Averaging over all sign
vectors, each guess errs with probability at least $3/8$.
The total excess is at least
$3D\min\{1,\sqrt{r/m}\}/128$, hence at least
$D\min\{1,\sqrt{d/m}\}/128$.
Calibration data are deterministic here and reveal nothing about
the unknown workload.

For the calibration lower bound, put probability $1/3$ on each
of the first three queries and zero elsewhere. Set $v=1$ and all
edge factors except the first two to one. Those two factors are
both $\gamma_\pm=\sqrt3-1\pm\delta$, where
$\delta=1/(16\sqrt B)$. On the active subpath, every pattern except
$111$ and $101$ is dominated by $101$ for both parameter values.
Set all outside coordinates to zero and replace every active prefix
other than $111$ by $101$. Both resulting allocations are feasible,
have an odd run, and weakly improve the original score in both worlds.
The outside edge factors equal one, so removing outside vertices
does not change active contrasts. Their score difference is
\[
 f(\gamma)=\frac{2\gamma+\gamma^2-2}{3},\qquad
 |f(\gamma_\pm)|\geq\delta.
\]
Thus the preferred allocation changes with the sign, and a wrong
choice costs at least $D\delta/2$.

Only two edge-calibration panels carry information about the sign.
Their Bernoulli parameters are $(1-\gamma_\pm)/2$ and differ by
$\delta$. Both lie between $0.10$ and $0.17$, so each observation
has KL at most $12\delta^2$. The full calibration KL is at most
$24B\delta^2=3/32$ and total variation is less than $1/4$.
Request observations have the same distribution in both cases.
The two-point testing bound therefore gives expected excess at
least $3D\delta/16=3D/(256\sqrt B)$. Randomized allocations obey
the same bound: dominated patterns can be replaced by $101$, leaving
a randomized choice between the two candidates.

Taking the larger of the request and calibration lower bounds
proves the stated sum bound, using
$\max\{\min(1,a),b\}\geq\tfrac12\min(1,a+b)$ for $a\geq0$
and $0\leq b\leq1$.
\end{proof}
The rate is sharp for this path family, not the displayed constants.
Since total calibration cost is $Q=B(2d-1)$, its second term is
$\sqrt{(2d-1)/Q}$. The theorem fixes the per-parameter sampling
scheme; it does not rule out gains from adaptive calibration or
additional shared structure in a particular device's parameters.

\section{Online allocation on a path}
\label{app:online90}
The learner may also choose an encoding for each new incoming record.
Assume a fixed known path channel and the same finite circuit family
$\mathcal A$. At round $t$, a fresh pair $(C_t,G_t)$ follows the
original signal model. Draw an allocation $S_t$, prepare its code
from $G_t$, and discard that classical record. The encoder and decoder
know the sampled allocation. Then observe query identity $i_t$ and
incur expected binary risk
$1/2-(D/2)b_{i_t}(S_t)$. Observing $i_t$ reveals the reward
$b_{i_t}(S)$ of every candidate under the known model. This is
full-information feedback. No prediction labels are needed.

Choose
\[
 \Pr(S_t=S)\propto
 \exp\!\left(\eta\sum_{s<t}b_{i_s}(S)\right).
\]
The uniform initial distribution is over nonempty feasible allocations
with at least one odd component. Conditional on the known allocation,
every realized encoding preserves $G_t$ under the stipulated channel.
This is a stream of newly prepared records, not a procedure that
re-encodes a previously stored bit after discarding its classical copy.
Only query history carries between rounds. An unlabelled mixture of
the sampled codes need not preserve the bit.

\begin{proposition}[Efficient full-information allocation]
\label{prop:online90}
For any fixed query sequence of length $T$, exponential weighting
has expected cumulative risk excess over the best fixed allocation
at most
\begin{equation}
 \frac D2\left(\frac{\log N}{\eta}+\frac{\eta T}{8}\right).
 \label{eq:online90}
\end{equation}
For $N>1$, setting $\eta=\sqrt{8\log N/T}$ gives
$(D/2)\sqrt{T\log N/2}\leq(D/2)\sqrt{Td\log2/2}$.
The distribution, its query-wise expected contrasts, and an exact
sample can be computed in $O(dk)$ scalar operations per round
without enumerating the encodings. If $N=1$, regret is zero.
\end{proposition}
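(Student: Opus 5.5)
The proof splits into a regret bound and a sampling algorithm. For the regret bound, I would apply the standard Hedge analysis to the rewards $b_{i_t}(S)\in[0,1]$, which are revealed for every candidate once $i_t$ is observed. The sequence is fixed and the sampled allocations are independent given the history, so the expected cumulative score of exponential weights satisfies
\[
 \max_{S\in\mathcal A}\sum_{t\le T} b_{i_t}(S)-\sum_{t\le T}\mathbb E\,b_{i_t}(S_t)
 \le\frac{\log N}{\eta}+\frac{\eta T}{8}.
\]
This follows from the usual potential argument: track $\Phi_t=\log\sum_S\exp(\eta\sum_{s<t}b_{i_s}(S))$, apply Hoeffding's lemma to each increment (ranges in $[0,1]$ give the $\eta^2/8$ term), and compare $\Phi_{T+1}$ with the best single allocation, starting from $\Phi_1=\log N$. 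The round-$t$ risk equals $1/2-(D/2)b_{i_t}(S_t)$, so converting score regret to risk regret multiplies by $D/2$, which gives the displayed bound. Substituting $\eta=\sqrt{8\log N/T}$ makes both terms equal to $\sqrt{T\log N/8}$, giving $(D/2)\sqrt{T\log N/2}$. Finally, $N\le 2^d$ since allocations are subsets, which yields the $d\log 2$ form. If $N=1$, there is only one allocation and the regret is trivially zero.

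For computation, I would note that the weight factorizes along the path. The cumulative reward is $\sum_i n_i b_i(S)$, where $n_i$ counts past requests for query $i$ and each $b_i(S)$ depends only on local run structure. The product $\exp(\eta\sum_i n_i b_i(S))$ is therefore a product of local factors, each determined by consecutive sites. I would reuse the dynamic program of Appendix~\ref{app:noise86}: its states are the current run length $\ell\le k$ together with the odd-run flag, and the same finalization of the previous site's contribution applies. The change is to replace $(\max,+)$ with $(+,\times)$ acting on the factors $\exp(\eta\,n_i w)$, where $w=v_i\gamma_{i-1}^{\mathbf 1\{\ell\ge2\}}\gamma_i^{b}$. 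Accepting only final states with the flag set enforces the one-odd family. Nonemptiness is automatic, since an odd run is nonempty. Forward sums then give the partition function $Z_t$ in $O(dk)$ operations.

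Exact sampling would use backward sampling through the forward table, which also takes $O(dk)$. The query-wise expected contrasts $\mathbb E\,b_i(S_t)$ need forward–backward marginals over pairs of adjacent decisions. That is the same semiring computation, again in $O(dk)$, and it yields the expected contrast of every site in one pass.

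I expect the main obstacle to be verifying that the local-factor decomposition is exact, that is, that each $b_i(S)$ is fully determined at the step where the previous site is finalized. I would check this against the existing max-DP and its interval verifier. The other point needing care is numerical scale: the exponentials grow with $\eta n_i$, so the forward sums should be computed in log-space to stay stable. This affects only implementation, not the operation count.
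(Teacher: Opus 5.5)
Your proposal is correct and follows the paper's proof essentially step for step: the Hedge potential argument with Hoeffding's lemma, the $D/2$ conversion to risk, and then the run-length/odd-flag dynamic program with max-plus replaced by sum-product (log-sum-exp), using forward--backward messages for the partition function, the expected contrasts and exact sampling. The exactness you flag as the main obstacle is immediate, as the paper notes. After bit $i+1$ is appended, the state $\ell$ and that bit fix $\mathbf 1\{i\in S\}$ and the retention of both edges incident to $i$, so $b_i(S)$ is a function of a single trellis transition. Your local factor should also carry the indicator $\mathbf 1\{\ell>0\}$.
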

\begin{proof}
Rewards lie in $[0,1]$. Apply Hoeffding's lemma to the log partition
function increment and telescope:
\[
 \eta\max_S\sum_{t=1}^T b_{i_t}(S)-\log N
 \leq \log(Z_{T+1}/Z_1)
 \leq \eta\sum_{t=1}^T\mathbb E[b_{i_t}(S_t)]+\eta^2T/8.
\]
Rearrange and multiply by $D/2$.
For computation, scan the path from left to right. A state records
the current selected-run length $0,\ldots,k$ and whether an odd
run has closed. Appending a bit finalizes the previous site's
contribution, since both its neighbors are now known. A zero
sentinel closes the last run; accept only states with a closed odd
run. Replace the maximization recurrence by log-sum-exp to compute
$Z_t$, with $\eta$ times the cumulative query counts as weights. Forward and
backward messages give transition probabilities, exact expected
contrasts, and backward or forward conditional sampling. There
are $2(k+1)$ states and at most two transitions per state per site.
\end{proof}
This is an application of exponential weighting \citep{freund1997hedge} with a tractable
physical action distribution, not a new general online-learning
principle. It permits changing workloads but compares with one fixed
allocation; it does not provide dynamic regret against a changing
oracle. The rate is dimension dependent and does not claim the
regional $rk$ refinement. Detector costs and repeated state
preparation remain separate physical resources.

\FloatBarrier
\section{Experimental protocols and complete comparisons}
\label{app:experiments86}
\label{app:experiments90}
The experiments use two distinct kinds of evidence. The earlier
geometry and connected-region studies are retained without changing
their observations. New request/calibration, online and noise studies
were specified before their new draws. Port-repair comparisons use
the earlier connected-region observations and are retrospective.
The reproducibility archive preserves every protocol, source hash,
raw array and analysis, including adverse comparisons. No quantum
hardware measurements enter these synthetic studies;
Appendix~\ref{app:service94} reports the subsequent physical acquisition.

\subsection{Growing-depth scaling diagnostic}
\label{app:depth-study90}
A separate protocol varies $d\in\{128,256,512,1024\}$ and
$k\in\{2,4,8,16,32\}$, with sharp reports and $h=0$.
Rounded request budgets are
$m=d\log(k+1)\{1/8,1/2,2,8,32,128\}$.
Each dimension/depth/family condition has 64 independent seeds,
for 15,360 empirical fits across two workload families.
The local-alternative family uses the exact separated-segment
construction and budget-dependent $\theta$ in
Appendix~\ref{app:growing-depth90}; its low-query identities remain
fixed across budgets but its probabilities change. The fixed-workload
control draws $\pi\sim\operatorname{Dirichlet}(0.5,\ldots,0.5)$
and uses nested request prefixes. The empirical learner minimizes
omitted count, then number of omissions, with the linear-time
recurrence. The oracle knows the workload; a uniform periodic
allocation supplies a data-free comparator. All are evaluated by
exact population risk in the full ideal depth-$k$ family.

Figure~\ref{fig:growing-depth90} presents both families in the main text.

For normalized budgets at least two,
$k\sqrt{m/[d\log(k+1)]}$ times empirical excess ranges from
$0.0656$ to $0.1578$ across all hard-family cells.
At the largest budget it ranges from $0.0028$ to $0.0071$ on
fixed Dirichlet workloads. The latter are not minimax-hard populations.
The periodic baseline shares the hard-family slope because $\theta$
decreases with $m$. Only the fixed-workload control tests learning
on unchanged populations.
All dimension-specific curves, bootstrap intervals, masks, counts
and periodic baselines are retained. Before acquisition, 744
exhaustive or independent-recurrence objectives passed; every fitted
mask and population risk was subsequently replayed. Acquisition took
8.63 seconds with one numerical thread, excluding verification and
plotting. These are finite diagnostics of a proved rate, not estimates
of a universal learning-curve exponent.

\subsection{Fixed-depth geometry comparisons}
The retained fixed-workload experiments compare paths with bicliques
and vary the number and size of connected regions
(Figure~\ref{fig:geometry-learning86}). Section~6.1 reports their
population-risk interpretation and stronger repair controls.

\begin{figure}[!htb]\centering
\includegraphics[width=\linewidth]{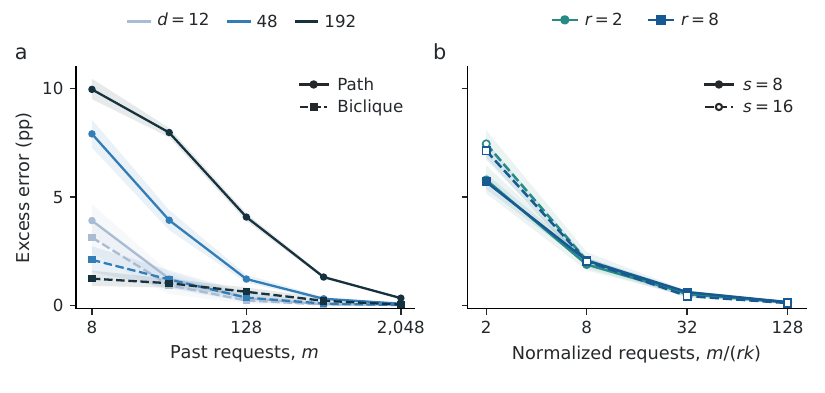}
\caption{Enlarging a region differs from adding allocation decisions.
(a) Fixed full-support Dirichlet workloads on paths and bicliques,
at depth two; 20 paired workload instances with ten training repetitions.
(b) Chains of $r$ biclique regions, each with $s$ vertices per color,
at depth four and total connection-query mass $0.2$; 24 paired fixed
instances with ten repetitions.
Population excess is measured against the physical-class optimum;
bands are pointwise 95\% instance-bootstrap intervals. Reports are
sharp and $h=0$. These fixed-workload curves do not estimate minimax
rates; biclique query support grows with $s$.}
\label{fig:geometry-learning86}
\end{figure}

\subsection{Two-resource factorial design}
For each $d\in\{63,127\}$ and each of three noise regimes, draw
32 independent workload/channel instances. Requests have
$\pi\sim\operatorname{Dirichlet}(0.5,\ldots,0.5)$.
To generate native query visibility, draw local outcome-error
probabilities uniformly from $[0.002,0.02]$ and multiply their
$1-2e$ factors over the path query's two or three sites.
Calibration observes each resulting native visibility directly; it
does not estimate those underlying local errors.
Low preparation noise has independent edge-error probabilities
uniform in $[0.001,0.01]$. Heterogeneous noise replaces an independently
marked quarter of edges by errors uniform in $[0.08,0.18]$.
Stress noise uses $[0.2,0.4]$ on the marked edges. These are controlled
synthetic distributions, not fitted device noise.

Each instance has four request streams and four independent
calibration panels. Request budgets are the rounded positive integers
$m=d\{1/4,1,4,16,64\}$; calibration budgets are
$B\in\{4,16,64,256,1024\}$ per parameter. Independent multinomial
increments give nested request prefixes. Independent binomial
increments give nested edge-error and native-report-success counts.
The two sources are independent and crossed, producing
$192\cdot5\cdot5\cdot4\cdot4=76{,}800$ fitted rows per method.
The two prefix axes pair the learning curves; they do not increase
the number of independent workload/channel units.

Edge counts estimate $\gamma_e=1-2\epsilon_e$ and native successes
estimate $(1+v_i)/2$. Clip both estimated factors to $[0,1]$.
Total calibration is $Q=B(2d-1)$, separately from $m$ requests.
At the two dimensions, $B=256$ costs 32,000 and 64,768 scalar
observations. The study does not supply a physical exchange rate
between calibration, request acquisition, detector operations and
preparation gates. Its mean gate counts concern one preparation.

\subsection{Matched estimators and privileged diagnostics}
Joint uses empirical requests and both estimated channel factors.
Task only sets both factors to one. Each is evaluated using raw
frequencies and half-count smoothing,
$\widetilde\pi_i=(n_i+1/2)/(m+d/2)$, with calibration unchanged.
Reporting only uses smoothed requests, estimated native visibility,
and ideal preparation factors. Joint, Task only and Reporting only use
the same dynamic program and first-predecessor tie rule. The selected
runs may have length one, two or three; at least one run must be odd.

Greedy starts with the greatest standalone-benefit singleton and
visits the remaining sites in descending estimated standalone benefit.
It accepts an addition only if it strictly raises estimated score and
keeps the same one-odd feasible family. This is a one-pass heuristic,
not a local optimum: a beneficial multi-site addition can be missed
when its intermediate single-site additions violate the odd condition.
Product solves the cap-one problem with smoothed requests and native
visibility. Full chain retains every site, uses depth $d$, and needs
neither request fitting nor calibration. Its excess relative to a
cap-three oracle may be negative because its resource budget differs.

Known channel uses true factors with smoothed empirical requests.
Known workload uses true requests with estimated factors.
The circuit oracle knows both. Their risk differences diagnose the
two error sources, but need not add exactly: changing either estimate
can change the selected mask. None is tuned with population outcomes.
The primary comparison, fixed before the draws, is Task only smoothed
minus Joint smoothed at $d=63$, heterogeneous noise, $m=1{,}008$,
and $B=256$.

\subsection{Uncertainty, adverse cells, and depth sensitivity}
Risk is evaluated exactly from the true synthetic factors after each
mask is chosen. The 16 request/calibration repetitions are averaged
within each latent instance. Five thousand paired bootstrap resamples
of the 32 instances give pointwise 95\% percentile intervals.
Secondary intervals are descriptive and have no multiplicity
adjustment. All 150 dimension/noise/budget cells are released,
including both raw and smoothed estimators.
The generation of binary future outcomes from this same contrast
model would add sampling variance, not an independent validation of
the model; population risk is therefore the primary outcome here.

\begin{table}[!htb]\centering
\caption{Corners of the complete request/calibration grid. Entries are
Joint smoothed population excess in percentage points relative to the
true-workload, true-channel oracle in the same circuit family. The last column
counts cells where Full chain has lower error across all 25 budgets.}
\label{tab:grid90}
\begin{tabular}{llrrrrr}\toprule
& & \multicolumn{2}{c}{$m/d=1/4$} & \multicolumn{2}{c}{$m/d=64$} & Full chain\\
$d$ & Noise & $B=4$ & $B=1024$ & $B=4$ & $B=1024$ & wins\\\midrule
63 & Low & 5.880 & 5.017 & 0.703 & 0.065 & 25/25\\
63 & Heterogeneous & 6.477 & 5.137 & 1.332 & 0.059 & 13/25\\
63 & Stress & 6.988 & 5.396 & 1.922 & 0.058 & 0/25\\
127 & Low & 6.008 & 5.772 & 0.549 & 0.048 & 25/25\\
127 & Heterogeneous & 6.509 & 5.087 & 1.462 & 0.052 & 13/25\\
127 & Stress & 6.998 & 5.495 & 1.939 & 0.054 & 0/25\\
\bottomrule\end{tabular}\end{table}

The main matched-smoothed gain is $1.214$ points, with paired interval
$[0.947,1.496]$. Joint's primary error is $8.576\%$ versus the
same-family oracle's $8.320\%$. The raw-count comparison is also
reported: Joint has $8.618\%$ error and Task only $9.616\%$.
Smoothing is useful for Joint in this cell but increases Task only's
error; a single favorable smoothing comparison would not identify
the channel contribution.

The prespecified depth sensitivity evaluates caps five and seven
on the same 32 heterogeneous-noise instances at $d=63$, retaining
the complete budget matrix and identical request/calibration arrays.
Each cap has its own true-workload, true-channel oracle. Larger
capacity expands the family but need not improve a fitted estimator
at every finite budget. Table~\ref{tab:depth90} reports the primary
budget without substituting the best depth into the declared primary
comparison.
\begin{table}[!htb]\centering
\caption{Depth sensitivity on 63-qubit heterogeneous-noise instances at
$m=1{,}008$ requests and $B=256$ calibration trials per parameter. Joint uses half-count smoothing; each oracle has the same cap.}
\label{tab:depth90}
\begin{tabular}{rrrrr}\toprule
Cap & Joint error (\%) & Oracle error (\%) & Excess (points) & Mean CZ gates\\\midrule
3 & 8.576 & 8.320 & 0.256 & 22.3\\
5 & 7.515 & 7.312 & 0.203 & 31.6\\
7 & 7.239 & 7.050 & 0.189 & 36.3\\
\bottomrule\end{tabular}\end{table}

\subsection{Completion and a calibration-hard diagnostic}
\begin{figure}[!htb]\centering
\includegraphics[width=\linewidth]{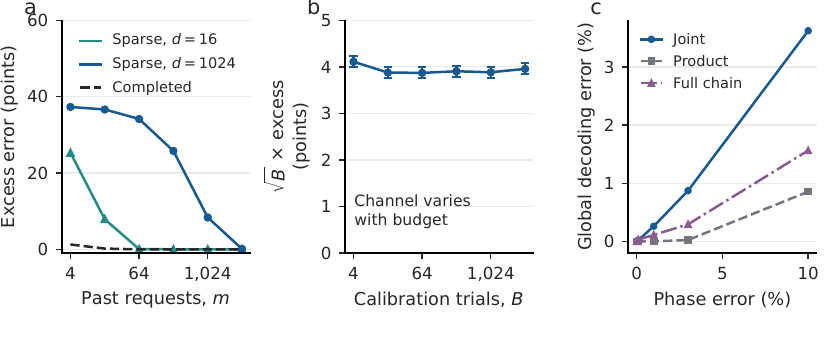}
\caption{Three diagnostics distinguish implementation, estimation and
information loss. (a) Sparse biclique ERM can omit unobserved queries;
completion removes the size penalty in this example. Lines are exact
expectations, markers average 2,048 count draws. (b) Calibration-hard
three-site channels give approximately constant $\sqrt B$-scaled
excess; bars are $1.96$ standard errors over 8,192 independent panels.
The channel separation changes with $B$, so these are not fixed-channel
learning curves. (c) Independent phase noise creates global decoding
error in exact nine-qubit calculations; these are deterministic model
values, with no sampling intervals.}
\label{fig:diagnostics90}
\end{figure}
The completion diagnostic uses depth one on $K_{a,a}$, with
$a\in\{8,32,128,512\}$, color masses $3/4,1/4$ and
$m\in\{4,16,64,256,1024,4096\}$. It compares 2,048 independent
count draws per condition with Appendix~\ref{app:completion90}'s
binomial occupancy formula. Completed full-feasible ERM and maximal
ERM choose the same full color on every draw.

For calibration alone, a uniform three-site path has product score
$2/3$ and triple score $(2\gamma+\gamma^2)/3$. They meet at
$\gamma_* = \sqrt3-1$. At each
$B\in\{4,16,64,256,1024,4096\}$, draw 8,192 independent signs
and set both true edge factors to $\gamma_*\pm0.2/\sqrt B$.
Calibrate the two edges independently; give the learner the true
uniform workload and ideal report visibility. The mean
$\sqrt B$-scaled excess ranges from $0.0387$ to $0.0411$ in risk
units. This budget-indexed construction tests the switching mechanism
in Theorem~\ref{thm:joint-law90}. The proof uses a smaller perturbation;
this curve measures finite-sample behavior rather than its lower-bound
constant.

\section{Stronger baselines, online learning and noise mismatch}
\label{app:controls90}
\subsection{Selective repair of connected allocations}
The original connected-region study has 96 full-support cells:
$r\in\{2,8\}$, $s\in\{8,16\}$ vertices per color,
$k\in\{2,4\}$, total connection-query mass in
$\{0.05,0.2,0.5\}$ and $m/(rk)\in\{2,8,32,128\}$.
Each cell contains 24 workload instances with ten request repetitions.
The main figure and original observed outcomes remain unchanged.

The new retrospective control first optimizes each region independently.
When a selected component exceeds the cap, it removes the lowest-count
endpoint of a selected interregion edge in an oversized component.
It repeats until feasible, then adds feasible queries in descending
training count. A second version makes two deterministic passes of
strictly improving one-for-one swaps. These rules use training counts
alone, with graph connectivity checked explicitly. All 96 original
cells are evaluated; no workload is selected using its improvement.

At the original primary condition ($r=s=8$, $k=4$, connection-query
mass $0.2$ and $m=1024$), exact allocation has $14.998\%$
population error. Repair gives $15.170\%$ and repair with local
improvement $15.143\%$. Their paired differences from exact allocation
are $0.172$ points ($[0.093,0.252]$) and $0.145$ points
($[0.075,0.220]$), compared with $5.242$ points for deleting every
connection query. Thus most of that earlier gap reflected the coarse
connection-removal baseline. Raw masks, preparation gates and sampled
future outcomes for the new controls are included in the archive.

\subsection{Online protocol and results}
Each online condition has a known sharp path channel, depth cap three,
512 requests and 12 seeds. Dimensions are 15 and 63. Stationary
sequences use a fixed Dirichlet-$0.5$ workload. Round-robin sequences
cycle deterministically through all sites. At each dimension, all
12 seeds repeat that one query sequence and vary only allocation
randomness; they are not 12 independently drawn workloads. Changing sequences switch
once, halfway through, between workloads that favor alternate pairs
of sites by a factor of four. Every full sequence is fixed before
learner randomization. Each round encodes a fresh incoming record.

The primary learner quantity is exact expected Gibbs reward, computed
from forward/backward messages. A sampled allocation is also saved
each round. Smoothed follow-the-leader chooses the empirical optimum
after adding half a count to every query. The comparator is the best
single allocation for the entire realized request sequence. Mean
cumulative excesses, in expected additional mistakes, are:
\begin{center}
\begin{tabular}{llrrr}\toprule
Qubits & Sequence & Gibbs & Smoothed FTL & Bound\\\midrule
15 & Stationary & 13.19 & 2.21 & 25.19\\
15 & Round robin & 16.50 & 0.00 & 25.19\\
15 & One change & 15.38 & 2.33 & 25.19\\
63 & Stationary & 26.72 & 8.71 & 51.49\\
63 & Round robin & 45.92 & 0.00 & 51.49\\
63 & One change & 36.42 & 11.71 & 51.49\\\bottomrule
\end{tabular}
\end{center}
Every sequence satisfies the bound; the simpler FTL baseline has
lower mean loss in all six conditions. These observations support
implementation correctness and show that a worst-case guarantee need
not give the best behavior on these sequences. They do not establish
a dynamic regret guarantee or empirical superiority of Gibbs.

\subsection{Retained information and correlated errors}
The odd-component invariant holds for arbitrary joint distributions
over retained-edge $ZZ$ errors: each error pattern commutes with the
same witness. Independence is required for the factorized prediction
formula, not for that information-preservation argument.
To separate these roles, an exact nine-site control applies a common
Bernoulli shock that flips every retained edge, in addition to
independent $ZZ$ errors. At shock probability $\rho$, a selected
query with exactly one retained neighbor acquires factor $1-2\rho$;
with two retained neighbors, the common flips cancel. Using the true
edge marginals in an independent model incorrectly attenuates the
interior query twice. At $\rho=0.1$, the resulting chosen allocation
has $9.47\%$ error, versus $8.23\%$ for the correct correlated-model
oracle. Global decoding remains exact for both.

Independent single-qubit $Z$ errors give a different failure. They
need not commute with the witness. The selected graph-state syndrome
distribution is an independent binary convolution; a Walsh transform
computes it exactly. Under $G=1$, its indices are shifted by the
all-ones syndrome, so unrestricted Bayes error follows from the total
variation between those distributions. At phase-error probability
$0.03$ in the fixed nine-site example, global errors are $0.874\%$
for Joint, $0.026\%$ for Product and $0.295\%$ for Full chain.
Restricted prediction errors are $15.14\%$, $19.52\%$ and $13.32\%$.
None of these representations remains lossless at this noise level;
their access comparison is no longer information matched.

\section{Reproduction and historical evidence}
\label{app:reproduction90}
The small installable allocation package contains the path optimizer,
Gibbs distribution and sampler, input validation, and a runnable
example. Its README provides the clean-environment commands. The full
reproduction entry point verifies frozen acquisition hashes, regenerates
request and calibration prefixes, checks population risks and compares
the main optimizer against an independent interval implementation.
It also checks manuscript references and the compiled artifact.

Before acquisition, exhaustive enumeration verified 864 max-plus
optimizer cases and 36 complete Gibbs distributions. The binary
information bound was checked on 100,000 asymmetric channels.
After acquisition, 11,136 independent interval optimizations checked
the primary fitted method, and every population-risk row was
recomputed. These finite checks corroborate the mathematical proofs;
they do not replace them. Factorial acquisition took 50.50 seconds on six workers; controls took
179.61 seconds. Numerical libraries used one thread per worker; these
times exclude development, audits, analysis and plotting.
All study seeds and exact package versions are recorded in the machine-readable protocols.

The ancillary data preserve the two-qubit benchmark, earlier allocation
studies and the 215,040-shot fixed IBM acquisition, together with their
protocols and chronology. These device observations
compare fixed implementations in a different measurement family;
they are not evidence for learned allocation, an all-input detector
guarantee, or exact information preservation in noisy device states.

\begin{table}[!htb]\centering
\caption{Matched estimators isolate the value of calibration. Population
error and mean preparation CZ counts in the prespecified 63-qubit cell
($m=1{,}008$, $B=256$). Intervals are pointwise 95\% bootstrap
intervals over 32 workload/channel instances. The oracle knows the
true workload and channel. Allocation rows have depth at most three
and an odd run; Product has depth one and Full chain depth 63.}
\label{tab:primary90}
\begin{tabular}{lrr}\toprule
Design & Error (\%) [95\% interval] & Mean CZ gates\\\midrule
Joint, smoothed & 8.58 [8.19, 8.99] & 22.3\\
Task only, smoothed & 9.79 [9.24, 10.37] & 25.8\\
\midrule
Joint, raw & 8.62 [8.24, 9.03] & 19.5\\
Task only, raw & 9.62 [9.09, 10.17] & 22.6\\
Reporting only, smoothed & 9.77 [9.23, 10.34] & 25.8\\
Greedy, smoothed & 8.94 [8.45, 9.50] & 22.3\\
\midrule
Product, smoothed & 15.64 [15.06, 16.22] & 0.0\\
Full chain & 9.54 [8.91, 10.16] & 62.0\\
\midrule
Workload and channel oracle & 8.32 [7.96, 8.71] & 21.5\\
\bottomrule\end{tabular}\end{table}

\FloatBarrier
\section{A parity-service health test and physical response calibration}
\label{app:service91}

Entanglement allocation can change how quickly a detector fault becomes
visible. We test this application in a simulated four-qubit parity
service and validate response calibration on archived IBM observations.
The physical experiment tests response prediction; it does not validate
the four-qubit service's coherence-preserving instrument.

\subsection{The service and its health test}
A service accepts a three-qubit probe and returns one requested parity.
Its menu is $ZZI$, $XXX$, and $IZZ$. The two $Z$ parities are the
stabilizers of the three-qubit repetition code; $XXX$ is a logical-state
diagnostic, not another syndrome check \citep{riste2015stabilizer}.
Preserving information inside a parity sector matters when the same
interface processes coherent inputs. Parity repeatability alone does
not impose this requirement: measuring local outcomes and reporting
their parity also repeats the ideal parity perfectly.

Our client supplies a batch of precommitted, randomly signed probes.
An evaluator retains each sign and checks the returned report.
The probe factory can use past request frequencies and calibration
observations, but its preparation cannot depend on the current request.
We enforce this contract with identical preparation prefixes across
query-specific circuits. This is a benchmark constraint, rather than
a claim that ordinary error-correction schedules conceal their checks.
The circuits do not implement an external request arriving after
physical preparation.

At entanglement depth two, the catalog contains the $Z$-product, left-pair,
and right-pair codes of Figure~\ref{fig:access-mechanism86}, together with
an $X$-product code. We use
$|000\rangle,|010\rangle$ for the product code and
$|\Phi^+\rangle,|\Psi^-\rangle$ on the selected pair, with $|+\rangle$
on the remaining qubit. Their ideal contrasts, in query order, are
$(1,0,1)$, $(1,1,0)$, and $(0,1,1)$; the $X$-product pair
$|+++\rangle,|-++\rangle$ has contrast $(0,1,0)$.
The product baseline learns which of its two preparations to use from
the same requests and healthy calibration observations as the pair
selector. All candidates use the same calibration shots per setting.
An ancilla reports the selected parity through compute--copy--uncompute:
the middle data qubit is connected to both outer data qubits and the
ancilla. The $Z$ checks use three CNOTs and the $X$ check uses five,
with Hadamard basis changes. These are detector gates; preparing a
Bell probe uses one CNOT. All-input Kraus-matrix checks give the
projectors $(I\pm P)/2$ to a maximum residual of $4.5\times10^{-16}$.

We inject a report-polarity fault by applying $X$ to the ancilla before
its measurement. Ideally, this reverses the report without changing
the nonselective data channel. The evaluator's task is to detect the
fault from signed reports and query identities. We also test a weaker
fault: a rotation $R_y(2\arcsin\sqrt{0.15})$ on the ancilla produces
a 15\% report-flip probability in the ideal instrument.
Thus an uninformative probe has a
concrete cost: it makes healthy and faulty operation harder to tell
apart. This deliberately injected fault tests the diagnostic method;
it is not an estimated field-failure rate.

Two controls expose the interface restrictions. A product preparation
that knows the requested query is perfectly informative in the ideal
case. A product probe followed by a second, nondestructive $ZZI$
service call also recovers the bit, because all three queries commute.
The benchmark therefore charges one call per probe. We include both
controls, together with a full-GHZ preparation, and report their
different resources in Table~\ref{tab:service91-controls}.

\subsection{Calibration from directly observed responses}
We calibrate what the service returns, rather than assuming separately
observed edge errors. For candidate $a$, query $t$, and balanced
label $g$, let $Y\in\{-1,+1\}$ be the report and define
\[
 b_{at}=\tfrac12\bigl(\mathbb E[Y\mid a,t,0]-
                          \mathbb E[Y\mid a,t,1]\bigr).
\]
This signed contrast includes preparation, gate, assignment, and
report errors. The two label-specific means also retain output bias.
Bias cancels from balanced-label classification risk; it would still
matter for unequal priors or logarithmic loss.

\begin{samepage}
\begin{proposition}[Direct response calibration]
\label{prop:service91-calibration}
Fix a catalog of $A$ encoders and $q$ requests. Suppose their response
channels are stationary, and acquire $B$ independent shots for each
$(a,t,g)$, independently of $m$ iid request samples. Let
$\widehat b_{at}=(\overline Y_{at0}-\overline Y_{at1})/2$,
choose $\widehat a\in\arg\max_a\sum_t\widehat\pi_t|\widehat b_{at}|$,
and decode with the learned signs of $\widehat b$, taking the sign to
be $+1$ at zero.
For the balanced source of Section~\ref{sec:access-model86}, with
probability at least $1-\delta$,
\begin{equation}
 R(\widehat a,\widehat s)-R^*_{\rm catalog}
 \leq D\min\left\{1,
 \sqrt{\frac{\log(4A/\delta)}{2m}}+
 \sqrt{\frac{\log(4Aq/\delta)}{B}}\right\}.
 \label{eq:service91-bound}
\end{equation}
The comparator uses the best candidate and its true Bayes binary
decoder. Total calibration cost is $Q=2AqB$, so the calibration term
equals $\sqrt{2Aq\log(4Aq/\delta)/Q}$.
\end{proposition}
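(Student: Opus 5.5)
The plan is to reduce everything to the signed contrasts $b_{at}$ and then stack two independent concentration events, one for calibration and one for requests.

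\textbf{Step 1: a risk formula that does not see bias.} Fix a candidate $a$ and a sign vector $s\in\{-1,+1\}^q$. On query $t$ the decoder predicts $C$ as the bit whose sign, through $G=C\oplus Z$, matches $s_tY$. Under the balanced prior, the report's output bias $\tfrac12(\mathbb E[Y\mid a,t,0]+\mathbb E[Y\mid a,t,1])$ contributes equally and oppositely to the two label-conditional error terms. So I would first verify
\[
 R(a,s)=\tfrac12-\tfrac D2\sum_t\pi_t s_t b_{at},
\]
which mirrors Eq.~\eqref{eq:contrast-risk86} with $b_{at}$ in place of $v_tc_t$. The Bayes decoder takes $s_t=\operatorname{sign}b_{at}$, so
\[
 R^*_{\rm catalog}=\tfrac12-\tfrac D2\max_a\sum_t\pi_t|b_{at}|.
\]
Write $a^*$ for a maximizer. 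Since $|b_{at}|\leq1$, every excess is at most $D$, which gives the cap $\min\{1,\cdot\}$.

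\textbf{Step 2: two high-probability events.}

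The calibration event $\mathcal E_B$ asks that $|\widehat b_{at}-b_{at}|\leq\varepsilon$ for all $(a,t)$, where $\varepsilon=\sqrt{\log(4Aq/\delta)/B}$.
\begin{itemize}
\item Write $\widehat b_{at}$ as a sum of $2B$ independent terms $\pm Y/(2B)$, each of range $1/B$.
\item Hoeffding then gives $\Pr(|\widehat b_{at}-b_{at}|>\varepsilon)\leq2e^{-B\varepsilon^2}$.
\item A union bound over the $Aq$ pairs at total level $\delta/2$ yields $\mathcal E_B$.
\end{itemize}

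The request event $\mathcal E_m$ asks that $|\sum_t(\widehat\pi_t-\pi_t)f_a(t)|\leq\eta$ for every $a$, where $f_a(t)=|\widehat b_{at}|\in[0,1]$ and $\eta=\sqrt{\log(4A/\delta)/(2m)}$.
\begin{itemize}
\item Condition on the calibration data. Because requests are independent of calibration, the $A$ functions $f_a$ are fixed.
\item Hoeffding with a two-sided union over $A$ functions at level $\delta/2$ gives $\mathcal E_m$ conditionally.
\item Hence $\mathcal E_m$ also holds unconditionally with probability at least $1-\delta/2$.
\end{itemize}

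\textbf{Step 3: the chain of inequalities.} On $\mathcal E_B\cap\mathcal E_m$, I would write
\[
 \sum_t\pi_t|b_{a^*t}|
 \leq\sum_t\pi_t|\widehat b_{a^*t}|+\varepsilon
 \leq\sum_t\widehat\pi_t|\widehat b_{a^*t}|+\eta+\varepsilon
 \leq\sum_t\widehat\pi_t|\widehat b_{\widehat at}|+\eta+\varepsilon
 \leq\sum_t\pi_t|\widehat b_{\widehat at}|+2\eta+\varepsilon .
\]
The third inequality is the defining property of $\widehat a$. To close the chain, use the pointwise inequality
\[
 \widehat s_t b_{\widehat at}\geq|\widehat b_{\widehat at}|-|\widehat b_{\widehat at}-b_{\widehat at}|,
\]
which holds because $\widehat s_t=\operatorname{sign}\widehat b_{\widehat at}$, including the zero convention. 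This gives
\[
 \sum_t\pi_t|\widehat b_{\widehat at}|\leq\sum_t\pi_t\widehat s_tb_{\widehat at}+\varepsilon .
\]
Substituting into Step 1, the excess is at most $\tfrac D2(2\eta+2\varepsilon)=D(\eta+\varepsilon)$. That is Eq.~\eqref{eq:service91-bound}, after intersecting the cap from Step 1.

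The cost identity follows by direct substitution: $Q=2AqB$ gives $\varepsilon=\sqrt{2Aq\log(4Aq/\delta)/Q}$.

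\textbf{Main obstacle.} I expect the delicate points to be Step 1 and the conditioning in Step 2.
\begin{itemize}
\item In Step 1, I must check that the label flip $Z$ multiplies the contrast by exactly $D$. I must also check that the bias term truly cancels under the balanced prior, even though the per-query channel from $G$ to $Y$ is not symmetric.
\item In Step 2, the functions $f_a$ are data dependent. The plan relies on calibration being independent of the request log, so I would state the conditional Hoeffding step explicitly rather than use a uniform class bound.
\end{itemize}
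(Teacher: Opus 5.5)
Your proposal is correct and follows the paper's proof essentially step for step. You use the same ingredients: the bias-free risk formula $R(a,s)=\tfrac12-\tfrac D2\sum_t\pi_ts_tb_{at}$ under balanced labels; Hoeffding with a union bound over the $Aq$ calibration contrasts at level $\delta/2$; conditional Hoeffding over the $A$ fixed functions $t\mapsto|\widehat b_{at}|$ at level $\delta/2$; the same comparison chain, closed by $\widehat s_t b_{\widehat a t}\geq|\widehat b_{\widehat at}|-|\widehat b_{\widehat at}-b_{\widehat at}|$, giving score loss $2\eta+2\varepsilon$, risk excess $D(\eta+\varepsilon)$ and the trivial cap $D$. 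Your explicit checks that the output bias cancels and that the crossover contributes exactly the factor $D$ spell out what the paper asserts in one line.
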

\end{samepage}

\begin{proof}
For any orientation $s$, balanced labels give
$R(a,s)=1/2-(D/2)\sum_t\pi_ts_tb_{at}$.
Hoeffding's inequality and a union bound give
$\Pr(\max_{at}|\widehat b_{at}-b_{at}|>\eta)
\leq2Aq\exp(-B\eta^2)$.
Conditioning on calibration, each function
$t\mapsto|\widehat b_{at}|$ lies in $[0,1]$, so the uniform
request-estimation error exceeds $\epsilon$ with probability at most
$2A\exp(-2m\epsilon^2)$. On their intersection, compare the true
optimal score to its estimated score, the selected estimated score,
and the selected signed true score. The total difference is at most
$2\eta+2\epsilon$. The last step uses
$|\operatorname{sign}(\widehat b)b-|\widehat b||=|b-\widehat b|$,
so it includes incorrectly learned signs. Multiply by $D/2$ and use
the two failure budgets $\delta/2$. The trivial cap is $D$, since a
wrongly learned orientation can have risk above $1/2$.
\end{proof}

This finite-catalog result makes no factorization assumption and
does not replace the full-class lower bounds in the main text.
If every test contrast differs from its calibration contrast by at
most $\kappa$, the same argument adds $D\kappa$. A few observed
blocks cannot certify such a condition for future device operation.

For $A=4$, $q=3$, $m=128$, $B=2048$, $D=1$ and $\delta=0.05$,
Eq.~\eqref{eq:service91-bound} gives an excess-risk bound of $0.208$,
using 49,152 healthy calibration shots. It guarantees consistency;
it does not certify the much smaller empirical improvement.

Every probe uses the same query- and label-conditioned likelihood
alarm. Separate response-fit data estimate healthy and deliberately
faulty report probabilities. A third, healthy calibration block
sets the threshold. For an episode, we sum
$\log[\widehat p_{\rm fault}(Y\mid t,g)/
\widehat p_{\rm healthy}(Y\mid t,g)]$, using Jeffreys smoothing.
With $n_0$ independent healthy calibration episodes, the threshold is
their $\lceil0.95(n_0+1)\rceil$th smallest score; we alarm only on
strict exceedance. Exchangeability bounds the probability of a future
healthy score exceeding this rank, giving marginal false-alarm
probability at most $0.05$, with ties conservative. This is not a
guarantee conditional on one realized threshold. The alarm is trained
for a specified injected fault; Proposition~\ref{prop:service91-calibration}
concerns recovery error, not optimal multi-report fault power.

\subsection{Circuit experiments and escape controls}
Each of two declared gate-noise models uses 1,068 circuits and 1,093,632
simulated shots. Six blocks contain 1,024 shots per circuit setting.
Submitted blocks 0--1 fit responses, block 2 calibrates the alarm,
and blocks 3--5 evaluate both. Settings are shuffled within each block. One model
has equal qubit error scales; the other assigns scales $(1,1,5,1.5)$
to left, middle, right, and ancilla qubits. These are hypothetical
channels, not fits to a quantum processor. They combine gate
depolarization, amplitude damping, asymmetric assignment errors, and
phase damping on measured qubits. They omit spectator measurement
crosstalk. Exact parameters and raw counts accompany the code.

The workload menu includes syndrome-heavy, balanced, coherence-heavy,
left-heavy, and right-heavy request distributions. The displayed
coherence-heavy mixture is $(0.2,0.6,0.2)$. Over 4,096 independently
sampled request histories of length 128, calibration reduces mean
held-out recovery error from $16.88\%$ to $16.28\%$ in the asymmetric
model. A 1,000-replicate bootstrap resamples calibration and test
shots within each setting, refits the selector, and reuses the request
histories; its 95\% interval for the gain is $[0.323,0.928]$ percentage
points. Both allocation rules use the calibrated decoder orientations;
``request-only'' here ablates allocation calibration. In the symmetric
control, calibration changes error by
$+0.035$ points, with gain interval $[-0.121,0.020]$ points: there is
no resolved benefit. These histories share one acquisition per noise
model. The bootstrap describes shot uncertainty under that channel,
not variation across devices. With the single fixed
history used for the alarm, both the calibrated and request-only
rules select the left pair. That alarm comparison therefore tests
probe choice, not an additional calibration gain.

We replay 256 test episodes from distinct shots within each
probe/workload/fault condition, sampling queries and labels independently.
The polarity-reversal task uses eight reports per episode and 256
healthy threshold-calibration episodes. The weaker fault uses 32
reports and 64 calibration episodes. The extreme fault is nearly
saturated: both pair probes detect all 256 test faults, while the
$X$-product detects 255. It supplies little evidence of a large advantage.

The weaker fault separates the probes. With independently calibrated
thresholds, the left pair detects 184/256 faults with 15/256 healthy
alarms; the $X$-product detects 129/256 with 9/256 healthy alarms.
Because realized false-alarm rates differ, we also report a descriptive
held-out ROC comparison. At an empirical 5\% false-alarm rate, powers
are $70.70\%$ and $54.30\%$. A paired bootstrap over the 256 shared
episode indices gives a gain of $16.4$ points, with interval
$[3.9,28.9]$. The ROC threshold is an evaluation statistic, not the
independently calibrated operating threshold. Figure~\ref{fig:service91}c
plots the descriptive ROC; fixed operating counts are reported above.
Averaging descriptive ROC power over
the 4,096 request histories gives $69.14\%$ for the calibrated selector,
$64.36\%$ for the request-only selector, and $54.23\%$ for the learned
product selector, conditional on the same response and test acquisition.
All episodes are offline replays, not live service calls. Rules selecting
the same probe share the same replay, and pools are shared across comparisons.

\begin{table}[t]\centering
\caption{Escape controls for the asymmetric circuit model at request
probabilities $(0.2,0.6,0.2)$. Errors use separate test shots.
The calibrated and request-only rules choose the left pair for the
displayed fixed history. Depth is stored-state entanglement depth.}
\label{tab:service91-controls}
\begin{tabular}{lcccr}\toprule
Probe/control & Depth & Early query & Calls & Error (\%)\\\midrule
Z-product & 1 & No & 1 & 31.24\\
X-product (selected product) & 1 & No & 1 & 24.58\\
Left pair & 2 & No & 1 & 16.09\\
Right pair & 2 & No & 1 & 17.73\\
Full GHZ & 3 & No & 1 & 9.99\\
Query-aware product & 1 & Yes & 1 & 6.60\\
Z-product with second $ZZI$ call & 1 & No & 2 & 5.36\\\bottomrule
\end{tabular}
\end{table}

The back-action experiment uses the same Bell factories. After a
$Z$ check we measure the supported pair's $XX$ coherence; after
$XXX$ we measure its $ZZ$ coherence. Both parity sectors and two
opposite coherent phases within each sector are included, with
all outcomes retained. Ancilla extraction retains signed coherence
$0.893$, $0.764$, and $0.649$ for $ZZI$, $XXX$, and $IZZ$; local
refinement gives $-0.0004$, $-0.0016$, and $0.0072$. No-instrument
references and repeated parity checks accompany these witnesses.
The no-instrument reference is not duration matched. The simulation
demonstrates the distinction between the two implemented channels;
it is not a physical instrument certificate. Full instrument
characterization must also address preparation and analysis errors
\citep{pereira2023tomography}.

Finally, we test the main text's multiplicative contrast model on
these circuit responses. Its supported entries require
$b_{L0}b_{P2}b_{R1}=b_{R2}b_{P0}b_{L1}$, with unsupported entries zero.
A constrained likelihood fit includes independent output biases
and uses calibration shots only. On the asymmetric test set, the
direct-minus-factorized mean log score is $-1.2\times10^{-5}$ nats per
shot, with conditional normal interval
$[-4.4,2.0]\times10^{-5}$. This experiment does not resolve a
predictive difference between the models. We retain both results;
direct calibration is useful without claiming the factorization fails.

\begin{figure}[t]\centering
\includegraphics[width=\linewidth]{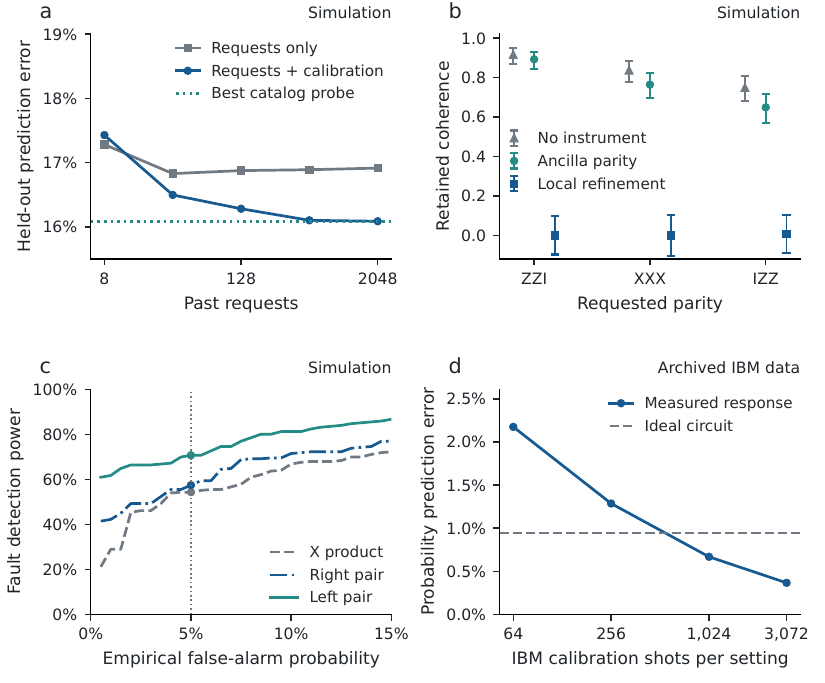}
\caption{(a--c) Declared-noise simulation. (a) Mean recovery error over
4,096 request histories sharing calibration and test shots.
(b) Unconditional signed Bell coherence; shot-weighted 95\% simultaneous
binomial bounds over prepared-state/block strata. (c) Query-conditioned
alarms for the 15\% report-flip fault: 256 disjoint 32-shot test episodes
per condition. Held-out ROC curves and empirical 5\% false-alarm points
are descriptive, not prospective thresholds. $X$-product is the selected
product comparator. (d) Archived physical IBM observations: probability
MAE across 24 settings on disjoint test blocks, with nested calibration
prefixes. This single retrospective two-qubit split tests response
prediction, not the instruments in (b).}
\label{fig:service91}
\end{figure}

\subsection{Physical calibration on archived IBM observations}
The earlier fixed-code experiment acquired 215,040 shots on
\texttt{ibm\_fez}, qubits 22 and 23, on 6 September 2026. Its raw
counts, receipt, and circuit hashes remain unchanged. We now use the
147,456 prediction shots for a retrospective response-calibration
test. There are 24 settings: two conditions, three preparations,
two queries, and two labels. Submitted blocks 0--2 supply 3,072
calibration shots per setting; blocks 3--5 supply 3,072 test shots.
The split was selected after the original aggregate results were
known. Per-circuit execution timestamps are unavailable, so these
are disjoint submitted blocks, not a demonstrated later-day test.

We estimate each report probability with Jeffreys smoothing,
$\widehat p=(K+1/2)/(B+1)$, and score predictions on the test shots.
Mean absolute probability error decreases from $0.00942$ for ideal
circuits to $0.00365$ for measured responses. At budgets 64 and 256,
estimation error is larger than the discrepancy of the ideal model;
the larger calibration budgets improve prediction
(Figure~\ref{fig:service91}d). The held-out Brier score decreases
from $0.070311$ to $0.070197$. Its paired gain is
$0.000114$, with conditional shot-level normal interval
$[0.000064,0.000164]$. All 20 possible three-block
calibration/three-block test splits yield positive Brier gain, ranging
from $7.43\times10^{-5}$ to $1.41\times10^{-4}$. These overlapping
splits are a sensitivity analysis, not independent replications.
These data support response calibration, without identifying isolated
noise parameters.

The original 1,068-circuit, 273,408-shot acquisition bundle was
superseded before submission. Appendix~\ref{app:service94} reports the
new four-qubit acquisition, including backend-bound compilation,
measurement-exposure controls, and calibration transfer to later jobs.
The original bundle and its ideal checks remain in the artifact.

\section{Learning probes for detector diagnosis}
\label{app:service93}

The same allocation problem governs two tasks: recovering a stored bit
and detecting a change in the detector. This section establishes the
connection over the full physical encoder class, then tests its limits
under more general detector noise. Numerical results in this section
are calculations for declared models. Appendix~\ref{app:service94}
reports the prospective quantum-processor evaluation, including
calibration transfer, coherence controls, and acquisition costs.

\subsection{An exact diagnostic experiment for every CSS allocation class}
Let $H\in\{0,1\}$ denote healthy or faulty operation. Before each call,
the client prepares a state in $\mathcal S_{\mathcal F}$, the partition
class of Section~\ref{sec:access-model86}. An independent request
$T\sim\pi$ then selects one of the independent commuting CSS queries.
The detector measures its ideal parity and flips the report with
probability $e_H$, where $0\le e_0<e_1\le1$. Flips are independent
across calls and of the request, preparation and ideal outcome,
conditional on $H$. The evaluator sees the request, report and any
classical preparation tag. Each state must belong to
$\mathcal S_{\mathcal F}$ conditional on that tag and the complete
past history. The client retains no quantum system between calls.
Classical randomization, biased states and adaptation to every past
report are allowed. Training requests are independent of $H$.

For partition $\Pi$, let $I$ be a locally compatible set of queries,
as in Theorem~\ref{thm:region86}, and define
\[
 p_{\mathcal F}(\pi)=\max_{\Pi\in\mathcal F,\ I\text{ compatible on }\Pi}
                      \sum_{t\in I}\pi_t.
\]
Write $\mathsf E_p$ for an experiment that returns an erasure with
probability $1-p$ and an observed flip bit distributed as
$\operatorname{Bernoulli}(e_H)$ with probability $p$. Its erasure
indicator is independent of $H$. Let $B_{p,N}(\alpha)$ be its optimal
$N$-call detection power at false-alarm probability at most $\alpha$.

For one-call learning, let $S_m\sim\pi^m$ be the training requests.
A learner $A$ maps $S_m$ to an admissible probe, its retained classical
tag, and an alarm rule. Write $\beta_\pi(A;S_m)$ for its power,
averaged over preparation randomness, the new request and the report.
Define
\[
 \mathfrak D_{\mathcal F}(m)=\inf_A\sup_{\pi\in\Delta_q}
 \mathbb E_{S_m}\!\left[
 B_{p_{\mathcal F}(\pi),1}(e_0)-\beta_\pi(A;S_m)\right].
\]
Here $A$ cannot use $\pi$, and its false-alarm probability must be at
most $e_0$ conditional on every training record, tag and request.
The comparator knows $\pi$. The identity below also holds if this
conditional requirement is weakened to false-alarm control averaged
over training and test randomness, uniformly over $\pi$.

\begin{theorem}[Allocation determines diagnostic power]
\label{thm:diagnostic93}
Under the preceding model, the optimal $N$-call experiment over all
admissible, classically adaptive probe policies is equivalent, in
Blackwell's comparison of experiments, to
$\mathsf E_{p_{\mathcal F}(\pi)}^{\otimes N}$. A fixed signed projector
probe supported on a maximizing $I$ attains it.
At one call and false-alarm level $\alpha=e_0$,
\begin{equation}
 B_{p,1}(e_0)=e_0+(e_1-e_0)p.
 \label{eq:diagnostic-linear93}
\end{equation}
Consequently, if $\mathfrak R_{\mathcal F}(m)$ is the sharp-report
minimax prediction regret of the main text, the minimax loss of
one-call diagnostic power after $m$ training requests is exactly
\begin{equation}
 \mathfrak D_{\mathcal F}(m)
       =\frac{2(e_1-e_0)}{D}\mathfrak R_{\mathcal F}(m),\qquad D>0.
 \label{eq:diagnostic-learning93}
\end{equation}
The attaining alarm has false-alarm probability $e_0$ conditional on
every training record, preparation tag and request; it does not
require knowledge of $\pi$.
\end{theorem}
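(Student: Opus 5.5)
The plan is to reduce every admissible call to a garbling of the erasure experiment $\mathsf E_p$, using the same three-branch decomposition as in the proof of Corollary~\ref{cor:logloss90}. Fix a call, and condition on the classical tag and the complete past history. The probe is then a fixed state $\rho\in\mathcal S_{\mathcal F}$. For request $t$, the ideal parity $S$ has mean $a_t=\Tr(P_t\rho)$. Write its law as a mixture with three components: with weight $|a_t|$, $S$ is the known sign $\operatorname{sign}(a_t)$; with weight $(1-|a_t|)$, $S$ is uniform and independent of everything. The report is $Y=S\cdot(-1)^{F}$, where $F\sim\operatorname{Bernoulli}(e_H)$ is the flip.
\begin{itemize}
\item In the first branch, $Y$ reveals $F$ exactly, because the evaluator knows the tag, the history, and hence $a_t$.
\item In the second branch, $Y$ is uniform whatever the value of $H$, so it is an erasure.
\end{itemize}
Averaging over $T\sim\pi$, one call is therefore a garbling of $\mathsf E_{p(\rho)}$ with $p(\rho)=\sum_t\pi_t|a_t|$. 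The single-state reduction in Appendix~\ref{app:region86} identifies the vector $(|a_t|)_t$ with a point of $\mathcal C_{\mathcal F}$. By Theorem~\ref{thm:region86}, $p(\rho)\le p_{\mathcal F}(\pi)$. Finally, $\mathsf E_p$ is a garbling of $\mathsf E_{p'}$ whenever $p\le p'$: erase an observed flip independently with probability $1-p/p'$.

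For $N$ adaptive calls, I would build the simulation sequentially. Draw i.i.d.\ outputs of $\mathsf E_{p_{\mathcal F}(\pi)}$. At call $j$, apply the garbling kernel determined by the tag and the simulated past, both of which are available to the simulator. The chain rule then shows that the joint law of tags, requests and reports equals that of the actual policy under both values of $H$. This gives Blackwell domination by $\mathsf E_{p_{\mathcal F}}^{\otimes N}$.

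For attainment, I would take the signed projector code of Eq.~\eqref{eq:projector-code86} on a maximizing $I$, either with a fixed sign or with a recorded random sign.
\begin{itemize}
\item A selected query has a deterministic, known parity, so the report reveals the flip.
\item By generator independence, an unselected query has a uniform parity independent of $H$, so the report is an erasure.
\end{itemize}
The request identity reveals which case occurred. The probe therefore realizes $\mathsf E_{p_{\mathcal F}}$ exactly, and this gives equivalence.

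Eq.~\eqref{eq:diagnostic-linear93} follows from Neyman--Pearson on $\mathsf E_p$. The likelihood ratio is $e_1/e_0>1$ on an observed flip, $1$ on an erasure, and $(1-e_1)/(1-e_0)<1$ on an observed non-flip. The optimal test at level $e_0$ does two things. It rejects every observed flip, which uses false-alarm mass $pe_0$. It also rejects an erasure with probability $e_0$, which uses mass $(1-p)e_0$. The total level is exactly $e_0$, and the power is $pe_1+(1-p)e_0=e_0+(e_1-e_0)p$.

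This rule has level exactly $e_0$ conditional on each request, each tag, and hence each training record. It never uses $\pi$: the selected set is fixed by the learner, and the alarm depends only on whether the request lies in that set.

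For Eq.~\eqref{eq:diagnostic-learning93} I would prove both inequalities.
\begin{itemize}
\item \emph{Diagnosis is no easier than prediction.} Conditional on the training sample, any learner's probe has an effective $\widehat p=\sum_t\pi_t|a_t|$, and by the above its power is at most $e_0+(e_1-e_0)\widehat p$. Its diagnostic loss is therefore at least $(e_1-e_0)(p_{\mathcal F}-\widehat p)$. The same state $\rho$ defines a prediction learner through the pair $(\rho,Q\rho Q^\dagger)$ of Appendix~\ref{app:region86}. That learner has sharp-report contrast vector $(|a_t|)_t$, hence prediction excess $(D/2)(p_{\mathcal F}-\widehat p)$. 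Taking $\inf\sup$ gives $\mathfrak D\ge(2(e_1-e_0)/D)\,\mathfrak R$.
\item \emph{Prediction is no easier than diagnosis.} For any prediction learner, Theorem~\ref{thm:region86} replaces its contrast vector by a randomized independent-set indicator with no smaller expected score, without knowing $\pi$. The corresponding projector probe with the alarm above has loss exactly $(e_1-e_0)$ times the deletion gap, which is $2(e_1-e_0)/D$ times the prediction excess.
\end{itemize}
If false-alarm control is only averaged over training and test randomness, the same argument goes through. Apply Neyman--Pearson to the mixture experiment $\mathsf E_{\mathbb E\widehat p}$. The resulting bound is still linear in $\mathbb E\widehat p$, so the identity is unchanged.

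The main obstacle I expect is the adaptive Blackwell step. Biased states, tags correlated with past reports, and history-dependent requests all have to be handled inside one sequential simulation. The key points to verify are that the garbling at each call may depend on simulated past outputs, and that the conditional independence of flips given $H$ makes the joint laws agree.
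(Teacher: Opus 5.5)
Your proposal is correct and follows essentially the paper's proof. You reduce each call to a garbling of the erasure experiment $\mathsf E_p$, simulate adaptive policies sequentially from i.i.d.\ copies of $\mathsf E_{p_{\mathcal F}(\pi)}$, attain with the signed projector probe, and tie prediction and diagnostic regret to the same weighted-coverage gap. The differences are minor: you split each query's parity law into a known-sign and a uniform branch, so you need only the bound $\sum_t\pi_t|a_t|\le p_{\mathcal F}(\pi)$, where the paper decomposes the signed expectation vector into signed independent-set vertex channels; you use Neyman--Pearson where the paper uses the total-variation bound $(e_1-e_0)p$; and you convert learners explicitly via $(\rho,Q\rho Q^\dagger)$.
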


\begin{proof}
The signed single-state expectation region is the convex hull of
$s\odot\mathbf1_I$, where $I$ ranges over compatible sets and
$s_t\in\{-1,+1\}$. For the outer bound, the main theorem's
single-state cut inequalities bound absolute correlations. For
attainment, the normalized signed projector
\[
 \tau_s^I=2^{-d}\prod_{t\in I}(I+s_tP_t)
\]
is separable across the chosen partition. Independence of the global
Paulis makes its expectation $s_t$ on $I$ and zero off $I$.
Thus every admissible state's entire report channel is a mixture
of these vertex channels under both hypotheses. This is a channel
decomposition; it need not decompose the original density matrix.
Revealing the mixture component can only help the evaluator.

A vertex gives a known ideal report on $I$ and a fair,
hypothesis-independent report otherwise. Its experiment is
$\mathsf E_{\pi(I)}$: the remaining request identity and known sign
are ancillary conditional on the informative flag and flip bit.
If $p'\le p$, erase an informative output of $\mathsf E_p$ with
probability $1-p'/p$ to obtain $\mathsf E_{p'}$. Draw a compatible
query identity from its conditional distribution and restore its
known sign to reproduce the full vertex channel. The resulting
kernel never uses $H$. Applied conditionally on the simulated history
at each call, it reproduces every adaptive policy from independent
copies of $\mathsf E_{p_{\mathcal F}}$. The maximizing fixed vertex
attains this experiment, proving optimality for every terminal test.

For Eq.~\eqref{eq:diagnostic-linear93}, alarm on an informative
flipped report, never alarm on an informative unflipped report, and
alarm with independent probability $e_0$ on an erasure. Its false-alarm
probability is $e_0$ even conditional on the request. Its power is
$e_0+(e_1-e_0)p$. Conversely the total variation between the two
one-call laws is $(e_1-e_0)p$, so no level-$e_0$ test has greater power.

More generally, a conditional state with expectations $r_t$ has
one-call total variation $(e_1-e_0)\sum_t\pi_t|r_t|$. Averaging over
training and retained tags preserves this upper bound. The mean
absolute-correlation vector remains in the main theorem's convex
feasible region. Every vector in that region is realized by a tagged
mixture of signed projectors, with the preceding alarm attaining the
bound. Prediction regret is $D/2$ times the loss of weighted coverage;
diagnostic-power regret is $e_1-e_0$ times that same loss. Taking the
same infimum over learners and supremum over workloads proves
Eq.~\eqref{eq:diagnostic-learning93}.
\end{proof}

In particular, the path law becomes
$\Theta((e_1-e_0)k^{-1}\min\{1,\sqrt{d\log(k+1)/m}\})$, uniformly
over $2\le k<d$. The connected-region law transfers with the same
geometric conditions. This is a learning result for probe allocation:
the hypotheses $e_0,e_1$ are known. Estimating a healthy baseline or
an operating threshold costs separate calibration data.

For arbitrary $N$, a learned compatible set of mass $\widehat p$
also satisfies the same-level transfer bound
\[
 0\le B_{p^*,N}(\alpha)-B_{\widehat p,N}(\alpha)
 \le 1-\{1-(e_1-e_0)(p^*-\widehat p)\}^{N}.
\]
To see this, promote erased outputs of $\mathsf E_{\widehat p}$ to
informative outputs with probability
$(p^*-\widehat p)/(1-\widehat p)$ and generate their flip bits using
$e_0$. Under $H=0$ this exactly reproduces $\mathsf E_{p^*}$; under
$H=1$ the one-use total-variation error is
$(e_1-e_0)(p^*-\widehat p)$. Product coupling gives the bound.
It compares optimal population tests, not estimated thresholds.

\subsection{The complete separable benchmark and its resource cost}
For $ZZI,XXX,IZZ$, the full separable signed region is
$|r_0|+|r_1|\le1$, $|r_1|+|r_2|\le1$.
Its vertices are $(s,0,t)$ and $(0,s,0)$, with $s,t\in\{-1,+1\}$.
The full depth-at-most-two region adds $(s,t,0)$ and $(0,s,t)$,
giving $|r_t|\le1$ and $\sum_t|r_t|\le2$. Therefore
\[
 p_1=\max\{\pi_1,\pi_0+\pi_2\},\qquad
 p_2=1-\min_t\pi_t,\qquad p_3=1.
\]
These are full-class optima, including arbitrary mixed states and
adaptation, rather than optima over two named product preparations.
At the uniform workload, $p_1=p_2$: entangling a pair has no benefit.
At $(0.2,0.6,0.2)$, $p_1=0.6$ and $p_2=0.8$.

We calculate the exact finite-call ROC from
\[
 q_h(k,j)=\binom Nk p^k(1-p)^{N-k}\binom kj e_h^j(1-e_h)^{k-j}.
\]
Here $k$ counts informative reports and $j$ counts their flips. Sorting
$q_1/q_0$ and randomizing at the boundary gives the Neyman--Pearson
test. We use $e_0=0.02$ and an additional independent 15\% fault,
so $e_1=0.164$, with $\alpha=0.05$. At 32 calls, optimal power is
$82.39\%$ for all separable policies and $88.45\%$ at depth two.
The fixed-horizon minimum counts attaining 80\% power are 30 and 22. Independent
linear-program checks and 100,000 random product states test the
implementation; the proofs supply global optimality.

\begin{table}[ht]\centering
\caption{Exact-model resources for fixed-horizon tests attaining 80\% power at 5\%
false alarms. Preparation and detector columns count logical CNOTs.
Fractional entries are expectations over requests, not fractional gates.
These known-channel costs exclude calibration and dynamic-control latency.}
\label{tab:diagnostic-resources93}
\begin{tabular}{lrrrr}\toprule
Probe policy & Calls & Preparation & Detector & Total CNOTs\\\midrule
Fixed-horizon optimal separable & 30 & 0 & 126.0 & 126.0\\
Bell pair & 22 & 22 & 92.4 & 114.4\\
Full GHZ & 18 & 36 & 75.6 & 111.6\\
Query-aware product & 18 & 0 & 75.6 & 75.6\\
Adaptive second $ZZI$ call & 28.8 & 0 & 108.0 & 108.0\\\bottomrule
\end{tabular}
\end{table}

The first service call uses 4.2 detector CNOTs on average at this
workload. A Bell preparation adds one. The adaptive product control
makes a second $ZZI$ call only after $XXX$, so it averages 1.6 calls
and six detector CNOTs per probe. Over $N$ probes its call count is
$N+\operatorname{Binomial}(N,0.6)$. It costs fewer CNOTs than the Bell
policy in Table~\ref{tab:diagnostic-resources93}, but more detector
calls. These are not lower bounds on expected cost with early stopping:
a sequential implementation can stop when the terminal decision is
already determined. Early query access removes the preparation problem altogether.

For $L$ diagnostic episodes, the total cost of policy $a$ is
$C_a(L)=C_{\rm cal}(a)+L N_a[C_{\rm prep}(a)+C_{\rm detector}(a)]$.
Thus extra pair-calibration cost $\Delta C_{\rm cal}$ is amortized
against the fixed-horizon one-call separable policy only when
$11.6L>\Delta C_{\rm cal}$, if cost is measured in logical CNOTs.
The corresponding detector-call condition is $8L>\Delta Q_{\rm cal}$.
This comparison charges calibration rather than assigning it zero
cost. Appendix~\ref{app:service94} records actual calibration
shots, compiled native gates, measurement/reset slots, latency and
QPU usage separately; CNOT counts are not a substitute for elapsed time.
The frontier also assumes ideal probes. An additional independent
1\% flip of the Bell probe's supported parity signs reduces its
32-call power from 88.45\% to 84.84\%; at 1.6\% it falls to 82.00\%,
below the ideal separable value of 82.39\%. Preparation error can
therefore remove a detector-call saving before cost amortization matters.

The operational constraint describes a factory that commits probes
before a multiplexed detector's next request is selected. It can use
past demand but cannot rebuild the current probe. This can be
implemented and tested as a service contract. Ordinary syndrome
schedules need not satisfy it, and a deployed diagnostic system with
early query access should use the cheaper query-aware control.

\subsection{A certificate for arbitrary binary detector effects}
The scalar coverage reduction need not survive biased, query-dependent
or non-Pauli errors. We therefore allow arbitrary healthy and faulty
binary effects $E_{h,t,y}$, with no CSS or independent-error assumption.
The request still arrives after preparation, and these effects specify
the conditional response on every use; a stationary one-use marginal
alone does not imply a memoryless experiment.

For a fixed $\lambda\ge0$ and decision vector
$d\in\{0,1\}^{2q}$, define
\[
 A_{d,\lambda}=\sum_{t,y}\pi_t d_{ty}
                     (E_{1,t,y}-\lambda E_{0,t,y}).
\]
Every level-$\alpha$ alarm using a separable state has power at most
$\lambda\alpha+U(\lambda)$, where
\[
 U(\lambda)=\max_d\ \max_{\rho\succeq0,\ \operatorname{Tr}\rho=1,
                         \rho^{\Gamma_i}\succeq0\ \forall i}
                  \operatorname{Tr}(A_{d,\lambda}\rho).
\]
The positive-partial-transpose (PPT) class contains every fully
separable state. Enumerating binary decisions suffices because the
objective is linear in each alarm probability. The same upper bound
holds for arbitrary revealed classical mixtures.

For finitely many slopes, the concave polygon
$\overline B(\alpha)=\min\{1,\min_\lambda[\lambda\alpha+U(\lambda)]\}$
is the ROC of a dominating binary-hypothesis experiment. Its outcome
masses are the horizontal and vertical increments of the polygon,
including a fault-only atom of mass $\overline B(0)$. ROC ordering is
Blackwell ordering for classical experiments with two hypotheses
\citep{blackwell1953}; the classical garbling criterion is reviewed
in \citet[Section~2]{buscemi2012}.
Consequently the product of this experiment dominates every
classically adaptive separable $N$-call policy: condition on its
history and apply the one-use garbling at each step. This gives a
finite-report upper bound without fitting a separable probe catalog.
Its computation scales exponentially in the number of outcome/query
decisions and in Hilbert-space dimension; here $q=3$ and the dimension
is eight.

Solver output is not itself a certificate. We store Gram factors
$Z_i=L_iL_i^\dagger$ and verify
$uI-A_{d,\lambda}-\sum_i Z_i^{\Gamma_i}\succeq0$ using interval
arithmetic after a numerical change of basis. Gram positivity is
exact; interval Gershgorin bounds also certify that the basis change
is invertible. A separate matrix-formation allowance covers floating
products and exact-complement rounding. Polygon intersections and
likelihood-ratio refinement use rational arithmetic. Each likelihood
atom is split between its enclosing geometric-grid ratios while
preserving both hypothesis masses, an explicit Blackwell refinement.
Positive convolutions propagate outward error bounds. The grid ratio
is $41/40$, with $21/20$ as a coarser check. These are numerical upper
bounds, not confidence intervals from sampled data.

\subsection{Optimized probes under correlated and non-Pauli errors}
We perturb the detector with the unitary
$U_s=\exp[-is(XYI+0.7IYX+0.3YIZ)]$, followed by independent amplitude
damping of strength $0.12s$. If $\mathcal N_s$ is this channel, the
healthy positive effect for query $t$ is
\[
 E_{0,t,+}=\tfrac12\{[1+0.03s(t-1)]I+
                         (0.96-0.08s)\mathcal N_s^*(P_t)\},
 \quad t=0,1,2.
\]
Faulty reports undergo a further 15\% flip. The stored outcome-zero
matrices and their exact complements define the numerical POVMs.
These declared channels include correlated coherent errors and biased
responses; they are not fitted device models. Response probabilities
alone do not certify the post-measurement instrument.

We optimize six Bloch angles for product probes. A one-call search
starts from all 216 Pauli product states; a separate search directly
optimizes a 32-report likelihood alarm. Both use seed 930907 and retain
the best known feasible probes. Likelihood-score grouping gives an
explicit achievable test. The separable upper bound, rather than the
optimizer's convergence, supplies the comparison to the full class.

\begin{table}[ht]\centering
\caption{Detection at 32 calls and 5\% false alarms under three declared
POVM models. Product and Bell columns are achievable powers; the final
column bounds every separable policy, including classical adaptation.
No column reports a hardware estimate or a sampling confidence interval.}
\label{tab:general-detector93}
\begin{tabular}{lrrrr}\toprule
Perturbation $s$ & Fixed $X$ product & Optimized product & Best fixed Bell & Separable upper\\\midrule
0 & 82.39\% & 82.39\% & 88.45\% & 82.42\%\\
0.08 & 66.89\% & 75.04\% & 79.96\% & 77.77\%\\
0.20 & 34.42\% & 68.09\% & 57.31\% & 70.78\%\\\bottomrule
\end{tabular}
\end{table}

At $s=0.08$, the Bell probe exceeds the full separable upper bound
by more than 2.19 percentage points. At $s=0.20$, the optimized
product outperforms both fixed Bell probes. Comparing only with the
named $X$ product would miss this reversal. The upper bound can be
loose: it relaxes separability, uses finitely many slopes and refines
likelihood ratios. At $s=0$, its excess over the exact separable
optimum is 0.023 points, providing a quantitative implementation check.

\begin{figure}[ht]\centering
\includegraphics[width=\linewidth]{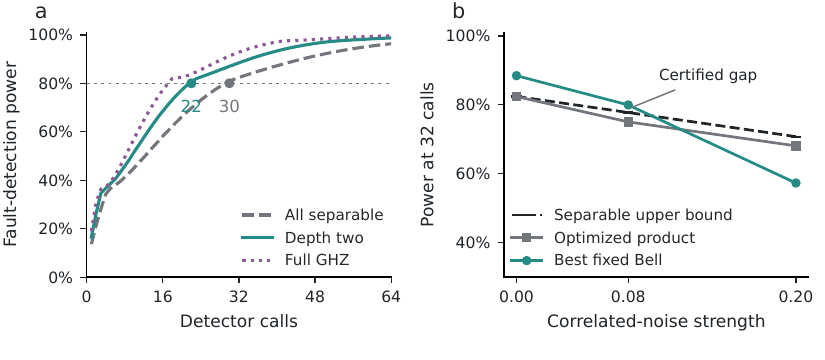}
\caption{(a) Exact optimal detection power at 5\% false alarms in the
report-flip model, including the full separable and depth-two classes.
(b) At 32 calls, optimized product and fixed Bell probes under the
declared correlated-noise models, with an upper bound on every
classically adaptive separable policy. The positive gap at $s=0.08$
is certified; the reversal at $s=0.20$ motivates direct probe optimization.
All curves are model calculations.}
\label{fig:diagnostic93}
\end{figure}

\subsection{From ideal planning to physical acquisition}
The initial plan selected among 50 fixed designs by integrating
threshold randomness and test counts under the ideal detector model.
Its Bell-versus-product rule had an 88.69\% probability of passing in
two independently calibrated cycles. An extra 0.5\% Bell
parity-preparation flip reduced that probability to 40.79\%; at 1\%
it was 8.53\%. These are model planning probabilities, not observations.
Five ideal dynamic-circuit simulations, each with 32,768 shots,
reproduced every supported parity exactly and left unsupported
contrasts within 0.021 of zero.

Physical commissioning required a shorter request generator and
measurement-interval protection. Direct response calibration also
showed weaker and differently ranked probes than the ideal plan
predicted. Appendix~\ref{app:service94} gives the superseding acquisition
and its operating rules, fixed before the first threshold or test job.
The original circuit checks and planning calculations remain in the
artifact. Neither limited response calibration nor passing a finite
probe comparison supplies a full separable device certificate; that
would require a valid tomography and preparation-error confidence
region.

\subsection{Relation to other learning problems}
Request-distribution-aware access codes optimize recovery of one
requested bit from a many-bit message for a supplied distribution
\citep{tanasescu2020}. Our task stores one bit; we learn an entanglement
allocation from classical request observations under a fixed detector
and depth constraint.

Preparation games optimize sequential measurement protocols against
constrained, history-dependent state preparation
\citep{weilenmann2021preparation}. Our fixed query distribution and
memoryless detector hypotheses yield a common dominating experiment;
this permits tensorizing a one-use bound and identifies the allocation
learning cost. We do not claim the general idea of optimizing tests
over constrained preparation strategies.

Our request-only allocation learner observes classical query identities.
Property testing, state learning and classical shadows use measurement
data from copies of an unknown state
\citep{bubeck2020testing,aaronson2007learnability,huang2020shadows}.
Quantum-memory separations compare learners that retain quantum information
across experiments with learners that retain only classical outcomes
\citep{chen2021memory}. Embedding and classifier generalization study
information quantities or trainable circuits
\citep{banchi2021generalization,caro2022generalization}; our action
class follows from physical feasibility. Classical fractional caching also admits
capacity-dependent online regret bounds \citep{paschos2019}. At
$k=d-1$, our path problem becomes sparse-loss prediction
\citep{kwon2016sparse}, choosing the least requested query to omit.

Binary discrimination through any quantum channel admits optimal
orthogonal pure-state encodings \citep[Lemma~4]{dallarno2020};
here the optimum must also respect the entanglement class.

\FloatBarrier
\section{Prospective evaluation on a quantum processor}
\label{app:service94}

The model gives a criterion for useful probe allocation. A physical
implementation must also preserve the relevant coherences and transfer
its calibration to later observations. We test these requirements on
\texttt{ibm\_marrakesh}, using data qubits $(98,111,110)$ and ancilla
112. The detector queries remain $(ZZI,XXX,IZZ)$ with probabilities
$(0.2,0.6,0.2)$. These experiments evaluate the implemented detector
service under this known nominal workload. They test calibration
transfer and the one-call probe comparison; the request-learning rate
is established by the theory and its separate scaling experiments.
The full separable guarantees of Appendix~\ref{app:service93}
apply to its declared models.

\subsection{A request that arrives after preparation}
Each circuit commits its probe before generating the request. The
ancilla first supplies a Bernoulli bit with probability 0.6, then a
Hadamard and a second measurement supply a uniform bit. An $X$ gate
conditioned on the second bit returns the ancilla to zero. The four
request values have ideal probabilities $(0.2,0.3,0.2,0.3)$, independently
of the data. Flat conditional unitaries extract the selected parity;
all measurements occur outside those branches. Alarm scores use only
the recorded request and permitted detector reports. Terminal data-qubit
measurements are retained for diagnostics and never enter those scores.

Initial commissioning exposed a practical obstacle: the longer
measurement-and-reset request generator left Bell $X$ correlations
near zero. Shortening the request generator and applying local $XX$
echo sequences during ancilla measurement intervals restored both
complementary contrasts. The compiler resolves the echo delays using
its measurement schedule. The sequences are identities on the data
in the ideal circuit and add no entangling gates. Four commissioning
jobs selected this implementation; their observations are excluded
from every scientific calibration and test below.

Healthy operation uses the physical detector as implemented. The
fault condition adds $R_y(2\arcsin\sqrt{0.15})$ to the ancilla before
its report measurement. With ideal gates and measurement, this gives
the report probabilities of a 15\% parity-label flip. Physical errors
before and after the rotation enter the calibrated fault channel. It need
not equal a classical relabeling of the conditional quantum state on
arbitrary cross-sector superpositions. The coherence tests therefore
use the healthy instrument and make their own state-preservation
measurement.

\subsection{Calibration selects a probe; a second job checks it}
Each cycle starts with 64 informationally complete product inputs
under each hypothesis, using 1,024 shots per input, plus named probe
controls using 2,048 shots. Positive-semidefinite least squares fits
the three binary detector effects. One six-angle product search fits
the hypotheses separately. A second search constrains the effects by
the injected-flip relation and uses two independently seeded runs.
These are numerical candidate searches, without a global physical
separability certificate.

A separate job measures eight candidates under both hypotheses, with
4,096 shots per candidate and hypothesis: the two continuous product
candidates, $Z$ and $X$ products, both Bell allocations, GHZ, and the
adaptive $Z$ control. It selects the product candidate and entangled
allocation with greatest estimated 32-report power at 5\% false
alarms. All eight candidates remain in the record. The likelihood
scores use these directly measured conditional frequencies with
Jeffreys smoothing; tomography predictions do not substitute for
measured performance. Probe choice and scores are then frozen.

This validation step matters. In the first cycle, the unconstrained
fit predicted nearly perfect detection for its product candidate;
direct measurement estimated 15.01\% power. The structured candidate
reached 33.74\%, compared with 43.23\% for $Z$ product and 51.74\%
for GHZ. These calibration estimates selected $Z$ product and GHZ for
the first held-out comparison. They are not the prospective alarm
rates reported below. The second cycle again selects $Z$ product and
GHZ, with calibration estimates of 40.19\% and 58.54\%; its
unconstrained product candidate reaches only 11.78\% on direct
validation. Both cycles therefore evaluate the same selected probes
with separately fitted scores and thresholds; they do not demonstrate
a learning gain over fixed GHZ. In the physical comparison, ``selected product''
means the best validated candidate from this search and control set.

\subsection{Frozen alarms and independent test jobs}
Each primary policy receives 2,048 healthy calibration episodes of
32 consecutive shots. Its alarm threshold is healthy order statistic
2,009, with a strict exceedance rule. Under independent stationary
healthy episodes, this gives 95.55\% per-policy tolerance coverage for
a 2.5\% false-alarm target. It is not a simultaneous guarantee across
policies or cycles. Four secondary policies receive 128 episodes and
use rank 126, targeting 5\%.

Later jobs acquire 1,536 healthy and 1,536 faulty episodes for each
primary policy; secondary policies receive 128 of each. Every episode
is one original 32-shot circuit instance (PUB). Each primary shot
prepares a fresh probe and executes one requested detector call;
the alarm combines 32 such reports. Records are
randomly ordered within a stage before submission. A second
cycle repeats fitting, candidate validation, threshold calibration,
and testing, with no shared job or observation. A cycle passes only
if a one-sided Fisher test favors the selected entangled probe at
$p\le0.025$ and both primary healthy one-sided 95\% binomial upper
limits are at most 5\%. At this sample size the latter permits at
most 62 alarms in 1,536 healthy episodes. Both cycles must pass;
no pooled result rescues a failed cycle. A GHZ result does not confirm
the earlier Bell-specific hypothesis.

Both cycles share the backend, layout, implementation, and acquisition
day. Separate jobs establish observation separation, not independent
physical noise. Within-PUB shot chronology was not retained.

The enlarged sample sizes were fixed before the first threshold job.
They reflect weaker separation in physical calibration than in the
ideal model. Calibration-only sensitivity calculations propagate
conditional-response and request uncertainty through random
thresholds and test counts. They leave substantial uncertainty about
power and do not account for drift or selection uncertainty. Test
observations never change the probes, scores, thresholds, or sample
sizes. Per-job rates and temporal diagnostics assess the stationarity
assumption separately from the frozen primary rule.

\subsection{Coherence within both parity sectors}
For each Bell allocation we test its supported $Z$ query and $XXX$;
for GHZ we test all three queries. Each preparation uses both parity
sectors and both opposite coherence phases. Three instruments follow
the same request generator: ancilla parity extraction, local
refinement on the query's nonidentity support, and a nominal idle
reference. Each cell uses 256 shots, giving 84 cells per cycle.
These circuits retain the request generator but apply fixed-query
extraction, testing the corresponding primitives. There is no
postselection on requests, reports, or final outcomes.

After a $Z$ query, the witness is $XX$ on the Bell pair or $XXX$ on
GHZ. After $XXX$, it is the appropriate pair's $ZZ$ witness; for GHZ
we use $ZZI$. Multiplying by the prepared sign puts all cells on the
same scale. Ideal extraction and the idle channel preserve a value
of one; the destructive local refinement gives zero. The idle
reference is not duration matched, so its ratio to the extraction
arm is not a preservation estimate. We report absolute residual
coherence and all sectors and phases. Cell intervals use
Clopper--Pearson bounds with Bonferroni correction over the 84 cells
in each cycle, assuming independent stationary shots. These witnesses
test the stated input family, not a complete quantum instrument.

\subsection{Measured outcomes and resource accounting}
\begin{table}[ht]\centering
\caption{Prospective physical alarms under the frozen rule. Each primary cell contains 1,536 separate 32-shot circuit instances. Intervals are pointwise 95\% binomial intervals under independent stationary episodes; job-level sensitivity is reported separately.}
\label{tab:physical94}
\begin{tabular}{llrr}\toprule
Cycle & Probe & Fault alarms (\%) & Healthy alarms (\%)\\\midrule
1 & Selected Z product & 30.53 [28.24,32.91] & 1.43 [0.90,2.16]\\
1 & GHZ & 39.65 [37.19,42.15] & 1.95 [1.32,2.78]\\
2 & Selected Z product & 29.82 [27.54,32.18] & 1.69 [1.11,2.47]\\
2 & GHZ & 46.16 [43.64,48.69] & 2.28 [1.59,3.15]\\
\bottomrule\end{tabular}
\end{table}
The two-cycle superiority rule passes.
In cycle 1, the power difference is 9.11 percentage points (conservative 95\% interval [3.60,14.58]); the one-sided Fisher $p$ is $7.20\times10^{-8}$. The healthy upper bounds are 2.04\% for Selected product and 2.64\% for GHZ.
In cycle 2, the power difference is 16.34 percentage points (conservative 95\% interval [10.77,21.83]); the one-sided Fisher $p$ is $5.79\times10^{-21}$. The healthy upper bounds are 2.34\% for Selected product and 3.01\% for GHZ.

\begin{figure}[ht]\centering
\includegraphics[width=\linewidth]{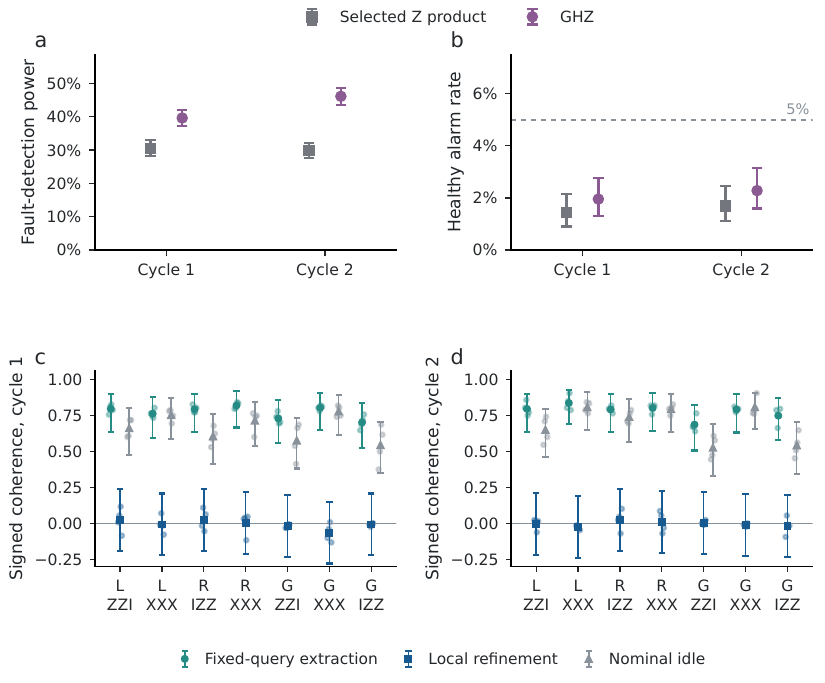}
\caption{(a,b) Held-out detection and healthy alarm rates: 1,536 episodes per policy and hypothesis, 32 shots per episode, with pointwise 95\% binomial intervals. (c,d) Fixed-query coherence controls: columns identify the Bell allocation (L or R) or GHZ (G) and the query. Faint points retain both sectors and phases. Solid points average those four fixed inputs; intervals average the simultaneous cell bounds, preserving their 95\% coverage separately within each cycle under independent stationary shots in each cell. Idle is nominal, not duration matched.}
\label{fig:physical94}
\end{figure}
The full acquisition uses 1,482,496 shots and 522 charged QPU seconds across 40 distinct jobs, including commissioning and coherence controls.
The interval from the first submission to the last completion is 1.99 hours, including compiler, queue, analysis, and client gaps; it is distinct from charged processor time.

\begin{table}[ht]\centering
\caption{Complete new QPU acquisition budget. Shared calibration is charged once to the procedure, not separately to every policy.}
\label{tab:physical94-resources}
\begin{tabular}{lrrr}\toprule
Stage & Jobs & Shots & QPU seconds\\\midrule
Commissioning & 4 & 128,768 & 44\\
Fit and candidate validation & 4 & 442,368 & 130\\
Healthy thresholds & 14 & 294,912 & 128\\
Held-out tests & 14 & 458,752 & 168\\
Coherence controls & 2 & 43,008 & 16\\
Dynamic-branch follow-up & 2 & 114,688 & 36\\
\bottomrule\end{tabular}
\end{table}
Product, Bell, and GHZ preparation use 0, 1, and 2 native CZ gates. Under the nominal request law, their detector circuits each average 4.2 CZ gates, giving totals of 4.2, 5.2, and 6.2 per trial. The adaptive product uses 6.0 CZ gates and 1.6 logical detector calls on average. Every response trial also contains two request measurements and three terminal diagnostic measurements; the adaptive control reserves two detector-report slots. These circuit counts exclude automatic initialization and finalization; their cost is included in the charged job usage above.

\begin{table}[ht]\centering
\caption{Secondary physical comparisons. Each cell has 128 separate 32-shot episodes, with thresholds calibrated from 128 healthy episodes at rank 126. These target 5\% false alarms, whereas the primary policies target 2.5\% with larger calibration samples. Differences from Table~\ref{tab:physical94} therefore reflect operating rules as well as probes. Adaptive Z uses extra detector calls.}
\label{tab:physical94-secondary}
\begin{tabular}{llrr}\toprule
Cycle & Probe & Fault alarms (\%) & Healthy alarms (\%)\\\midrule
1 & Z product & 40.62 [32.04,49.66] & 2.34 [0.49,6.70]\\
1 & X product & 36.72 [28.38,45.69] & 0.00 [0.00,2.84]\\
1 & Left Bell & 25.00 [17.77,33.42] & 0.00 [0.00,2.84]\\
1 & Adaptive Z & 52.34 [43.34,61.24] & 3.91 [1.28,8.88]\\
2 & Z product & 26.56 [19.15,35.09] & 0.00 [0.00,2.84]\\
2 & X product & 32.03 [24.06,40.85] & 6.25 [2.74,11.94]\\
2 & Left Bell & 31.25 [23.35,40.04] & 0.78 [0.02,4.28]\\
2 & Adaptive Z & 65.62 [56.72,73.79] & 6.25 [2.74,11.94]\\
\bottomrule\end{tabular}
\end{table}
Cycle 1's minimum ancilla-extraction cell lower bound is 0.462, with simultaneous coverage over its 84 cells. 
Cycle 2's minimum ancilla-extraction cell lower bound is 0.453, with simultaneous coverage over its 84 cells. 

The pooled rule and per-job stability answer different questions. In the first cycle, GHZ healthy and fault alarm rates vary across jobs and increase with acquisition order after Holm correction. Its final job has 14/224 healthy alarms (6.25\%; pointwise 95\% interval [3.46,10.26]\%). Passing the pooled healthy gate therefore does not establish control below 5\% in every job.
None of the 42 listed cycle-2 drift diagnostics survives Holm correction; this does not establish stationarity.

Conditioning on each job's fault-test margins gives a one-sided exact $p=1.01\times10^{-7}$ in cycle 1; GHZ has higher power in 6 of 7 jobs.
Conditioning on each job's fault-test margins gives a one-sided exact $p=7.53\times10^{-21}$ in cycle 2; GHZ has higher power in 7 of 7 jobs.
This sensitivity permits different baseline rates across jobs but still requires conditional exchangeability within jobs.

\begin{figure}[ht]\centering
\includegraphics[width=\linewidth]{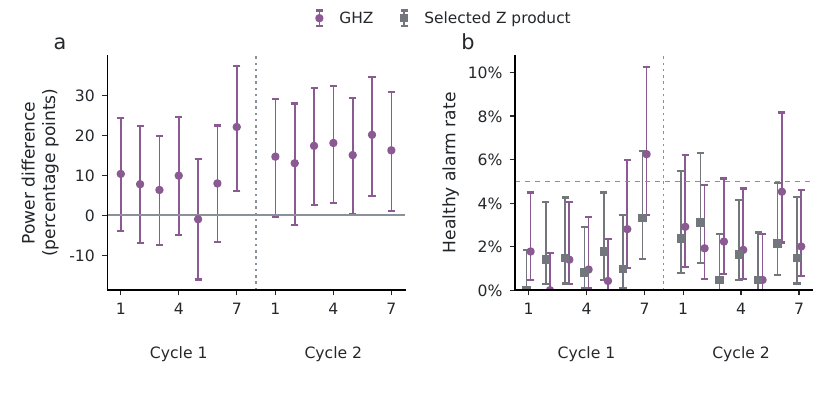}
\caption{Every held-out hardware job, in acquisition order within each cycle. (a) Entangled-minus-product detection power with conservative pointwise 95\% intervals. (b) Healthy alarm rates with pointwise 95\% binomial intervals; the dashed line marks 5\%. Intervals retain their independent-sampling assumptions. Job-specific variation limits extrapolation from aggregate alarm control.}
\label{fig:physical94-drift}
\end{figure}

The adaptive $Z$ control requests a second parity on $XXX$ requests,
relaxing the one-call interface. It uses 1.6 logical detector calls
on average and two physical report slots in every circuit, including
ancilla reset overhead. Its alarm power is informative about that
resource tradeoff, not an equal-resource separable counterexample.
Preparation gates, detector circuits, shared calibration shots,
terminal diagnostic measurements, and charged QPU time are counted
separately. A fixed 32-report comparison alone is not a physical
sample-complexity or total-cost advantage.

\subsection{Coherence through the dynamic service}
The fixed-query controls do not exercise the request-conditioned
extraction branches. A subsequent follow-up tests those branches
directly, with two new jobs of 28 input/target-query cells and 2,048
shots per cell. Both protocols were fixed before either job. The
follow-up uses the same sectors, phases, and witnesses, but executes
the complete four-branch request selection used for alarms.

For each input, we estimate coherence conditional only on its
recorded target request. Every request and outcome remains in the
raw record; neither detector agreement nor witness value selects
shots. The 56 target cells across both jobs form one Bonferroni
family. Conditional binomial intervals assume independent stationary
shots given the observed request counts. Expected-parity agreement
from the detector report is a separate diagnostic on the same shots.
Unconditional witness values are also retained: ideally they are
0.8 for left-Bell/$ZZI$ and right-Bell/$IZZ$, and one for the other
input families. This follow-up leaves the original alarm rules and
their outcomes unchanged.

56 of 56 target-request coherence cells have positive simultaneous lower bounds; the smallest is 0.302. Expected-parity agreement ranges from 78.80\% to 95.63\% across the cells. These outcomes measure residual coherence through the implemented dynamic branches on the stated input family.

\begin{figure}[ht]\centering
\includegraphics[width=\linewidth]{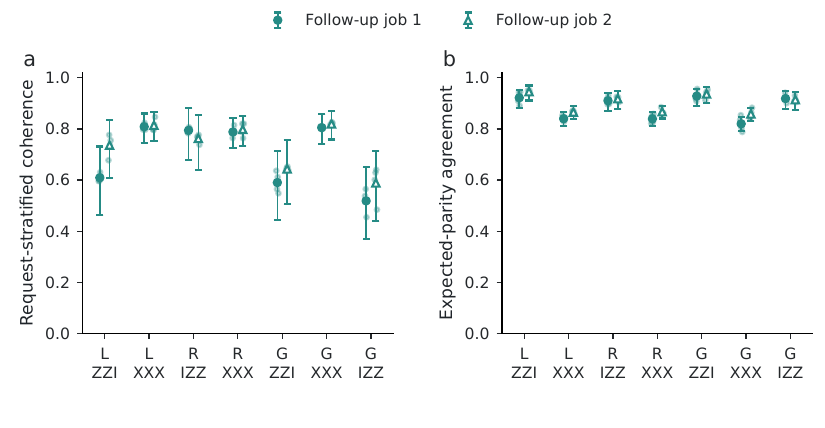}
\caption{Dynamic-branch follow-up on the quantum processor. Each point averages both sectors and phases for one preparation and target query, with four faint input-level values. Every input receives 2,048 shots; the analysis stratifies only by the recorded request. (a) Signed coherence, with bounds averaged from the simultaneous 56-cell family across both jobs. (b) Expected-parity agreement; each mean has a conservative pointwise 95\% interval obtained from four Bonferroni-adjusted cell bounds. L/R denote Bell allocations and G denotes GHZ. Neither observable selects the observations used by the other.}
\label{fig:physical94-dynamic}
\end{figure}

\subsection{Acquisition provenance and timing audit}
Every acquisition retains its frozen circuit file, backend calibration,
preparation-prefix checks, protocol and source hashes, actual IBM job
identifier, joint outcome counts, and charged usage. Stages are split
into partitions before their first submission; each record retains
its actual job identifier. The same free Open-plan instance supplies
all jobs. No mitigation or synthetic resampling replaces physical
observations.

The experimental scheduler metadata requires an additional check.
Some returned timing payloads correspond to a different circuit in
the same job, despite matching top-level circuit labels. In the first
candidate-validation job, two payloads exchange healthy-GHZ and
faulty-Bell signatures. Their measured $IZZ$ report-one rates,
9.05\% and 52.94\%, respectively, agree with the intended probes and
independent fit controls, rather than the swapped timing identities.
Deterministic request controls give the same conclusion. We retain
the original payloads and the audit; unverified per-record timing
cannot establish duration matching. This does not change the counts,
calibration choices, or frozen alarm rule.

\FloatBarrier
\section{The cost of changing the detector interface}
\label{app:interface97}
Extra calls can match an entangled probe's diagnostic performance.
To determine their cost, we compare the complete statistical experiments.
Let
$\mathsf E_p$ denote the experiment with probabilities
\[
 P_h(\bot)=1-p,\qquad P_h(0)=p(1-e_h),\qquad P_h(1)=pe_h,
\]
where $h\in\{0,1\}$ indexes detector health, $0<e_0<e_1<1$, and
the informative flag is observed. The nonerased bit records whether
the reported sign disagrees with the prepared sign. We use Blackwell equivalence under both hypotheses throughout this
section. It requires more than matching power at one operating point.

\subsection{Fresh requests and optimal expected call conversion}
\begin{theorem}[Exact fresh-call conversion]
\label{thm:interface97}
Suppose every admissible fresh call, conditional on its preceding
classical history, is dominated by $\mathsf E_p$, and some fixed
admissible probe attains $\mathsf E_p$. A fresh probe precedes each
new independent request; no quantum system is retained between calls.
For $0<p\leq q\leq1$, any terminating policy whose output dominates
$\mathsf E_q^{\otimes N}$ uses at least $Nq/p$ calls in expectation
under each hypothesis. A stopped policy attains equality and produces
an equivalent complete experiment.
\end{theorem}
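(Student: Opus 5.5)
The plan is to prove the lower bound with a Kullback--Leibler (KL) divergence budget and Wald's identity, and to prove attainment with an explicit stopping construction.

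\textbf{Setup.} Write $\kappa_{01}=\operatorname{KL}(\mathrm{Bern}(e_0)\|\mathrm{Bern}(e_1))$ and $\kappa_{10}$ for the reverse divergence. Both are finite and positive because $0<e_0<e_1<1$. The erasure flag of $\mathsf E_p$ has the same law under both hypotheses. The chain rule therefore gives $\operatorname{KL}(P_0^{\mathsf E_p}\|P_1^{\mathsf E_p})=p\kappa_{01}$, and symmetrically $p\kappa_{10}$. By tensorization, $\mathsf E_q^{\otimes N}$ has divergences $Nq\kappa_{01}$ and $Nq\kappa_{10}$.

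\textbf{Lower bound.} Fix a terminating policy with stopping time $\tau$ and full classical transcript $\mathcal H_\tau$.
\begin{enumerate}
\item Each call is a garbling of $\mathsf E_p$ conditional on the preceding history. Data processing, applied conditionally, bounds the expected conditional one-step divergence by $p\kappa_{01}$.
\item The chain rule for the adaptively stopped transcript then gives $\operatorname{KL}(P_0^{\mathcal H_\tau}\|P_1^{\mathcal H_\tau})\le p\kappa_{01}\,\mathbb E_0[\tau]$. This is Wald's identity applied to the log-likelihood-ratio martingale increments.
\item If the policy's output dominates $\mathsf E_q^{\otimes N}$, then $\mathsf E_q^{\otimes N}$ is a garbling of the transcript. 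Data processing again gives $Nq\kappa_{01}\le p\kappa_{01}\mathbb E_0[\tau]$, hence $\mathbb E_0[\tau]\ge Nq/p$.
\item Exchanging the roles of the hypotheses gives $\mathbb E_1[\tau]\ge Nq/p$ with $\kappa_{10}$.
\end{enumerate}

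\textbf{Main obstacle.} The delicate step is justifying the chain rule and Wald's identity for an adaptive, possibly unbounded stopping time. I would first assume $\mathbb E_h[\tau]<\infty$, since otherwise the claim is trivial. I would then apply the bound to the truncated time $\tau\wedge n$, where the chain rule is a finite sum. Finally I would pass $n\to\infty$: lower semicontinuity of KL handles the transcript divergence, and monotone convergence handles $\mathbb E_h[\tau\wedge n]$. Termination with probability one under both hypotheses is what makes the limit transcript well defined. One must also check that domination by $\mathsf E_p$ conditional on history controls the conditional divergence pointwise. This follows because any conditional garbling kernel may depend on the history, and data processing holds for each fixed history.

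\textbf{Attainment.} Use the fixed probe that attains $\mathsf E_p$. To produce one copy of $\mathsf E_q$, first draw an independent coin with success probability $q/p$ conditioned below; more precisely, proceed as follows.
\begin{enumerate}
\item With probability $1-q$, output $\bot$ without making any call.
\item Otherwise, call the probe repeatedly until the first non-erased output, and report that bit.
\end{enumerate}
The erasure indicators are independent of $h$, so the reported bit is $\mathrm{Bernoulli}(e_h)$ under both hypotheses. The number of calls in the second branch is geometric with mean $1/p$, so each copy costs $q/p$ calls in expectation. Running $N$ independent copies yields exactly $\mathsf E_q^{\otimes N}$ with $\mathbb E_h[\tau]=Nq/p$ under each hypothesis.

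The output is a deterministic function of the transcript, so it is dominated by the transcript. Conversely, the discarded erasures are ancillary: their law is hypothesis-free given the output. The output is therefore a sufficient statistic, and the stopped experiment is Blackwell equivalent to $\mathsf E_q^{\otimes N}$, not merely dominating it.
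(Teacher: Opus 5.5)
Your proposal is correct and follows the paper's proof. The lower bound is the same argument: a history-conditional KL budget of $p\,d_h$ per call, the chain rule at a truncated horizon, passage to the limit, and data processing against $Nq\,d_h$.

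The only difference is the attaining construction. The paper always makes a first call and, only after an erasure, restarts until an informative bit with probability $(q-p)/(1-p)$. You instead flip a $q$-coin before any call. Both cost $q/p$ calls in expectation and use the same ancillarity argument. The paper's version, however, reduces to one deterministic call at $q=p$ and admits the bounded two-call variant used for the $(.2,.6,.2)$ example. Yours has unbounded call counts for every $p<1$.

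Also delete your stray phrase about ``a coin with success probability $q/p$'', since $q/p$ can exceed one.
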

\begin{proof}
Write $d_h=D_{\rm KL}(\operatorname{Bern}(e_h)\Vert
\operatorname{Bern}(e_{1-h}))$. Conditional one-call domination
bounds the KL increment by $pd_h$. The chain rule for a stopped
transcript, first at a truncated horizon and then by monotone
convergence, gives
\[
 D_{\rm KL}(P_h^{\rm transcript}\Vert P_{1-h}^{\rm transcript})
 \leq pd_h\,\mathbb E_h\tau.
\]
The action and stopping kernels are functions of the observed history,
so they add no likelihood ratio. By data processing, the target
experiment requires the left side to be at least $Nqd_h$. Dividing
proves the lower bound. Infinite expected cost satisfies it trivially.

For one target output, make a fresh call. Return an informative bit
if obtained. Following an erasure, with probability
$(q-p)/(1-p)$ restart until the first informative bit; otherwise
return an erasure. The informative probability is $q$ and expected
cost is $1+(q-p)/p=q/p$. Conditional on the output, the discarded
erasures and stopping details are independent of health. The full
transcript is therefore equivalent to $\mathsf E_q$. Independent
repetition proves the claim for $N$ outputs. The case $p=1$ is trivial.
\end{proof}

For the menu $(ZZI,XXX,IZZ)$ with probabilities $(.2,.6,.2)$,
the separable and Bell coverages are $.6$ and $.8$.
Here a bounded construction suffices: after an erased first product
call, make one fresh second attempt with probability $5/6$ and then
stop. Its coverage is $.6+.4(5/6)(.6)=.8$, and its expected cost
is $4/3$. For $N$ target outputs the cost is
$N+\operatorname{Binomial}(N,1/3)$, with maximum $2N$.
The expectation is $4N/3$ calls; the policy may exceed that number
on a particular run.

At a fixed fresh horizon $M$, the sufficient statistic is the number
$K$ of informative calls and the number $J$ of flips:
\[
 P_h(K=k,J=j)={M\choose k}p^k(1-p)^{M-k}
                    {k\choose j}e_h^j(1-e_h)^{k-j}.
\]
Sorting these atoms by likelihood ratio and randomizing at the
boundary gives exact Neyman--Pearson power. At $(e_0,e_1)=(.02,.164)$,
80\% power at 5\% false alarms requires 30 fresh product calls or
22 Bell calls. This finite-horizon ratio is $30/22$.
Equal expected informative counts alone do not imply equivalence.
The ratio $q/p$ also compares fixed-false-alarm Stein exponents,
since $D_{\rm KL}(\mathsf E_p^0\Vert\mathsf E_p^1)=pd_0$.
Equal-prior Bayes error instead has Chernoff exponent
\[
 -\log\!\left[1-p+p\min_{0\leq s\leq1}
 \{e_0^{1-s}e_1^s+(1-e_0)^{1-s}(1-e_1)^s\}\right],
\]
whose Bell/product ratio is $1.33860$ at these parameters.

\subsection{Chosen extra queries and retained probes}
Prepare $|+++\rangle$ and serve the first random request.
If it is $XXX$, its ideal sign is known. After either unsupported
$Z$ request, ask for $XXX$ with probability $u$.
All three observables commute, so every branch of the first ideal
parity measurement preserves the $XXX$ sign. The unsupported first
report is fair under either health hypothesis. The complete experiment
is therefore $\mathsf E_{.6+.4u}$, at $1+.4u$ expected calls per
original probe. In particular, $u=1/2$ reproduces the Bell experiment
at 1.2 expected calls. More generally, a supported commuting repair
query gives coverage $q$ at cost $1+q-p$.
This is an attainable policy, not an optimum over all adaptive ones.
Chosen-query control matters as well as retention: a newly prepared
product eigenstate queried through a chosen supported observable also
achieves this call count, but consumes an additional preparation.

Keeping the probe also makes a different signal available. Repeat an
unsupported nondemolition $Z$ query. If its unknown ideal parity is
$S$ and the report flips are $F_1,F_2$, then
\[
 Y_1=S\oplus F_1,\quad Y_2=S\oplus F_2,\quad
 Y_1\oplus Y_2\sim\operatorname{Bern}(\theta_h),\qquad
 \theta_h=2e_h(1-e_h).
\]
Each report alone is fair, but their disagreement diagnoses detector
noise. The stored sign remains unknown. This is why the one-call
identity between prediction and diagnostic regret need not hold when
a policy retains the probe.

Keep the first report on $XXX$ and repeat the first query otherwise.
For $N$ original probes, let $K$ count unsupported requests, $J$ count
flips on the supported first requests, and $V$ count disagreements.
The exact likelihood is
\[
 P_h(k,j,v)={N\choose k}.4^k.6^{N-k}
 {N-k\choose j}e_h^j(1-e_h)^{N-k-j}
 {k\choose v}\theta_h^v(1-\theta_h)^{k-v}.
\]
At the same known-channel 5\% false-alarm level, 14 probes give
80.2647\% power; 13 give 78.8579\%. The 14-probe policy uses
19.6 calls on average and at most 28. It exceeds 22 calls with
probability 5.8319\%, so it is not a hard-budget counterexample to
a 22-call comparison. Independent $8\times8$ instrument calculations
verify the transcript factorizations to $1.2\times10^{-16}$.

\subsection{Preparation, calls, and calibration}
Let $S_a$ be policy $a$'s startup cost, including workload collection,
response fitting, threshold calibration and setup. Let $V_a(h)$ be
its expected operating cost per diagnostic episode under health $h$,
including every preparation, detector call, report, reset, control
operation and retained-state holding cost. To combine these costs, assign a price to each operation and count it
once. Over $L$ episodes,
entanglement is cheaper precisely when
\begin{equation}
 S_{\rm ent}-S_{\rm sep}<L\{V_{\rm sep}(h)-V_{\rm ent}(h)\}.
 \label{eq:cost97}
\end{equation}
Both policies must meet the same performance requirement and use
the same expected or hard resource convention. With a uniform call
price and preparation-CZ price, equal other costs reduce this to
$\kappa_{\rm call}(N_{\rm sep}-N_{\rm ent})>
\kappa_{\rm CZ}(G_{\rm ent}-G_{\rm sep})$, where $G$ counts
\emph{all} preparation CZs in the episode.

\begin{table}[ht]\centering\small
\caption{Known-channel policies reaching 80\% power at 5\% false alarms.
Calls and gate counts are expectations; retained policies have larger
hard maxima. CNOT counts are logical and exclude calibration.}
\label{tab:interface97}
\begin{tabular}{lrrr}\toprule
Policy & Preparations & Calls & Total CNOTs\\\midrule
Fresh product &30&30&126.0\\
Bell &22&22&114.4\\
Fresh stopped Bell simulation &$29\tfrac13$&$29\tfrac13$&123.2\\
Product, half repair &22&26.4&114.4\\
GHZ &18&18&111.6\\
Product, full repair &18&25.2&111.6\\
Product, repeat unsupported query &14&19.6&75.6\\\bottomrule
\end{tabular}
\end{table}
The repeated-query policy can therefore be cheaper than GHZ. If each call
has overhead $x$ in units of one logical CNOT, its operating cost
is $19.6x+75.6$, versus $18x+111.6$ for GHZ.
GHZ is cheaper exactly when $x>22.5$, before any extra startup cost.
These are two attaining policies, not an optimized adaptive frontier.
The physical GHZ and Adaptive-$Z$ studies in
Appendix~\ref{app:service94} use different threshold calibrations;
their observed powers cannot be inserted into this common-operating-point
comparison without a new experiment.

\section{Transaction locality as an allocation problem}
\label{app:classical97}
An allocation problem also arises when a database must decide which
records to store together before seeing the next transaction.
For record partition $P$ and transaction footprint $F_t$, define
$\ell_t(P)=\mathbf1\{F_t\text{ lies in one shard}\}$.
The objective is $L_\pi(P)=\sum_t\pi_t\ell_t(P)$, under a
record-capacity constraint and without replication. The learner
maximizes empirical locality using a log of transaction identities.
The cost $1-L_\pi(P)$ is the distributed-transaction fraction.
It does not by itself measure throughput or latency.

Schism formulates the transaction hypergraph objective in its
Appendix B, then uses a tuple coaccess graph under balance constraints
\citep{curino2010schism}. Our graph vertices are
queries, and deleting a vertex sacrifices a query requirement.
These graphs encode different decisions. A single transaction touching
two records has locality zero at shard capacity one, whereas deleting
one endpoint of its tuple graph retains positive vertex weight.
We therefore need an explicit construction to transfer the learning law.

\begin{corollary}[Classical transaction instances]
\label{cor:classical97}
For adjacent-pair transactions on $q+1$ ordered records, shard
capacity $k+1$ gives exactly the feasible locality masks of
component-order deletion on $P_q$ at cap $k$.
For $r$ disjoint $s\times s$ tables of unit records, take every row
and column as a transaction. If
\[
 C=(s-1)k+1,\qquad
 s\geq\max\{2,k,\lfloor(k-1)^2/4\rfloor+2\},
\]
then allowing $rs$ shards of capacity $C$ realizes every maximal
allocation of $r$ disjoint $K_{s,s}$ regions at cap $k$.
The minimax expected locality regret is
$\Theta(\min\{1,\sqrt{rk/m}\})$ after $m$ transaction observations.
The path instance inherits Eq.~\eqref{eq:path-learning90}, with
$d=q$ and without the quantum scale $D$.
\end{corollary}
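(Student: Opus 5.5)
Both instances reduce to the allocation-learning results already proved. In each, I show that the family of \emph{undominated} locality masks coincides with the family of inclusion-maximal feasible query sets of the corresponding quantum allocation problem. The workload objective is linear and nonnegative in both, so the empirical and population optima live on maximal sets. Appendix~\ref{app:completion90} lets us restrict to maximal sets without loss. After that, the upper bounds (VC/finite-class) and the lower bounds (Assouad/Fano on projected uniform-matroid polytopes) transfer verbatim, with $D/2$ replaced by $1$ since locality is a $[0,1]$ score rather than a risk.

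\textbf{Path instance.} Transaction $t\in\{1,\dots,q\}$ is the pair of records $\{t,t+1\}$. Shards are intervals without loss, since splitting a shard at a non-contiguous point only separates pairs that were already nonlocal. A shard of $j+1$ consecutive records makes exactly $j$ consecutive transactions local. So a mask is feasible iff its runs of local transactions, i.e.\ the components of the path $P_q$ after deleting the nonlocal vertices, have size at most $k$. This is exactly component-order deletion at cap $k$. Every such mask is realized by cutting the record line at each deleted transaction and subdividing over-long pieces. The proof of Theorem~\ref{thm:growing-depth90} uses only this feasible family, the deletion-loss objective, and the sliding-minimum optimizer, so it applies with $d=q$ and without the factor $D$.

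\textbf{Table instance.} For one $s\times s$ table, identify rows with color $A$ and columns with color $B$ of $K_{s,s}$. A row and a column always intersect, so their two transactions cannot both be local unless they share the cell's shard. Hence the local rows and columns in one shard form a connected mixed set, and local sets in different shards are monochromatic across shards. I will check three things in order.
\begin{enumerate}
\item \emph{All rows (or all columns) local.} This uses $s$ shards of size $s\le C$, which needs $s\le (s-1)k+1$; that holds for $k\ge1$, $s\ge2$.
\item \emph{Any mixed set of $a$ rows and $b$ columns, $a+b=k$, $a,b\ge1$, local.} Their union has $as+bs-ab$ cells and must fit one shard. The worst case over $a+b=k$ is $ab$ minimal, i.e.\ $a=1$, giving $ks-(k-1)=C$. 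So capacity $C$ is exactly tight, and the remaining cells go to other shards. With at most $rs$ shards in total, the count works because leftover cells, $s^2-C$ per table, pack into $s-1$ shards of size at most $C$ when $s\ge k$.
\item \emph{No mixed set of size $k+1$ is local.} Its union has at least $(k+1)s-\lfloor (k+1)^2/4\rfloor$ cells. This exceeds $C$ iff $s>\lfloor(k+1)^2/4\rfloor-k+1=\lfloor(k-1)^2/4\rfloor+1$, which is precisely the stated hypothesis.
\end{enumerate}
Together these show that the maximal masks per table are $\mathcal M_k$ of Appendix~\ref{app:geometry86}. Tables are disjoint, so the class is the product over $r$ regions with $p=0$. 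The bound of Eq.~\eqref{eq:modular86} then gives $\Theta(\min\{1,\sqrt{rk/m}\})$. For the lower bound I also need the projection onto $k$ active rows and $k$ active columns per table to equal $\{c:\sum c\le k\}$. The upper inequality follows from item 3, and every $\le k$ subset is realized by items 1–2 or by a subset of a full color.

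The main obstacle is the packing bookkeeping in item 2. The leftover cells must fit alongside the large shard within $rs$ shards of capacity $C$, for every maximal mask simultaneously across tables. I would first handle it by an explicit filling: place the remaining cells of the uncovered rows row-by-row into $s-1$ further shards. The hypothesis $s\ge k$ should guarantee that the count $s^2-C\le (s-1)C$ and the per-shard sizes both hold.
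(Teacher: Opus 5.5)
Your proposal is correct and essentially follows the paper's proof: run-chaining (your interval-shard reduction) for the path, then the three biclique checks. Those checks are that mixed $k$-sets have at most $sk-(k-1)=C$ cells, mixed $(k+1)$-sets have at least $(k+1)s-\lfloor(k+1)^2/4\rfloor>C$ cells under the stated bound on $s$, and monochromatic selections use $s$ separate shards; after that, the path and disjoint-region laws transfer, with locality regret equal to twice prediction regret at $D=1$. The packing step you flag is easier than you expect and does not need $s\ge k$: every mixed union has at least $s$ cells and $C\ge s$, so at most $s(s-1)\le(s-1)C$ cells are left over for every split, not only the $a=1$ split behind your count $s^2-C$; your row-by-row filling into $s-a\le s-1$ shards of size $s-b$ also works. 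The hypothesis $s\ge k$ is instead what supplies the $k$ nonconnection vertices per color required by Eq.~\eqref{eq:modular86}.
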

\begin{proof}
A run of $a$ adjacent-pair transactions forces its $a+1$ records
into one shard. Distinct selected runs have disjoint record supports.
Thus the path masks are exactly the runs of at most $k$ transactions.

In a table, any selected row and column share a record. A mixed
selection of $a$ rows and $b$ columns must occupy one shard and
contains $s(a+b)-ab$ records. If $a+b\leq k$, this is at most
$(s-1)k+1$. If $a+b\geq k+1$, a mixed subcollection of size $k+1$
has at least $s(k+1)-\lfloor(k+1)^2/4\rfloor>C$ records.
Monochromatic selections use separate row or column shards.
These are precisely the biclique constraints.

Every maximal monochromatic selection uses $s$ shards. A maximal
mixed selection uses one shard for its union; the remaining records
fit in at most $s-1$ further shards because $C\geq s$ and the union
contains at least $s$ records. Summing over regions proves feasibility
with $rs$ shards. Cross-region grouping cannot evade the within-region
union bound. The optimum and empirical optimum therefore equal their
component-order counterparts. The disjoint-region upper and lower
bounds apply; locality regret is twice quantum prediction regret
at $D=1$.
\end{proof}
The strict capacity condition matters: $s=k=5$, $C=21$ also fits
three rows and three columns, contradicting a cap-five interpretation.
Increasing $s$ grows transaction width, shard capacity and the server
budget. The sample bound holds under this increase in storage resources.
Independent enumeration checks 57,994 small tuple partitions,
28 path cases and 104,129 capacity comparisons.

\subsection{Structured experiments and a public negative control}
We use $r\in\{1,4\}$, $s\in\{4,8\}$, $k\in\{1,2,3\}$ and
$m/(rk)\in\{4,16,64,256\}$. Each setting has 32 independent
instances. A fixed full-support Dirichlet family and a separate
budget-indexed paired-sign family test ordinary learning and
worst-case scaling, respectively. Exact empirical maximization
compares the row total, column total and largest $k$ weights in
each region. Every chosen mask is converted into an actual record
placement, and locality is recomputed from its transaction footprints.
All 15,360 such checks agree with the selected masks. Static rows,
static columns, empirical color choice and a known-workload optimum
use the same capacity and server envelope.

BenchBase continues the OLTP-Bench framework \citep{difallah2013oltpbench}.
Its SmallBank generator uses uniform distinct account pairs; the pinned
PostgreSQL sample assigns 40\% of requests to two-account procedures
\citep{benchbase2026}. We replay those footprints with 12 customer
bundles and four shards of capacity three bundles, charging three
logical records per bundle. Enumerating all 15,400 balanced packings
gives locality $.6+.4(2/11)=.672727$ for every packing.
Every packing has the same locality, so learning cannot improve it.
We replay the published generator's transaction footprints; we do not
execute BenchBase or measure throughput.

\subsection{Observed purchase baskets}
The public Online Retail data contain 541,909 line observations
\citep{chen2015retail}. We retain positive-quantity, positive-price
lines outside cancelled invoices, group distinct stock codes by
invoice, and sort by timestamp. This gives 19,960 purchase baskets.
We define an inventory workload that accesses the items in a basket;
the data are observed purchases, not a captured SQL trace.

The first 3,992 baskets choose the most frequent 12 or 24 stock codes.
The next 40\% supply training requests, the next 20\% train a
validation comparator, and the last 20\% form a common held-out
trace. We project each basket onto the selected codes and score
nonempty projections. The test denominators are 1,548 and 2,069
of the 3,992 final invoices. Oversized projections remain failures:
136 and 341 test baskets exceed the three-record shard capacity.
Projection makes the optimization tractable but changes the task.
It retains only 2.45\% and 4.31\% of test item incidences; just 10
and 23 original invoices lie wholly inside the selected catalogues.
The reported locality therefore applies to these subcatalogues.

Each method uses exactly $n/3$ shards with three records each.
A binary variable for every three-item block indicates whether that
block is a shard; each item must appear once. Its weight counts
training baskets contained in it. This set-partition MILP is exactly
the empirical locality objective, including singleton baskets and
the unavoidable failures larger than three items. All 128 fits
(16 request streams, four budgets, two schemas) and both validation
fits terminate with zero optimality gap. We use nested samples of
$m\in\{8,32,128,512\}$ requests from the training pool.

At 512 requests, mean test locality is 67.25\% and 52.21\% for the
12- and 24-item schemas, versus 63.57\% and 48.19\% for fixed
frequency-ranked grouping. Fixed hash placement gives 62.08\% and 47.70\%.
The validation-trained comparisons give 68.02\% and 53.02\% after
1,725 and 2,232 additional observed baskets. They are not population
or test-selected oracles. Intervals resample the 16 training streams,
conditional on the schema and the held-out trace. We do not assign
an $rk$ dimension to these general basket footprints.

\subsection{Complete baskets on the full catalogue}
We next retain all 3,922 stock-code identifiers and score every one
of the 3,992 held-out invoices in full. We supply the catalogue as metadata, extracted retrospectively from
the complete dataset. The learner knows the item identifiers but
receives no future frequencies or co-occurrences. The experiment
therefore tests allocation within a known catalogue.
The chronological pilot/training/validation/test split is unchanged.
Shard capacities are 128, 256 and 512 records. Every fitted and fixed
method actually uses 31, 16 and eight shards, respectively, the
minimum counts at those capacities. No method replicates records.

General basket partitioning is combinatorial. Our full-catalogue
implementation searches five training-only candidates: four greedy
transaction-merge orders and a pair-coaccess merge. Each merge obeys
the record cap; first-fit decreasing packs the remaining components.
The chosen candidate maximizes whole-basket locality on its training
prefix. This is a heuristic, not an exact empirical maximizer.
We evaluate $m\in\{8,32,128,512,2048\}$, with 16 nested request
streams, yielding 240 fits. All footprint scores, capacities, shard
counts, selected empirical objectives and training indices pass
independent replay against the original public archive.

\begin{figure}[ht]\centering
\includegraphics[width=\linewidth]{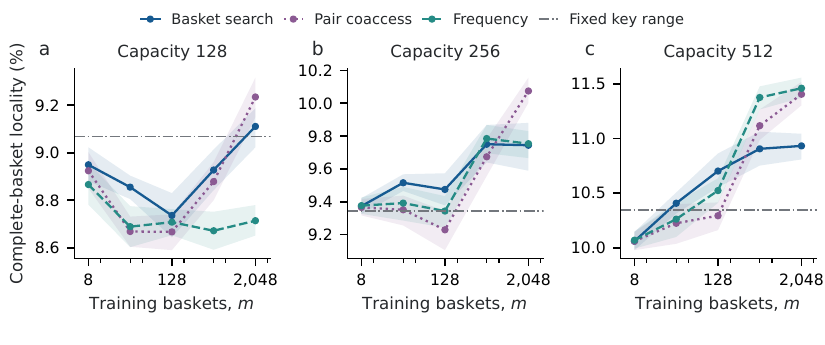}
\caption{Complete-invoice locality on the full Online Retail catalogue.
All three learned methods use the same training requests and actual
shard count; fixed key range uses none. Pointwise 95\% bootstrap
bands resample 16 request streams, conditional on the fixed catalogue
and held-out trace. Axes retain each capacity's local scale.}
\label{fig:classical-full97}
\end{figure}
The frequency baseline ranks items by counts from the identical
training prefix and groups consecutive items. We added this control after the primary study and retain its
implementation and every output. At capacity 512, it beats basket search
in 14/16 streams at $m=512$ and 15/16 at $m=2048$, by 0.470 and
0.528 points on average. At capacity 128, basket search instead has
higher mean locality at every budget. The pair-coaccess candidate
also exceeds the selected candidate on later data in some settings:
selecting by training locality need not select the best future partition.

Independently pilot-ranked and validation-trained controls each consume
3,992 additional baskets and are charged separately. At capacity 512,
their localities are 11.85\% and 11.65\%, compared with 10.35\%
for fixed key range and 7.57\% for fixed balanced hash. The full study shows modest locality gains from observing requests,
with the better method depending on capacity. It measures neither
throughput nor latency.

\begin{figure}[ht]\centering
\includegraphics[width=\linewidth]{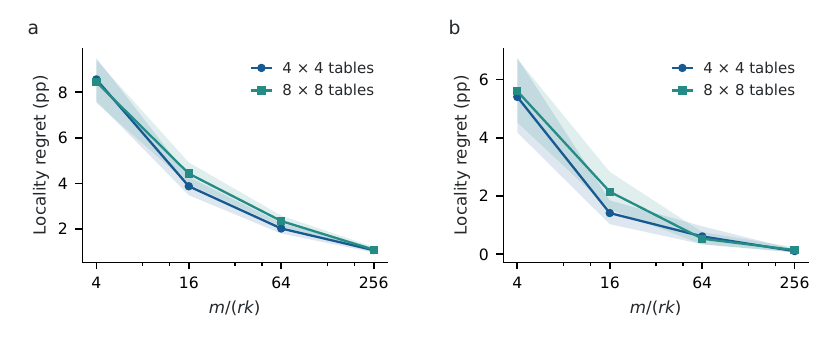}
\caption{Structured classical transfer. (a) Budget-indexed paired-sign
family; changing $m$ changes the population. (b) Fixed full-support
Dirichlet workloads. Curves average balanced cells in $r\in\{1,4\}$
and $k\in\{1,2,3\}$, with 32 instances per cell. The horizontal scale
is the proved $m/(rk)$; it is not applied to the public basket schema.}
\label{fig:classical-structured97}
\end{figure}

\section{Native 15-qubit path: execution and pending learning study}
\label{app:hardware97}
The data path is $(33,39,53,52,51,58,71,70,69,78,89,88,87,86,85)$
on \texttt{ibm\_marrakesh}. We selected the path using backend calibration alone, before observing
its experimental responses. Every one of the 15
queries is retained. Endpoint parity extraction uses three logical
CNOTs. Internal queries use five CNOTs to fold parity toward an
available ancilla and undo the borrowed-data transformation. Seven
centers have physical degree two. Ideal Clifford identities verify
the entire unitary, including restoration of arbitrary borrowed data;
a separate binary-linear calculation verifies the parity map.

The request is fixed in each circuit. Preparation is compiled
separately and has an identical prefix across requests for a given
codeword. The compiled program therefore fixes preparation independently of
the request, although the request is not generated late on the device.
Both full and shallow graph states use even/odd CZ layers. We use
native gates, active initialization, and no twirling, dynamical
decoupling or error mitigation. Default-delay commissioning uses
250 microseconds between shots; the failed learning pilots used
25 microseconds. Those conditions are separate.

Job \texttt{dagas0hhvn6c73cqbhbg} returned 7,680 shots and charged
five QPU seconds. Each of four fixed encodings has 64 shots per bit
and query. We fixed the decoder before acquisition: it predicts the encoded bit
directly from the reported parity.
Full chain has 107 errors in 1,920 trials, fixed depth three 274,
product even 481 and product odd 584. For the three simultaneous
contrasts against Full chain, independent-shot Hoeffding bounds give
depth-three minus full error $8.70$ points, interval $[3.70,13.69]$;
even minus full $19.48$, $[14.49,24.47]$; and odd minus full
$24.84$, $[19.85,29.84]$. These bounds assume independent shots within this fixed job.
Request-learning variability and changes between device calibrations
require separate acquisitions.

\subsection{Frozen learning design and execution status}
The request-only learner receives counts from $m\in\{8,32,128,512\}$
requests and solves the depth-three ideal allocation problem. Four
fixed full-support workloads are uniform, alternating, centrally
concentrated and bimodal. There are 16 independent nested request
streams per workload. The product comparator solves the depth-one
problem on the same counts; Full chain uses every edge. A
known-workload ideal depth-three comparator is privileged and is not
the best unknown physical encoding. A separately calibrated learner
at $m=128$ varies $B\in\{4,16,64\}$ response trials per local context.
It has additional observations and is never called request-only.

The planned 292,608-shot budget comprises 7,680 commissioning shots,
153,600 core comparisons, 46,080 calibrated tests, 57,344 calibration
shots, 15,360 finite-catalog selection shots, 11,520 coherence controls
and 1,024 input-bit controls. Core inference uses 16 separate
request/acquisition blocks; the calibration arm has eight independent
panels. Shared panels and response records must remain shared in
resampling. The complete allocation list, workload probabilities,
request prefixes, compiler and analysis plan are frozen in the artifact.

Two exploratory shared-response libraries were submitted before the
confirmatory acquisition. Job \texttt{dagb0sj9k43c73ad42j0}
planned 32,520 shots in 15,900 circuit publications; it timed out.
Job \texttt{dagb5m8mhr3c73e4taag} retained every learned mask,
used a stronger known-workload product comparator, and reduced the
submission to 2,340 publications and 18,720 planned shots. It also
timed out. Neither job returned usable observations, so neither contributes data
to the learning curves. The supplement preserves the frozen circuits,
receipts, error messages and usage records. The free-plan limit
prevents completing the confirmatory learning and calibration curves
in this revision.

The proposed success criterion is a positive paired advantage over
both extremes on every workload at $m\geq128$. We will evaluate this
hypothesis on every workload and retain failures.
A fixed-workload slope is descriptive: the minimax theorem does not
predict $-1/2$ on every population. Any budget-indexed hard-family
test must be labeled separately from learning a fixed workload.

\subsection{A limited coherence check exposes a sign reversal}
Job \texttt{dagb9efi3e6s738m5tf0} returned 1,920 shots in 30 circuits,
charging three QPU seconds. Every query has a coherent-extraction and
local-refinement arm, each with 64 shots. This small diagnostic fixes
the encoded bit and witness phase to zero; it does not cover both
parity sectors or certify the all-input detector contract.

After extracting query $K_i$, we measure the neighboring commuting
generator $K_j$ ($j=i+1$, except at the last site). Ideal extraction
preserves its signed expectation at one, whereas local refinement
sets it to zero. There is no postselection. Direct parity parsing of
the saved six-bit strings agrees with independent reversed-string
parsing, and all QPY metadata match the returned circuit metadata.
Exact Heisenberg propagation of the submitted native circuits verifies
the complete 64-outcome ideal distributions. A separate ideal
stabilizer simulation gives the same witness predictions. These
checks concern the circuit specification, not the physical observations.

\begin{figure}[ht]\centering
\includegraphics[width=\linewidth]{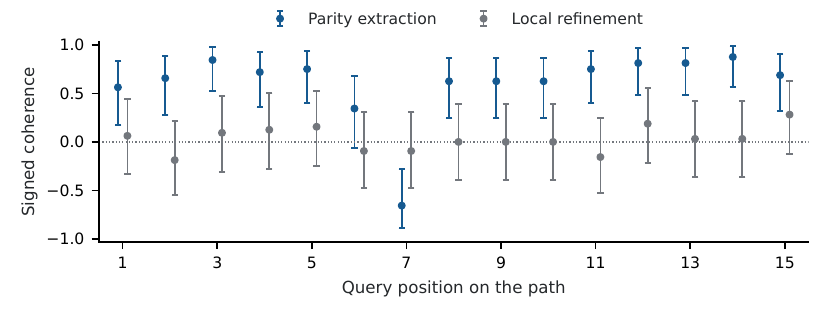}
\caption{All 15 native-path coherence diagnostics. Bars are 95\%
simultaneous Clopper--Pearson intervals over 30 cells, conditional on
independent shots. The query numbered seven reverses the expected
sign; query six is inconclusive. No sign is selected from these data.}
\label{fig:coherence97}
\end{figure}
Mean signed coherence is 0.602 for extraction and 0.029 for refinement.
Thirteen extraction cells have positive simultaneous lower bounds.
Query six gives 0.344, interval $[-0.062,0.675]$; query seven gives
$-0.656$, $[-0.887,-0.282]$. The latter measures the witness on
physical data qubits $(71,70,69)$. Its ideal value is positive, so
the reversal remains an unexplained acquisition/device result.
It rules out claiming that this check validates preservation across
the whole path. The next hardware stage must diagnose that cell and
repeat both sectors and phases before attributing allocation performance
to a coherence-preserving physical service.

\subsection{The reversal repeats on both bits and witness phases}
We used the remaining free allocation for a focused follow-up,
selected after the preceding result. Job
\texttt{dagbgdni3e6s738m65ig} returned 3,072 shots in 12 circuits,
charging three QPU seconds. The selected query is tested at both
encoded bits and both witness phases, with extraction, local refinement
and a preparation-only witness reference. Each cell has 256 shots.
The reference removes the extraction/measurement interval; it is not
duration matched. The protocol, signs, circuits and all cells were
frozen before submission. Ideal simulation of the exact submitted
circuits again predicts signed coherence one for extraction and reference,
and zero for refinement.

\begin{table}[ht]\centering\small
\caption{Signed witness in the focused second job. Each value uses
256 shots. The query and follow-up were chosen after the first job;
these are exploratory replication data, not a new confirmatory endpoint.}
\label{tab:coherence-followup97}
\begin{tabular}{ccrrr}\toprule
Encoded bit & Phase & Extraction & Refinement & Preparation only\\\midrule
0&0&$-0.570$&$-0.016$&$0.914$\\
0&1&$-0.680$&$0.023$&$0.883$\\
1&0&$-0.664$&$-0.055$&$0.906$\\
1&1&$-0.688$&$0.117$&$0.875$\\\bottomrule
\end{tabular}
\end{table}
All four extraction intervals are negative after Bonferroni correction
over the 12 cells; their largest upper endpoint is $-0.407$.
All preparation-only intervals are positive, with smallest lower
endpoint $0.763$; all refinement intervals contain zero.
The sign reversal therefore repeats in a separate job over this input
family. The comparison implicates the added extraction/measurement
interval, including its timing, rather than preparation alone; it
does not identify a specific faulty gate or distinguish measurement
backaction from coherent evolution during the added interval.
Both jobs use the same device and day. A duration-matched control
and a separately frozen repair test remain necessary.

\end{document}